\documentclass{article} 
\usepackage{iclr2025_conference,times}


\usepackage{amsmath,amsfonts,bm}









\def\eqref#1{equation~\ref{#1}}









\def\1{\bm{1}}










\DeclareMathAlphabet{\mathsfit}{\encodingdefault}{\sfdefault}{m}{sl}
\SetMathAlphabet{\mathsfit}{bold}{\encodingdefault}{\sfdefault}{bx}{n}


\def\gO{{\mathcal{O}}}












\usepackage{hyperref}
\usepackage{url}

\newcommand{\mbx}{\mathbf{x}}
\newcommand{\mby}{\mathbf{y}}

\usepackage{hyperref}
\usepackage{url}
\usepackage{amsmath}
\usepackage{amssymb}
\usepackage{float}
\usepackage[linesnumbered,ruled,vlined]{algorithm2e}
\SetKwComment{Comment}{$\triangleright$\ }{}
\usepackage{multirow, makecell}
\usepackage{amsthm}
\newtheorem{theorem}{Theorem}[section]

\newtheorem{lemma}[theorem]{Lemma}

\newtheorem{remark}[theorem]{Remark}

\newtheorem{definition}[theorem]{Definition}
 \usepackage{caption}

\usepackage[utf8]{inputenc} 
\usepackage[T1]{fontenc}    
\usepackage{hyperref}       
\usepackage{url}            
\usepackage{booktabs}       
\usepackage{amsfonts}       
\usepackage{nicefrac}       
\usepackage{microtype}      
\usepackage{xcolor}         
\usepackage{graphicx}
\usepackage[ruled,linesnumbered]{algorithm2e}
\usepackage{appendix}
\usepackage{amsthm,amssymb,amsmath,amsfonts}
\usepackage{comment}
\usepackage{enumitem}
\usepackage{tikz}
\usepackage{tablefootnote}
\usetikzlibrary{arrows.meta}
\usetikzlibrary{decorations.pathreplacing}

\renewcommand{\arraystretch}{2.2}

\usepackage{graphicx}

\makeatletter
\DeclareRobustCommand{\cev}[1]{%
  {\mathpalette\do@cev{#1}}%
}
\newcommand{\do@cev}[2]{%
  \vbox{\offinterlineskip
    \sbox\z@{$\m@th#1 x$}%
    \ialign{##\cr
      \hidewidth\reflectbox{$\m@th#1\vec{}\mkern4mu$}\hidewidth\cr
      \noalign{\kern-\ht\z@}
      $\m@th#1#2$\cr
    }%
  }%
}

\newcommand\numberthis{\addtocounter{equation}{1}\tag{\theequation}}

\title{The adaptive complexity of parallelized log-concave sampling}



\author{Huanjian Zhou$^{1,2}$ Baoxiang Wang $^{3,4}$,  Masashi Sugiyama$^{2,1}$\thanks{Corresponding to: Huanjian Zhou, Baoxiang Wang, Masashi Sugiyama}\\[-0.5em]
$^1$The University of Tokyo $^2$RIKEN AIP\\[-0.9em]
$^3$The Chinese University of Hong Kong Shenzhen $^4$Vector Institute
}

%

\iclrfinalcopy 
\begin{document}

\maketitle
\begin{abstract}
In large-data applications, such as the inference process of diffusion models, it is desirable to design sampling algorithms with a high degree of parallelization. In this work, we study the adaptive complexity of sampling, which is the {minimum} number of sequential rounds required to achieve sampling given polynomially many queries executed in parallel at each round. For unconstrained sampling, we examine distributions that are log-smooth or log-Lipschitz and log strongly or non-strongly concave. 
We show that an almost linear iteration algorithm cannot return a sample with a specific exponentially small {error} under total variation distance.
For box-constrained sampling, we show that an almost linear iteration algorithm cannot return a sample with sup-polynomially small {error} under total variation distance for log-concave distributions.
Our proof relies upon novel analysis with the characterization of the output for the hardness potentials based on the chain-like structure with random partition and classical smoothing techniques. 
\end{abstract}

\section{Introduction}
We study the problem of sampling from a target distribution on $\mathbb{R}^d$ given query access to its unnormalized density, which is fundamental in many fields such as Bayesian inference, randomized algorithms, and machine learning~\citep{marin2007bayesian,nakajima2019variational,robert1999monte}.
Recently, significant progress has been made in developing sequential algorithms for this problem inspired by the extensive optimization toolkit, particularly when the target distribution is log-concave~\citep{chewi2020exponential,jordan1998variational,lee2021structured,wibisono2018sampling,ma2019there}.

The algorithms underlying the above results are highly sequential and fail to fully exploit contemporary parallel computing resources such as multi-core central processing units (CPUs) and many-core graphics processing units (GPUs).
While model-specific algorithmic subroutines such as log-likelihood and log-likelihood
gradient evaluations sometimes admit parallelization~\citep{holbrook2021massive,holbrook2021scalable}, the algorithms'
generally sequential nature can lead to under-utilization of increasingly widespread hardware~\citep{brockwell2006parallel}.
%

A convenient metric for parallelism in black-box oracle models is adaptivity, which was recently introduced in submodular optimization to quantify the information-theoretic complexity of black-box optimization in a parallel computation model~\citep{balkanski2018adaptive}. 
Informally, the adaptivity of an algorithm is the number of sequential rounds it makes when each round can execute polynomially many independent queries in parallel.
In the past several years, there have been breakthroughs in the study of adaptivity in optimization~\citep{balkanski2018adaptive,balkanski2018parallelization,balkanski2020lower,bubeck2019complexity,diakonikolas2019lower,chakrabarty2023polynomial,chakrabarty2024parallel,garg2021near,li2020polynomial}.



Although the adaptive complexity of optimization is well understood, we have a very limited understanding of the adaptive complexity of sampling.
%
Existing results are only on query complexity of the low-dimensional sampling~\cite{chewi2022query,chewi2023fisher,chewi2023query}.
This motivates us to study the adaptive complexity of log-concave samplers.
In this paper, we give the first lower bounds for the parallel runtime of sampling in high dimensions and high accuracy regimes\footnote{{Throughout, we use the standard terminology~\cite{chewi2023log} low accuracy to refer to complexity results which scale polynomially
in $1/\varepsilon$, and the term high accuracy for results which scale polylogarithmically in $1/\varepsilon$,  here, $\varepsilon>0$ is the desired target
accuracy.}}.
We study two types of log-concave samplers: unconstrained samplers and box-constrained samplers.


\paragraph{Lower bound for unconstrained samplers.}
We first present the lower bound for unconstrained samplers with very high accuracy.
Specifically, for sufficiently large dimensions, an almost linear iteration adaptive sampler fails to return a sample with a specific exponentially small accuracy for (i) strongly log-concave and log-smooth distributions (ii) log-concave and log-smooth or log-Lipschitz distributions, and (iii) composite distributions~(see Theorem \ref{theo:main1}, Theorem \ref{the:un1}, and Theorem \ref{the:un2} respectively).
This is the first lower bound to the best of our knowledge for both deterministic adaptive samplers and randomized ones for unconstrained distributions.
%

Take the strongly log-concave and log-smooth sampler as an example.  Let $d$ and $\varepsilon$ denote the dimension and accuracy, respectively. For any parallel samplers running in $\widetilde{\Theta}(d)$ iterations, we prove the accuracy is always $\widetilde{\omega}(c^d)$ {with some constant $c<1$}. Conversely, improving accuracy beyond our current bounds necessitates an increase in iterations. These correspond to the impossible region (the red rectangle) in Figure \ref{fig:1}.
As a comparison, 
Anari et al.'s algorithm returns a sample with $\varepsilon$ accuracy under total variation distance within $\gO(\log^2(d/\varepsilon))$ iterations and makes $\gO(d)$ queries in each iteration~\citep{anari2024fast}.
{The iteration complexity was later improved to $\gO(\log(d/\varepsilon))$ a logarithmic increase in the number of queries per iteration~\cite{anonymous2024parallel}, 
corresponding to the green area in Figure \ref{fig:1}.}
%
%
{Our lower bound matches this upper bound, emphasizing the critical role of the term  $\log ({1}/{\varepsilon}) $ and ruling out the possibility of designing super-exponentially converging sampling algorithms, even for sequential methods.}
For weakly log-concave samplers, the algorithm by Fan et al. requires \(\widetilde{\gO}(\log^2(1/\varepsilon))\) iterations for log-Lipschitz distributions and \(\widetilde{\gO}(d^{1/2}\log^2(1/\varepsilon))\) iterations for log-smooth distributions, with \(\gO(1)\) queries per iteration~\citep{fan2023improved}.
%
%
%
%
We plug the accuracy of our lower bound in to their guarantees and summarize the comparisons in Table \ref{table:results}.
%



\begin{figure}[t]
\centering
\includegraphics[width=0.45\linewidth]{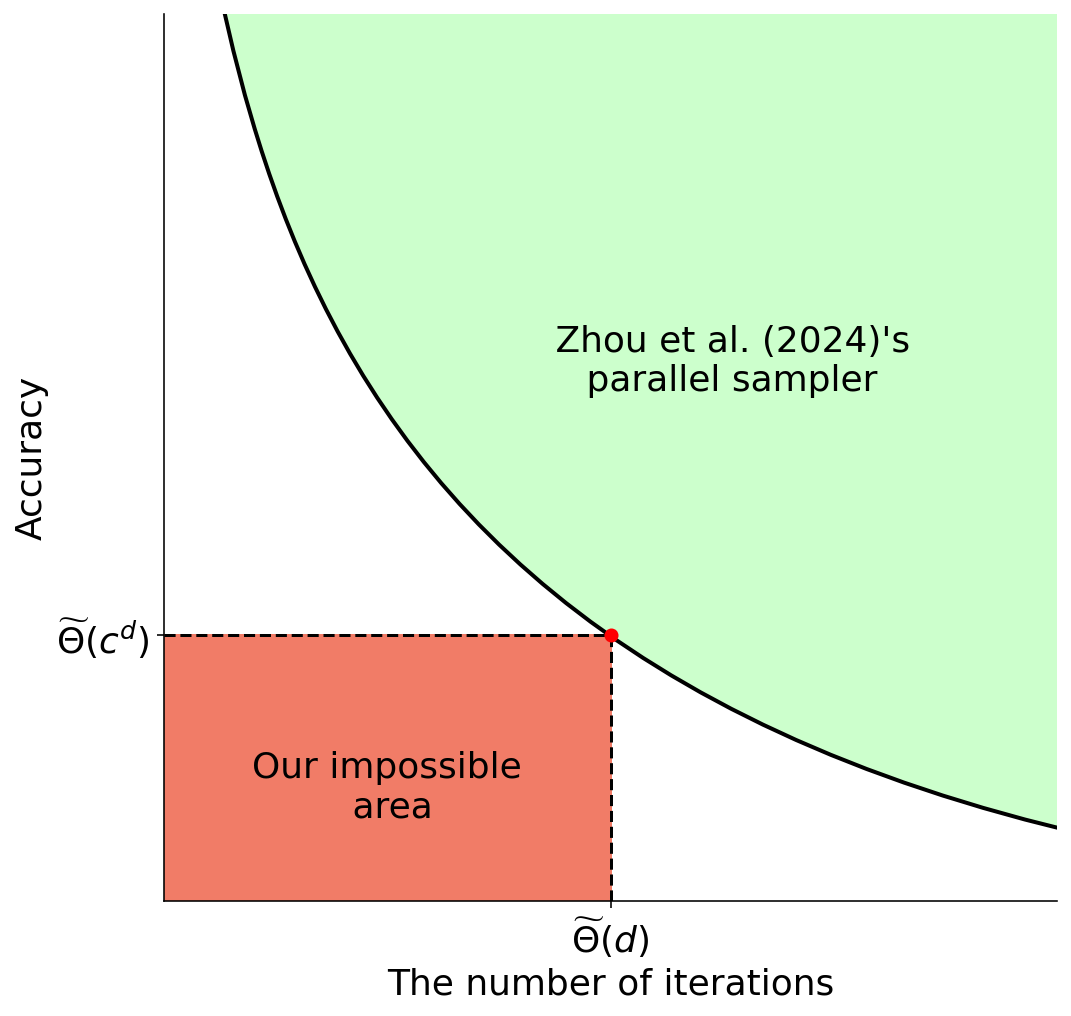}
\caption{Comparison with existing parallel methods for strongly log-concave and log-smooth distributions}
\label{fig:1}
\vspace{-0.6cm}
\end{figure}

\begin{table}[t]
\caption{
Comparison with existing upper bounds in different levels of accuracy. 
We note $c$ is some constant such that $c\in (0,1)$ and $\alpha$ is an arbitrary number satisfying $\alpha = \omega(1)$.
We hide the poly-logarithmic term.
$^\clubsuit$ For the upper bound of the log-concave sampler, we assume there exists an $\gO(d)$-warm start with bounded Chi-square divergence, i.e., $\chi^2_\pi(\rho_0)=\gO(\exp(d))$.
{Here $\mathsf{m}_2$ denotes the second moment.}
%
$^\spadesuit$ We only consider the potential functions taking form as $f= f_1+\left\|\cdot \right\|^2/2$ where $f_1$ is convex and Lipschitz.
}
\renewcommand\arraystretch{1.5}
\centering
\scriptsize
\begin{tabular}{|c|c|c|c|}
\hline
\multicolumn{2}{|l|}{{Problem type of adaptive sampling}} & 
\makecell[c]{Upper bounds}&\makecell[c]{Our lower bounds} \\
\hline
\multirow{4}{*}{\makecell[c]{Unconstrained dist.\\ with $\varepsilon = \Theta(c^{d})$}} & strongly log-concave + log-smooth &    {$\widetilde{\gO}\left(d\right)$}~\citep{anonymous2024parallel}  & $\widetilde{\Omega}(d)$\\
\cline{2-4}
  & log-concave + log-smooth& $\widetilde{\gO}(\mathsf{m}_2d^{5/2})$~\citep{fan2023improved}  $^\clubsuit$ & $\widetilde{\Omega}(d)$ \\
\cline{2-4}
&   log-concave + log-Lipschitz  & $\widetilde{\gO}\left(\mathsf{m}_2d^2\right)$~\citep{fan2023improved} $^\clubsuit$ & $\widetilde{\Omega}(d)$ \\
\cline{2-4}

  &composite $^\spadesuit$  &$\widetilde{O}\left(d^{5/2}\right)$~\citep{fan2023improved}  & $\widetilde{\Omega}(d)$\\
  \hline
\multirow{3}{*}{\makecell[c]{box-constrained dist.\\ with $\varepsilon = \Theta(d^{-\alpha})$} } &log-concave + log-smooth & $\widetilde{\gO}(d^3)$~\citep{mangoubi2022sampling} &$\widetilde{\Omega}(d)$\\
\cline{2-4} &  log-concave + log-Lipschitz  & $\widetilde{\gO}(d^3)$~\citep{mangoubi2022sampling}     &  $\widetilde{\Omega}(d)$\\
\cline{2-4} &  strongly-log-concave + log-smooth  & $\widetilde{\gO}(d)$~\citep{lee2021structured}    &  -\\
\hline
\multirow{3}{*}{\makecell[c]{box-constrained dist.\\ with $\varepsilon = \Theta(c^{d})$} } &log-concave + log-smooth & $\widetilde{\gO}(d^3)$~\citep{mangoubi2022sampling} &$\widetilde{\Omega}(d)$\\
\cline{2-4} &  log-concave + log-Lipschitz  & $\widetilde{\gO}(d^3)$~\citep{mangoubi2022sampling}     &  $\widetilde{\Omega}(d)$\\
\cline{2-4} &  strongly-log-concave + log-smooth  & $\widetilde{\gO}(d^4)$~\citep{lee2021structured}    &  -\\
\hline
\end{tabular}
\label{table:results}
\end{table}

\paragraph{Lower bound for box-constrained samplers.} We also extend the result to the box-constrained setting.
Sampling from polytope-constrained has a wide range of applications including Bayesian inference and differential privacy~\cite{silvapulle2011constrained,mcsherry2007mechanism}.
The box constraint represents the simplest case, such as in the Bayesian logistic regression problem with the infinity norm.
We show that for sufficiently large dimensions, any almost linear-iteration adaptive sampler for box-constrained log-concave distributions fails to return a sample with sup-polynomially small accuracy (see Theorem~\ref{theo:boxmain1}). 

%
The best existing algorithm employs the soft Dikin walk to return a lower-accurate even high-accurate sample within \(\gO(d^3)\) iterations for general polytopes~\citep{mangoubi2023sampling}. 
However, there also exists a gap compared to our lower bounds.
For strongly log-concave and log-smooth distributions, we can view the box-constrained distribution as a composite distributions.
As a result, the proximal sampler can find a high-accurate sample within $\widetilde{\gO}(d^4)$ iterations, and a sample with low accuracy within $\widetilde{\gO}(d)$ iterations~\citep{lee2021structured}. 
But the lower bounds are still unknown.
We also summarize it in Table \ref{table:results}.

\paragraph{Comparision to query complexity.} The current understanding of the query complexity of sampling is notably limited.
For general strongly log-concave and log-smooth distributions, investigations have been confined primarily to $1$- and $2$-dimensional tasks, and only apply to a certain constant accuracy~\citep{chewi2022query, chewi2023query}.
For high-dimensional tasks, even for the Gaussian distribution, studies have only addressed constant accuracy~\citep{chewi2023query}. 
Also, a lower bound for box-constrained log-concave samplers is conspicuously absent.
%
%
{Our work provides the first response to these limitations, highlighting the critical role of the term  $\log(1/\varepsilon)$  and ruling out the feasibility of designing super-exponentially converging sampling algorithms.}

\paragraph{Exponential accuracy requirements with privacy as an example.} In differential privacy, one requires bounds on the infinity-distance~\citep{mangoubi2022sampling1} or Wasserstein-infinity distance~\citep{lin2023tractable} to guarantee pure differential privacy, and total variantion ($\mathsf{TV}$), Kullback–Leibler ($\mathsf{KL}$), or Wasserstein bounds are insufficient~\citep{dwork2014algorithmic}.  To achieve these stringent privacy guarantees, \citet{mangoubi2022sampling1} or \citet{lin2023tractable} both designed algorithms that first generate an exponentially accurate sample w.r.t. $\mathsf{TV}$, and then convert samples with $\mathsf{TV}$ bounds to infinity-distance bounds. Moreover, in rare-event statistics, high-accuracy simulations are required~\citep{shyalika2023comprehensive}.

\section{Preliminaries}
\label{sec:pre}

\paragraph{Sampling task} Given the potential function $f:\mathcal{D}\to \mathbb{R}$, the goal of the sampling task is to draw a sample from the density $\pi_f = Z_f^{-1}\exp(-f)$, where $Z_f :=\int_\mathcal{D} \exp(-f)\mathrm{d}\mbx$ is the normalizing constant.

\paragraph{Distribution and function class}

If $f$ is (strongly) convex, the distribution $\pi_f$ is said to be (strongly) \emph{log-concave}.
If $f$ is $L$-Lipschitz, the distribution $\pi_f$ is said to be \emph{$L$-log-Lipschitz}.
If $f$ is twice-differentiable and $\nabla^2 f\preceq LI$ (where  $\preceq$ denotes the Loewner order and $I$ is the identity matrix), we say the distribution  $\pi_f$ is \emph{$L$-log-smooth}.


\paragraph{Oracle}
In this work, we investigate the model where the algorithm queries points from the domain $\mathcal{D}$ to the oracle $\mathcal{O}$.
Given the potential function $f$, and a query $\mbx \in \mathcal{D}$, the $0$-th order oracle answers the function value $f(\mbx)$ and the $1$-st order oracle answers both $f(\mbx)$ and its gradient value $\nabla f(\mbx)$.
%
{We will focus on the $0$-th-order oracle for the {rest} of this paper.}
Our results under {$0$-th-order} oracles can be extended to {$1$-th-order} oracles, as {$1$-th-order} oracles can be obtained from $0$-th-order oracles when polynomially many queries are allowed per round. 

\paragraph{Adaptive algorithm class}



%
The class of \emph{adaptive} algorithms is formally defined as follows~\citep{diakonikolas2019lower}.
For any dimension $d$, an adaptive algorithm $\mathsf{A}$ takes $f:\mathbb{R}^d\to \mathbb{R}$ and a (possibly random) initial point $\mbx^0$ and iteration number $r$ as input
and returns an output $\mbx^{r+1}$, which is denoted as $\mathsf{A}[f,\mbx^0,r] = \mbx^{r+1}$.
At iteration $i\in [r]:=\{1,\ldots,r\}$, $\mathsf{A}$ performs a batch of queries
\[Q^i=\{\mbx^{i,1},\ldots,\mbx^{i,k_i}\}, ~~\mbox{ with } \mbx^{i,j}\in \mathcal{D},~~ j\in [k_i], ~~k_i = \mathsf{poly}(d) ,\]
such that for any $m,n\in [k_i]$, $\mbx^{i,m}$ and $\mbx^{i,n}$ are \emph{conditionally independent} given all existing queries $\{Q^j\}_{j\in [i-1]}$ and $\mbx^0$.
Give queries set {$Q^i$}, the oracle returns a batch of answers: {$\mathcal{O}(Q^i)  = \{\mathcal{O}(\mbx^{i,1}),\ldots,\mathcal{O}(\mbx^{i,{k_i}})\}.$}

An adaptive algorithm $\mathsf{A}$ is \emph{deterministic} if in every iteration $i\in \{0,\ldots,r\}$, $\mathsf{A}$ operates with the form $Q^{i+1} =\mathsf{A}^i(Q^0,\mathcal{O}(Q^0),\ldots,Q^{i},\mathcal{O}(Q^{i})),$
where $\mathsf{A}^i$ is mapping into $\mathbb{R}^{dk_{i+1}}$ with $Q^{r+1} = \mbx^{r+1}$ as output and $Q^0 =\mbx^0$ as an initial point.
We denote the class of adaptive deterministic algorithms by $\mathcal{A}_{\text{det}}$.

An adaptive \emph{randomized} algorithm has the form $Q^{i+1} =\mathsf{A}^i(\xi_i,Q^0,\mathcal{O}(Q^0),\ldots,Q^{i},\mathcal{O}(Q^{i})), $
{with} access to a {uniform random variable} on $[0, 1]$ (i.e., infinitely many random bits), where $\mathsf{A}^i$ is mapping into $\mathbb{R}^{dk_{i+1}}$.
We denote the class of adaptive randomized algorithms by $\mathcal{A}_{\text{rand}}$.



\paragraph{Measure of the output} Consider the joint distribution of all involved points $\{\mbx:\mbx\in Q^i,i=0,\ldots,r+1\}$ and the random bits $\xi_i$.
%
%
Let the marginal distribution of the output $\mbx^{r+1}$ be $\rho$. 
We say the output to be $\varepsilon$-accurate in {total variation} if 
$\mathsf{TV}(\rho,\pi_f) := \sup_{A\subseteq \mathcal{D}} \left|\rho(A)-\pi_f(A) \right|\leq \varepsilon$.

\paragraph{Initialization} For initial point $\mbx^0\sim \rho_0$,  the initial distribution $\rho_0$ is said to be {$M$-infinite R\'enyi warm start} for $\pi$ if {$\mathcal{R}_\infty(\rho_0\|\pi)\leq M,$} where $\mathcal{R}_\infty(\cdot\|\cdot)$ is the infinite R\'enyi divergence defined as $\mathcal{R}_\infty(\mu \| \pi) = \ln\big\|\frac{\mathrm{d}\mu}{\mathrm{d}\pi}\big\|_{L^\infty(\pi)}$~\footnote{{Sometimes the warm start is defined as $\big\|\frac{\mathrm{d}\rho_0}{\mathrm{d}\pi}\big\|_{L^\infty(\pi)}\leq M$~\citep{wu2022minimax,mangoubi2022sampling}, but we adopt the version with logarithm.}}.

\paragraph{Notion of complexity}
Given $\varepsilon > 0$, $f\in \mathcal{F}$, and some algorithm $\mathsf{A}$, 
define the running iteration 
$\mathsf{T}(\mathsf{A}, f,\mbx^0,\varepsilon,\mathsf{TV})$ as the minimum number of rounds such that algorithm $\mathsf{A}$ outputs a solution $\mbx$ whose marginal distribution $\rho$ satisfies $\mathsf{TV}(\rho,\pi_f)\leq \varepsilon$, i.e., $\mathsf{T}(\mathsf{A}, f,\mbx^0,\varepsilon,\mathsf{TV}) = \inf\left\{t: \mathsf{TV}\left(\rho(\mathsf{A}[f,\mbx^0,t]),\pi_f\right)\leq \varepsilon\right\}$\footnote{We note that in sampling, the iteration complexity is determined by the output of the last iteration, which is analogous to last-iteration properties in optimizations~\citep{abernethy2019last}.}.
We define the \emph{worst case} complexity as
\[\mathsf{Comp}_{\mathsf{WC}}(\mathcal{F},\varepsilon,\mbx^0,\mathsf{TV}) :=  \inf\nolimits_{\mathsf{A}\in \mathcal{A}_{\text{det}}} \sup\nolimits_{f\in \mathcal{F}}\mathsf{T}(\mathsf{A}, f,\mbx^0,\varepsilon,\mathsf{TV}).\]
For some randomized algorithm $\mathsf{A}\in \mathcal{A}_{\mathrm{rand}}$, we define the \emph{randomized} complexity as\footnote{{We note that in sampling, we cannot define the randomized complexity as the expected running iteration over mixtures of deterministic algorithms as  in the case of optimization~\citep{braun2017lower}, since the intrinsic randomness $\xi_i$ will affect the marginal distribution of output. Furthermore, Yao's minimax principle~\citep{arora2009computational} cannot be applied, since the different definition of randomized complexity.
We acknowledge that another possible option not discussed in this paper is the ``Las Vegas'' algorithm, which can return ``failure,'' as described in \citet{altschuler2024faster}.}}
\[\mathsf{Comp}_{\mathsf{R}}(\mathcal{F},\varepsilon,\mbx^0,\mathsf{TV}) :=  \inf\nolimits_{\mathsf{A}\in \mathcal{A}_{\text{rand}}} \sup\nolimits_{f\in \mathcal{F}}\mathsf{T}(\mathsf{A}, f,\mbx^0,\varepsilon,\mathsf{TV}).\]
By definition, we have $\mathsf{Comp}_{\mathsf{WC}}(\mathcal{F},\varepsilon,\mbx^0,\mathsf{TV}) \geq  \mathsf{Comp}_{\mathsf{R}}(\mathcal{F},\varepsilon,\mbx^0,\mathsf{TV}).$
 In the rest of this paper, we only consider the randomized complexity and we lower-bound it by considering the \emph{distributional} complexity:
\[\mathsf{Comp}_{\mathsf{D}}(\mathcal{F},\varepsilon,\mbx^0,\mathsf{TV}) :=  \sup\nolimits_{F\in \Delta(\mathcal{F})}\inf\nolimits_{\mathsf{A}\in \mathcal{A}_{\text{rand}}} \mathop{\mathbb{E}}\nolimits_{f\sim F}\mathsf{T}(\mathsf{A}, f,\mbx^0,\varepsilon,\mathsf{TV}),\] 
where $\Delta(\mathcal{F})$ is the set of probability distributions over the class of functions $\mathcal{F}$.


\section{Technical overview}

We begin by reviewing existing techniques for determining query complexity in sampling and adaptive complexity in optimization. We then discuss the challenges associated with applying these techniques and describe our methods for addressing them.

\subsection{Existing techniques}

\paragraph{Query complexity for sampling.}
The existing techniques for showing the (query) lower bound for the sampling task fall into one of two main approaches.
The first one involves {reducing the task of hypothesis test to the sampling task}~\citep{chewi2022query,chewi2023query}. 
To do so, a family of hardness distributions is constructed. 
On the one hand, the hardness distributions are well-separated in total variation such that if we can sample well from the distribution with accuracy finer than the existing gaps between them, we can identify the distribution with a lower-bounded probability.
On the other hand, with a limited number of queries, the probability of correctly identifying the distribution is upper-bounded by information arguments such as Fano's lemma~\citep{cover1999elements}.
This methodology has been effectively applied to establish lower bounds on query complexity with constant accuracy for log-concave distributions in one or two dimensions~\citep{chewi2022query,chewi2023query}.
However, whether such a method can be extended to high-dimensional sampling remains unknown.

The second method is to {reduce another high-dimensional task to the sampling task}, such as inverse trace estimation or finding stationary points~\citep{chewi2023fisher,chewi2023query}. 
Specifically, they showed that if there is a (possibly randomized) algorithm that returns a sample whose marginal distribution is close enough to the target distribution w.r.t. the total variation distance, then there exists an algorithm to solve the reduced task with a high probability over {the} randomness of the task instance and algorithm.
On the other hand, the hardness of the reduced task is shown by sharp characterizations of the eigenvalue distribution of Wishart matrices or chaining structured functions parameterized by orthogonal vectors~\cite{chewi2023fisher,chewi2023query}.
{However, at the current stage, this method can only work for constant or very low accuracy ($\varepsilon = \widetilde{\Theta}( d^c)$).}


\paragraph{Adaptive complexity in optimization.}
%
{For establishing adaptive lower bounds for submodular optimization, existing methodologies }~\citep{balkanski2020lower,chakrabarty2022improved,chakrabarty2023polynomial,li2020polynomial} utilize a common framework. 
At a high level, these methods involve designing a family of submodular functions, parameterized by a uniformly random partition $\mathcal{P}= (P_1,\ldots , P_{r+1})$ over {the coordinates ground set $[d]$}.
The key property of such construction is that even after receiving responses to polynomially many queries by round $i$, any (possibly randomized) algorithm does not possess any information about the elements in $(P_{i+1}, \ldots, P_{r+1})$ with a high probability over the randomness of partition $\mathcal{P}$. 
As a result, the solution at round $i$ will be independent of the $(P_{i+1}, \ldots, P_{r+1})$ {part} of the partition. 
Additionally, without knowing the information of $P_{r+1}$, any algorithm cannot return a good enough solution.
To hide future information of partition, they used the chain structure or onion-like structure.

Furthermore, a similar approach based on the Nemirovski function parameterized by random orthogonal vectors has also been the main tool to prove lower bounds for parallel convex optimization~\citep{balkanski2018parallelization,bubeck2019complexity,diakonikolas2019lower,garg2021near}. 

\subsection{Technical challenges and our methods} 
Our results for log-concave distributions leverage the same framework.
We construct a family of distributions parameterized by uniformly random partitions with chain structure.
A similar argument applies that any adaptive algorithm cannot learn any information about $(P_{i+1}, \ldots, P_{r+1})$ before the $i$-th round.
However, there are technical challenges to apply a similar
information argument as \citet{chewi2022query,chewi2023query}, which we summarize below.
\begin{enumerate}[leftmargin=0.7cm]
\item The hard distributions in \citet{chewi2022query,chewi2023query} are well-separated, i.e., the total variation distance between any two distributions is large enough, which benefits the argument of {reduction from identification to sampling}. 
 However, the hard distributions constructed by the random partition are close w.r.t. the total variation distance.
\item On the other hand,~\citet{chewi2023query} applied Fano's lemma with the advantage of bounded information gain for every query. 
However, it is unclear how to find the bound of all the information gains for polynomially many queries.
\end{enumerate}
As a result, the reduction {from} a hypothesis test no longer applies, and the implications of not knowing the partition $(P_{i+1}, \ldots, P_{r+1})$ remain unknown.

The technical contributions of this paper lie in tackling {these challenges} by characterizing the outputs.
Specifically, we show that without knowing $(P_r,P_{r+1})$, the output cannot reach the area $\{\mbx\in \mathbb{R}^d, |\sum_{i\in P_r} \mbx_i- \sum_{i\in P_{r+1}}\mbx_i|\geq t\}$ with {a reasonable threshold} $t$~(Lemma~\ref{lmm:unreach}). {This is achieved through the concentration bound of conditional Bernoulli random variables} (Theorem~\ref{theo:concentration} and Theorem~\ref{theo:concentration2}).
{Without hitting such an area, we can lower-bound the total variation distance between the marginal distribution and the target distribution.}

Moreover, we extend this result to the unconstrained distributions by taking the maximum operator between the original hard distribution and a function without information about the partition. 
Additionally, we adapt the result to the smooth case by employing a classical smoothing technique.

\section{Lower bounds for unconstrained log-concave sampling}
\label{sec:main1}
In this section, we construct different hardness potentials over the whole space under different smooth and convex conditions.
{We first present the hardness potentials for strongly log-concave and log-smooth samplers in Section \ref{sec:hard1} and its analysis in Theorem \ref{theo:main1}. Due to their similarity of the hardness potentials for different settings, we present the other results in Section \ref{sec:other1} while we defer some proofs to Appendix \ref{app:miss1}.}
%
%

\subsection{Lower bound for strongly log-concave sampling}
\begin{theorem}[\bf{Lower bound for strongly log-concave and log-smooth samplers}]
\label{theo:main1}
Let $d$ be sufficiently large, and $c\in (0,1)$ be a fixed uniform constant.
Consider the function class $\mathcal{F}$, consisting of $1$-strongly convex and $2$-smooth functions.
For any $\alpha = \omega(1)$ and $\varepsilon= \gO(c^{d})$, there exists $\gamma = \gO(d^{-\alpha})$ and $\mbx^0\sim \rho_0$ with $\rho_0$ as $\widetilde{\gO}(d)$-{infinite R\'enyi warm start} for any $\{\pi_f\}_{f\in \mathcal{F}}$ such that  $\mathsf{Comp}_{\mathsf{R}}(\mathcal{F},\varepsilon,\mbx^0,\mathsf{TV})\geq (1-\gamma)\frac{d}{\alpha \log^3 d}$.
\end{theorem}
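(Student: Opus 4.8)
The plan is to construct a distribution over hard instances $F \in \Delta(\mathcal{F})$ and lower-bound the distributional complexity $\mathsf{Comp}_{\mathsf{D}}(\mathcal{F},\varepsilon,\mbx^0,\mathsf{TV})$, which by the final inequality in Section~\ref{sec:pre} lower-bounds $\mathsf{Comp}_{\mathsf{R}}$. Following the framework sketched in the technical overview, I would build a family of potentials $f_{\mathcal{P}}$ indexed by a uniformly random ordered partition $\mathcal{P} = (P_1,\ldots,P_{r+1})$ of (a subset of) the coordinates $[d]$, with $r \approx (1-\gamma)\,d/(\alpha \log^3 d)$ blocks, each block of size roughly $\log^3 d$ (so that a $\mathsf{poly}(d)$-sized batch of queries reveals a block only with probability $\leq \mathsf{poly}(d)/\binom{d}{\log^3 d} = d^{-\omega(1)}$). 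The potential should have a ``chain-like'' structure: the coordinates in $P_{i+1}$ only become influential/visible once the algorithm has localized the coordinates in $P_1,\ldots,P_i$, so that after $i$ adaptive rounds the algorithm's transcript is (with probability $1 - d^{-\omega(1)}$) independent of $(P_{i+1},\ldots,P_{r+1})$. To keep $f_{\mathcal P}$ inside $\mathcal{F}$, I would take $f_{\mathcal P} = \tfrac12\|\cdot\|^2 + g_{\mathcal P}$ where $g_{\mathcal P}$ is a convex, $1$-smooth perturbation built from smoothed ``ridge''/soft-max terms in the block-difference directions $\sum_{i\in P_j}\mbx_i - \sum_{i\in P_{j+1}}\mbx_i$; the classical smoothing (Moreau-type) step promised in the overview ensures $2$-smoothness and $1$-strong convexity overall, and since the Hessian contribution of the base quadratic dominates, this is where the constants $1$-strongly convex / $2$-smooth come from. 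The warm start $\rho_0$ is taken to be (a scaling of) the standard Gaussian, which is $\widetilde O(d)$-infinite-Rényi-warm for every $\pi_{f_{\mathcal P}}$ because the perturbation $g_{\mathcal P}$ only shifts the log-density by $O(d)$ in sup norm.

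The core of the argument is the two-part structure flagged in the overview. First, the \emph{information-hiding} step: I would show by a union bound over the $\mathsf{poly}(d)$ queries in each of the first $r$ rounds that, with probability $1-d^{-\omega(1)}$ over $\mathcal P$, no query in rounds $1,\ldots,r$ ever ``discovers'' block $P_r$ or $P_{r+1}$, hence the joint law of the transcript and the output $\mbx^{r+1}$ is (conditionally) independent of the pair $(P_r,P_{r+1})$ given the earlier blocks. Here I would lean on the chain structure so that a query can only reveal block $P_{j+1}$ if it has already pinned down blocks $P_1,\ldots,P_j$, and on the fact that an algorithm with no information about a random $(\log^3 d)$-subset cannot guess it. Second, the \emph{unreachability} step (Lemma~\ref{lmm:unreach} plus the conditional-Bernoulli concentration Theorems~\ref{theo:concentration}/\ref{theo:concentration2}): conditioned on not knowing $(P_r,P_{r+1})$, the output $\mbx^{r+1}$ avoids the set $\{\mbx : |\sum_{i\in P_r}\mbx_i - \sum_{i\in P_{r+1}}\mbx_i| \geq t\}$ except with small probability, because the algorithm's output distribution cannot concentrate on a direction it has no information about while the target $\pi_{f_{\mathcal P}}$ — by design of the last chain link — does place mass $\geq$ (something bounded below) on that set. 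Taking $t$ of order $\sqrt{\log^3 d}$ and using the concentration of $\sum_{i\in P_r}\mbx_i - \sum_{i\in P_{r+1}}\mbx_i$ under the algorithm's partition-oblivious output versus under the target yields $\mathsf{TV}(\rho,\pi_{f_{\mathcal P}}) \geq c_0$ for some constant $c_0$, hence certainly $> \varepsilon = O(c^d)$, after $r$ rounds. Averaging over $\mathcal P$ and invoking Markov (a $1-d^{-\omega(1)}$ fraction of instances are hard) gives $\mathbb{E}_{f\sim F}\mathsf{T}(\mathsf{A},f,\mbx^0,\varepsilon,\mathsf{TV}) \geq (1-\gamma)\,d/(\alpha\log^3 d)$ for every randomized $\mathsf{A}$, which is the claim.

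The main obstacle — exactly the one the authors isolate in Section~3.2 — is the second step: unlike the well-separated hypothesis-testing constructions of Chewi et al., here the hard distributions are mutually close in TV, so one cannot reduce to identification, and with $\mathsf{poly}(d)$ adaptive queries per round there is no clean per-query information-gain bound to feed into Fano. So instead of an information-theoretic identification argument I would argue \emph{geometrically/analytically} about the output law directly: establish that any partition-oblivious distribution on $\mathbb{R}^d$ assigns probability $o(1)$ to the ``last-link'' region $\{|\sum_{P_r}\mbx_i - \sum_{P_{r+1}}\mbx_i| \geq t\}$ (averaged over the random choice of the two blocks), while $\pi_{f_{\mathcal P}}$ assigns it $\Omega(1)$; this gap is the TV lower bound. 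Making the conditional-Bernoulli concentration (Theorems~\ref{theo:concentration},~\ref{theo:concentration2}) quantitatively strong enough to beat the $\mathsf{poly}(d)$ union bound while still leaving a constant TV gap, and verifying that the smoothed chain potential genuinely forces $\pi_{f_{\mathcal P}}$ to put constant mass on the last-link region, are the delicate points; the smoothness/strong-convexity bookkeeping and the warm-start estimate are routine by comparison.
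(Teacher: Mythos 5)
Your overall framework matches the paper's: a random fixed-size partition, a chain-like potential added to $\tfrac12\|\cdot\|^2$ and smoothed, an induction showing the transcript is independent of future blocks, and an "unreachable last-link region" argument, with a Gaussian warm start. However, there is a genuine gap in your second step. You claim the target $\pi_{f_{\mathcal P}}$ places $\Omega(1)$ mass on the region $\{|\sum_{i\in P_r}\mbx_i-\sum_{i\in P_{r+1}}\mbx_i|\geq t\}$, yielding a constant TV gap $c_0$. This cannot work: to make the information-hiding union bound over $\mathsf{poly}(d)$ queries per round go through, the threshold must absorb the fluctuation of block sums over the random partition at \emph{query points}, which can have norm up to $\Theta(\sqrt d)$; this forces $t=\Theta\bigl((M+1)\sqrt{\alpha\log d}\bigr)$ with $M=\Theta(\sqrt d)$ (your suggested $t\sim\sqrt{\log^3 d}$ gives no concentration at all for a query supported on a few coordinates of size $\sim\sqrt d$, since the bound $\exp(-t^2/(16\sum(\alpha_i^\downarrow)^2))$ is then $\approx 1$). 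But with such a large $t$, the $1$-strongly log-concave target gives the halfspace event $X_{r+1}-X_r\geq t$ only sub-Gaussian-tail mass, which is far below any constant. Indeed, a constant TV lower bound after $\widetilde\Omega(d)$ rounds would contradict the known parallel upper bound (constant accuracy in $O(\log^2 d)$ rounds), so the strengthening you assert is false, and this is exactly why the theorem is restricted to $\varepsilon=O(c^d)$.

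What your proposal is missing, and what constitutes the technical heart of the paper's proof, is the quantitative lower bound on the target mass of the unreachable set: the paper takes $S=\{\|\mbx\|\leq M,\ X_{r+1}-X_r\geq t\}$ with $M^2=d/4$, lower-bounds the density on $S$ by $e^{-M^2}$, upper-bounds $Z_{f_{\mathcal P}}$ by $e^{M+1}(2\pi)^{d/2}$ (using the outer $\max\{\cdot,\|\mbx\|-M\}$ term in $g_{\mathcal P}$, which your construction omits and which also controls behavior of queries far from the origin), and then computes the spherical-cap volume via the regularized incomplete beta function and Stirling to get $\pi_{f_{\mathcal P}}(S)\geq\Omega(c^d)$ while Lemma~\ref{lmm:unreach} gives $\rho(S)=0$ with probability $1-d^{-\omega(1)}$ over $\mathcal P$. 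Replacing your constant-gap claim with this exponentially-small-but-not-too-small cap estimate (and keeping the dependence of $t$ on $M$ and $\alpha$) is what closes the argument at the accuracy level actually claimed in the theorem.
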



The assumption that $d$ is sufficiently large is necessary for achieving a high success probability and has also been adopted in prior works on the lower bounds of parallel optimization~\citep{balkanski2018adaptive,li2020polynomial}.
As we are only concerned with the order of complexity, we do not estimate the exact value of $c\in (0,1)$.
{Actually, we can choose any $c $ which satisfies that $\frac{\sqrt{\frac{t}{1-t}}}{2\pi\sqrt{d}(d+2)}\left(\frac{\sqrt{t}}{2}\right)^{d}\geq c^d$ with a fixed $t\in (0,1)$ depending on $d$~(see Section \ref{sec:proof1}).}

\subsubsection{Construction of hardness potentials for Theorem \ref{theo:main1}}
\label{sec:hard1}

{Before we describe the details of hardness potentials}, {we first recall the properties of the smoothing operator modified by Proposition 1 in \cite{guzman2015lower}. We defer the details of the proof to Appendix \ref{app:smooth}.}
{
\begin{theorem}[\bf{Smoothing operators}]
\label{theo:smoothing}
Let $f : \mathbb{R}^d \to \mathbb{R}$ be a $1$-Lipschitz and
convex function. There exists a convex continuously differentiable function $S_1[f](\mbx) :  \mathbb{R}^d \to \mathbb{R}$
with the following properties:
\begin{enumerate}[leftmargin=0.7cm]
    \item $f(\mbx) \geq S_1[f](\mbx) \geq f(\mbx) - 1$ for all $\mbx\in \mathbb{R}^d$;
    \item $\left\|\nabla S_1[f](\mbx) - \nabla S_1[f](\mby)\right\|_2
\leq   \left\|x -y \right\|_2$ for all $\mbx, \mby\in \mathbb{R}^d$;
\item For every $\mbx\in \mathbb{R}^d$, the restriction of $S_1[f](\cdot)$ on a small enough neighbourhood of $\mbx$ depends solely on the
restriction of $f$ on the set $B_{1}(\mbx) = \{\mby:\left\|\mby-\mbx\right\|\leq 1\}.$
\end{enumerate}
Furthermore, if $f$ reaches its minimum of $0$ at $\mathbf{x}^\star = 0$, then $S_1[f]$ does likewise.
\end{theorem}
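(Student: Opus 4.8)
\textbf{Proof proposal for Theorem \ref{theo:smoothing} (Smoothing operators).}

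The plan is to realize $S_1[f]$ as an infimal-convolution (Moreau-type) smoothing, but with a smooth, compactly-supported mollifier tuned to have influence radius $1$, following the construction in Proposition 1 of \cite{guzman2015lower}. Concretely, fix a smooth ``bump'' kernel $\omega:\mathbb{R}^d\to\mathbb{R}_{\geq 0}$ supported on the unit ball $B_1(0)$ with $\int \omega = 1$ and with gradient bounded appropriately, and define
\[
S_1[f](\mbx) := \inf_{\mby\in\mathbb{R}^d}\Bigl\{ f(\mby) + h(\mbx-\mby)\Bigr\},
\]
where $h$ is a smooth convex function built from $\omega$ whose role is to (i) penalize displacements of norm more than $1$ by $+\infty$ (or steeply enough) so that only $\mby\in B_1(\mbx)$ matter, and (ii) be smooth enough that the infimal convolution inherits a $C^1$ gradient with Lipschitz constant $1$. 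Since $f$ is $1$-Lipschitz, the infimum is attained and $S_1[f]$ is finite everywhere; convexity is automatic because an infimal convolution of convex functions is convex.

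Then I would verify the three properties in order. Property 1: taking $\mby = \mbx$ in the infimum gives $S_1[f](\mbx)\le f(\mbx) + h(0)$, and normalizing $h(0)=0$ (or absorbing the constant) yields $S_1[f](\mbx)\le f(\mbx)$; the lower bound $S_1[f](\mbx)\ge f(\mbx)-1$ uses $1$-Lipschitzness of $f$, so $f(\mby)\ge f(\mbx)-\|\mbx-\mby\|$, combined with the fact that the effective minimizer satisfies $\|\mbx-\mby\|\le 1$ (because larger displacements are penalized past the point of being optimal) and $h\ge 0$. Property 2 (gradient $1$-Lipschitz): here I would use the standard duality that the infimal convolution $f\,\square\,h$ has $\nabla(f\,\square\,h)$ equal to $\nabla h$ evaluated at the optimal displacement, so the Lipschitz modulus of $\nabla S_1[f]$ is controlled by the smoothness modulus of $h$ together with firm non-expansiveness of the associated proximal map; one calibrates the kernel so this modulus is exactly $1$. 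Property 3 (locality): since the minimizing $\mby$ for a query point near $\mbx$ lies in $B_1(\mbx)$, and by continuity of the argmin this remains true on a small neighborhood of $\mbx$, the value of $S_1[f]$ near $\mbx$ only sees $f|_{B_1(\mbx)}$. Finally, if $f$ attains its minimum value $0$ at $\mathbf{x}^\star=0$, then $S_1[f](0)\le f(0)=0$ by Property 1, and $S_1[f](0)\ge f(0)-\ldots$; more cleanly, since $S_1[f]\le f$ pointwise and $S_1[f]\ge \inf f = 0$ once $h\ge 0$ and $f\ge 0$, we get $S_1[f](0)=0$ is the global minimum, and convexity plus differentiability pin the minimizer at $0$.

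The main obstacle I anticipate is the simultaneous calibration in Properties 1 and 2: getting the additive error exactly $\le 1$ while keeping the gradient Lipschitz constant exactly $\le 1$ requires choosing the kernel/penalty $h$ with the right trade-off between support size and curvature, and verifying the gradient formula for the infimal convolution rigorously (existence, uniqueness, and continuity of the minimizer) when $f$ is merely Lipschitz and convex rather than smooth. This is exactly the content that \cite{guzman2015lower} handle via a careful mollification argument, so I would adapt their estimates, tracking constants to confirm the stated bounds, and defer the detailed kernel computations to Appendix \ref{app:smooth}.
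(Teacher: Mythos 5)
Your proposal is essentially the paper's own route: the paper's Appendix also realizes $S_1[f]$ as the Guzm\'an--Nemirovski inf-convolution smoothing $\min_{h}[f(\mbx+h)+\chi\phi(h/\chi)]$ with a convex kernel of bounded effective support, cites Proposition 1 of \cite{guzman2015lower} for Properties 1--3, and fixes $\chi=1$ and the kernel so that the additive error and the gradient Lipschitz constant are both $O(1)$; your only substantive deviation is the mechanism you invoke for Property 2, where the working argument is not plain conjugate duality (the hard constraint to $B_1$ destroys global smoothness of the kernel) but the condition that $\phi$ exceeds $\|h\|$ on the boundary of its effective domain, which together with $1$-Lipschitzness of $f$ keeps the minimizer strictly interior where the kernel's Hessian is bounded --- exactly the calibration issue you flagged and defer to \cite{guzman2015lower}, as does the paper. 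Your argument for the ``furthermore'' clause (nonnegativity of $S_1[f]$ plus $S_1[f](0)\le f(0)=0$) is in fact simpler than the paper's implicit-function-style argument and suffices for the statement as written.
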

}
Let $d_0\in \mathbb{N}_{+}$ with $ d_0 \geq \log^3 d$, and $r = d/d_0-2$.
Consider partition with fixed-size components as $P_{1}\cup P_{2} \cup \cdots\cup P_{r+2}= [d]$, where $ |P_{1}| =d_0$, $|P_{1}| = |P_{i}|$ for all $i\in [r] $.
The partition is uniformly random among all partitions with such fixed-size parts, and we denote it as $\mathcal{P}$.
For any $\mbx\in \mathbb{R}^d$ let  {$X^{i} = \sum_{s\in P_{i}}\mbx_s$} for all $i\in [r+1]$, and we denote {$X(\mbx) = (X^1,\ldots,X^{r+1})$}.
We define the $1$-strongly log-concave hardness potential $f_\mathcal{P}:\mathbb{R}^d \to \mathbb{R}$ as 
\[f_\mathcal{P}(\mbx) = S_1[g_\mathcal{P}](\mbx) + \frac{1}{2}\left\|\mbx\right\|^2,\]
where $S_1$ is the smoothing operator defined in Theorem \ref{theo:smoothing} and $ g_\mathcal{P}:\mathbb{R}^d \to \mathbb{R}$ is defined as
\begin{equation*}
 g_\mathcal{P}(\mbx)~=~\max \Big\{L\cdot  \Big( |X^1| + \sum\nolimits_{i\in [r]}\max\left\{\left|X^{i}-X^{i+1}\right|-t,0\right\}  \Big), \left\|\mbx\right\|-M\Big\},
\end{equation*}
where $t = 8(M+1)\sqrt{\alpha \log d}$ and $L = 1/{(4\sqrt{d})}$. 
We will specify the parameter $M$ later.

By construction, we observe that
\begin{enumerate}[leftmargin=0.7cm]
\item When $\left\|\mbx\right\|\geq 2M$, it must be
\[L\cdot  \Big( |X^1| + \sum\nolimits_{i\in [r]}\max\left\{\left|X^{i}-X^{i+1}\right|-t,0\right\}  \Big)\leq \left\|\mbx\right\|-M,\]
i.e., outside the ball of radius $2M$, $g_\mathcal{P}$ is only determined by its right term.
\item When $\left\|\mbx\right\|\leq M$, it must be 
\[L\cdot  \Big( |X^1| + \sum\nolimits_{i\in [r]}\max\left\{\left|X^{i}-X^{i+1}\right|-t,0\right\}  \Big)\geq \left\|\mbx\right\|-M,\]
i.e., within the ball of radius $M$, $g_\mathcal{P}$ is only determined by its left term.
\end{enumerate}

At the \(i\)-th iteration, we assume that no information about \((P_{i+1}, \ldots, P_r)\) is processed in advance.
For any query \(\mathbf{x} \in Q^i\), if \(\|\mathbf{x}\| \leq 2M\), the {concentration of linear functions over the Boolean slice}~(Theorem~\ref{theo:concentration2}) implies that \(g_{\mathcal{P}}(\mathbf{x})\) is independent of \((P_{i+1}, \ldots, P_r)\) with high probability over \(\mathcal{P}\). 
Similarly, if \(\|\mathbf{x}\| \geq 2M\), then \(g_{\mathcal{P}}(\mbx) = \|\mathbf{x}\| - M\), which also demonstrates independence from \((P_{i+1}, \ldots, P_r)\).
Consequently, the solution at round \(i\) will be independent of the partition segment \((P_{i+1}, \ldots, P_{r+1})\).
Moreover, lacking knowledge of \((P_r, P_{r+1})\), the output cannot reach the area \(S = \{\mathbf{x} \in \mathbb{R}^d \mid |\sum_{i \in P_r} \mathbf{x}_i - \sum_{i \in P_{r+1}} \mathbf{x}_i| \geq t\}\) with a reasonable threshold \(t\), leveraging the concentration bounds of conditional Bernoullis (Theorem~\ref{theo:concentration2}).
Thus, the marginal distribution \(\rho\) of \(\mathbf{x}^{r+1}\) satisfies \(\rho(S) = 0\) with a high probability over randomness of $\mathcal{P}$.

So far we have not considered the inference of the smoothing operator.
However, actually, it will only have a constant effect since it preserves the function value and the local area within a constant tolerance.
The constant change in the value of the potential function will result in only a constant alteration of the mass value, whereas the change in the local area will lead to the constant shrinking of the unreachable area.

We give the formal description in the following lemma, which is our main technical contribution.

\begin{lemma}
\label{lmm:unreach}
For any randomized algorithm $\mathsf{A}$, any $\tau \leq r$,  and any initial point $\mbx^0$,  $X(\mathsf{A}[f_\mathcal{P},\mbx^0,\tau])$ takes a form as 
\[(x^1,\ldots,x_\tau,x_{\tau},\ldots,x_{\tau}),\]
up to addictive error $\gO(t)$ {for every coordinate} with probability $1-d^{-\omega(1)}$ over $\mathcal{P}$. 
\end{lemma}

\begin{proof}
We fixed $\tau$ and prove the following by induction for $l\in [\tau]$: With high probability, the computation
path of the (deterministic) algorithm $\mathsf{A}$ and the queries it issues in the $l$-th round are determined by $P_{1},\ldots,P_{l-1}$.

As a first step, we assume the algorithm is deterministic by fixing its random bits and
choose the partition of $\mathcal{P}$ uniformly at random.

To prove the inductive claim, let $\mathcal{E}_l$ denote the event that
for any query $\mbx$ issued by $\mathsf{A}$ in iteration $l$, the answer is in the form $S_1[g_\mathcal{P}^l](\mbx) + \frac{1}{2}\left\|\mbx\right\|^2$ where $g_\mathcal{P}^l:\mathbb{R}^d\to \mathbb{R}$ defined as:
\[ g_\mathcal{P}^l(\mbx) = 
\max \Big\{L\cdot  \Big( |X^1| + \sum\nolimits_{i\in [l-1]}\max\left\{\left|X^{i}-X^{i+1}\right|-t,0\right\}  \Big), \left\|\mbx\right\|-M\Big\},
\] 
i.e., $\mathcal{E}_l$ represents the events that  $\forall \mbx \in Q^l$, $f_\mathcal{P}(\mbx )  = S_1[g_\mathcal{P}](\mbx) + \frac{1}{2}\left\|\mbx\right\|^2 = S_1[g_\mathcal{P}^l](\mbx) + \frac{1}{2}\left\|\mbx\right\|^2$.

Since the queries in round $l$ depend only on $P_{1},\ldots,P_{l-1}$, if $\mathcal{E}_l$ occurs, the entire computation path in round $l$ is determined by $P_{1},\ldots,P_{l}$. 
By induction, we conclude that if all of $\mathcal{E}_1, \ldots , \mathcal{E}_l$ occur, the computation path in round $l$ is determined by $P_{1},\ldots,P_{l}$.

Now we analyze the conditional probability $P\left[\mathcal{E}_l\mid \mathcal{E}_1,\ldots,\mathcal{E}_{l-1}\right]$.
By the property 3 of Theorem \ref{theo:smoothing}, $S_1[g_\mathcal{P}](\mbx)$ only depends on  $\{g_\mathcal{P}(\mbx):\mbx'\in B_1(\mbx)\}$.
Thus, it is sufficient to analyze the probability of the event that for a fixed query $\mbx$,  any point $\mbx'\in B_1(\mbx)$ satisfies that $g_\mathcal{P}(\mbx') = g_\mathcal{P}^l(\mbx')$.

\textbf{Case 1:} $\left\|\mbx\right\|\leq 2M+1$.
Given all of $\mathcal{E}_1,\ldots,\mathcal{E}_{l-1}$ occur so far, we can claim that $Q^l$ is determined by $P_1,\ldots,P_l$.
Conditioned on $P_1,\ldots,P_l$, the partition of $[d]\setminus \mathop{\bigcup}_{i\in [l]}P_i$ is uniformly random. 
We consider $\{0,1\}$-random variable $Y_j$, $j\in [d]\setminus \mathop{\bigcup}_{i\in [l]}P_i$.
For any query $\mbx' \in B_1(\mbx)$, we represent $X^i(\mbx')$ as a linear function of $Y_i$s as  $X^i(\mbx') = \sum_{j\in [d]\setminus \mathop{\bigcup}_{i\in [l]}P_i} Y_j \mbx_j'$ such that $Y_i=1$ if $Y_i \in P_i$ and $Y_i= 0$ otherwise. 
By the {concentration of linear functions over the Boolean slice} (Theorem \ref{theo:concentration2}), and recall $t = 8(M+1)\sqrt{\alpha \log d}$, we have
\begin{align*}
 \mathbb{P}_{\mathcal{P}}\Big[|X^i(\mbx')- \mathbb{E}[X^i(\mbx')]| \geq  \frac{t}{2}\Big]
 ~\leq~& 2\exp\Big(-\frac{t^2}{16(2M+2)^2}\Big)
=2d^{-\omega(1)}.   
\end{align*}
Similarly, $\mathbb{P}_{\mathcal{P}}\left[|X^{i+1}(\mbx')- \mathbb{E}[X^{i+1}(\mbx')]| \geq  \frac{t}{2}\right] \leq  2d^{-\omega(1)}$. 
Combining the fact that $\mathbb{E}[X^i(\mbx')] = \mathbb{E}[X^{i+1}(\mbx')]$, we have with probability at least  $1-d^{-\omega(1)}$, for any fixed $i\geq l$
\[\max\left\{\left|X^{i}(\mbx')-X^{i+1}(\mbx')\right|-t,0\right\} = 0,\]
which implies $g_\mathcal{P}(\mbx') = g_\mathcal{P}^l(\mbx')$ with a probability at least $1-rd^{-\omega(1)}$.

\textbf{Case 2: } $\left\|\mbx\right\|\geq 2M+1$.
For any $\mbx'\in B_1(\mbx)$, we have $\left\|\mbx'\right\|\geq 2M$, which implies
\[g_\mathcal{P}(\mbx) = \left\|\mbx\right\|-M = g_\mathcal{P}^l(\mbx).\]
Combining these two cases, we have for any fixed query $\mbx$, 
with a probability at least $1-rd^{-\omega(1)}$, any point $\mbx'\in B_1(\mbx)$ satisfies that $g_\mathcal{P}(\mbx') = g_\mathcal{P}^l(\mbx')$.

By union bound over all queries $\mbx\in Q^l$, conditioned on that $\mathcal{E}_1,\ldots,\mathcal{E}_{l-1}$ occur,  with probability at least $1-r\mathsf{poly}(d)d^{-\omega(1)}$, $\mathcal{E}_l$ occurs.
Therefore by induction, 
\begin{align*}
    \mathbb{P}(\mathcal{E}_l) ~=~& \mathbb{P}(\mathcal{E}_l|\mathcal{E}_1,\ldots,\mathcal{E}_{l-1})\mathbb{P}(\mathcal{E}_{l-1}|\mathcal{E}_1,\ldots,\mathcal{E}_{l-2})\ldots \mathbb{P}(\mathcal{E}_{2}|\mathcal{E}_1) \mathbb{P}(\mathcal{E}_1)\\
    ~\geq~& 1-r^2\mathsf{poly}(d)d^{-\omega(1)} = 1-d^{-\omega(1)}.
\end{align*}
This implies that with high probability, the computation path in round $l$ is determined by $P_1,\ldots,P_{l-1}$. Consequently,  for all $l\in [\tau]$ a solution returned after $l - 1$ rounds is determined by $P_1,\ldots,P_{l-1}$ with high probability.
By the same concentration argument, the solution is with a probability at least $1-d^{-\omega(1)}$ in the form 
\[(x_1,\ldots,x_\tau,x_{\tau},\ldots,x_{\tau}),\]
up to an additive error $\gO(t)$ in each coordinate.

Finally, we note that by allowing the algorithm to use random bits, the results are
a convex combination of the bounds above, so the same high-probability bounds are satisfied.
\end{proof}

\subsubsection{Proof of Theorem \ref{theo:main1}}
\label{sec:proof1}
\paragraph{Verification of $f_\mathcal{P}$.} Since $g_\mathcal{P}$ is $1$-Lipschitz and convex, by Theorem \ref{theo:smoothing}, $S_1[g_\mathcal{P}]$ is $1$-smooth and convex, which implies $f_\mathcal{P}(\mbx) = S_1[g_\mathcal{P}](\mbx) + \frac{1}{2}\left\|\mbx\right\|^2$ is $1$-strongly convex and $2$-smooth.

\paragraph{Bound of total variation distance.} 
We first estimate the normalizing constant as follows.
\begin{align*}
Z_{f_\mathcal{P}}
~ \leq~ &\int_{\mathbb{R}^d}\exp\left(-g_\mathcal{P}(\mbx) + 1-\frac{1}{2}\left\|\mbx\right\|^2\right)\mathrm{d}\mbx\tag*{(by Property 1 of Theorem \ref{theo:smoothing})} \\
~\leq~&e \int_{\mathbb{R}^d}\exp\left(-\left\|\mbx\right\|+M-\frac{1}{2}\left\|\mbx\right\|^2\right)\mathrm{d}\mbx\tag*{(by definition of $g_\mathcal{P}$)} \\
 ~=~ &\exp(M+1) \frac{2 \Gamma^d(1/2)}{\Gamma(d/2)}\int_0^\infty t^{d-1} \exp(-t^2/2) \mathrm{d}t = \exp(M+1) \cdot  (2\pi)^{d/2}.\numberthis\label{eq:normalizing}
\end{align*}
Consider a subset $S = \left\{\mbx\in \mathbb{R}^d: \left\|\mbx\right\|\leq M, X_{r+1}-X_r \geq t\right\}$.
By Lemma \ref{lmm:unreach}, with probability $1-d^{-\omega(1)}$ over $\mathcal{P}$, 
$
\rho(\mathsf{A}[f_\mathcal{P},\mbx^0,r])(S) = 0.
$
Also by Property $1$ in Theorem \ref{theo:smoothing}, and definition of $f_\mathcal{P}$ and $g_\mathcal{P}$, we have for any $\mbx \in S$, and $M\geq 1$,
\[ f_\mathcal{P}(\mbx)\leq g_\mathcal{P}(\mbx) + \frac{1}{2}\left\|\mbx\right\|^2 \leq \frac{1}{2}\left\|\mbx\right\| + \frac{1}{2}\left\|\mbx\right\|^2\leq \frac{M+ M^2}{2}\leq M^2.
\]
Thus, we have 
$
\pi_{f_\mathcal{P}}(S) = \frac{\int_S \exp(f_\mathcal{P}(\mbx))\mathrm{d}\mbx}{Z_{f_\mathcal{P}}} \geq \frac{\int_S \exp(-M^2)\mathrm{d}\mbx}{Z_{f_\mathcal{P}}} = \frac{|S| \exp(-M^2)}{Z_{f_\mathcal{P}}} .
$
%
Recall $t = 8(M+1)\sqrt{\alpha \log d}$, we define the height of the sphere cap as $h = M- \frac{t}{\sqrt{2d_0}} = M- 8(M+1) \sqrt{\frac{ \alpha \log d}{2d_0}}$.
Let $b = (2Mh-h^2)/M^2 = 1-\left(1+\frac{1}{M}\right)^2\frac{64\alpha \log d}{d_0}$.
By Lemma \ref{theo:volcap} and Eq.~\eqref{eq:normalizing}, we have 
\begin{align*}
\pi_{f_\mathcal{P}}(S) ~\geq ~& \frac{|S| \exp(-M^2)}{Z_{f_\mathcal{P}}}  
~\geq  ~\frac{|S|}{V_d M^d} \cdot V_d \cdot \frac{1}{(2\pi)^{d/2}} \cdot  M^d \exp(-2M^2)\\
~= ~&\frac{I_{b}\left(\frac{d+1}{2},\frac{1}{2}\right) }{2}  \cdot \frac{\pi^{d/2}}{\Gamma(d/2+1)}\cdot  \frac{1}{(2\pi)^{d/2}} \cdot \left(\frac{d}{4}\right)^{d/2}\exp\left(-\frac{d}{2}\right),
\end{align*}
by taking $M^2 = d/4$, where $\Gamma(\cdot)$ is Gamma function, $I_x(a,b)$ is incomplete Beta function.

We recall $d_0 = \log^3 d$ and let $d_0 = \omega(\alpha \log d)$.
For sufficient large $d$, we have $b\to 1$, which implies 
$b\geq t$ for any fixed $t\in (0,1)$, $I_{b}\left(\frac{d+1}{2},\frac{1}{2}\right)\geq I_{t}\left(\frac{d+1}{2},\frac{1}{2}\right)$.
By the expansion of the incomplete Beta function,
\[I_{t}\Big(\frac{d+1}{2},\frac{1}{2}\Big) = \Gamma\Big(\frac{d}{2}+1\Big) t^{(d+1)/2}(1-t)^{-1/2}\Big(\frac{1}{\Gamma\left(\frac{d+1}{2}+1\right)\Gamma\left(\frac{1}{2}\right)} + \gO\Big(\frac{1}{\Gamma\left(\frac{d+1}{2}+1\right)}\Big)\Big),\]
we have  $I_{b}\left(\frac{d+1}{2},\frac{1}{2}\right)\geq \sqrt{\frac{t}{1-t}} \cdot  \frac{2}{\sqrt{\pi}} \cdot \frac{t^{d/2}}{d+2}.$
Furthermore, by Stirling's approximation, $\Gamma\left(\frac{d}{2}+1\right)=  \sqrt{2\pi (d/2)}\left(\frac{d}{2e}\right)^{d/2}\left(1+\gO\left(\frac{1}{d}\right)\right)$, we have 
\[\pi_{f_\mathcal{P}}(S)\geq \frac{I_{b}\left(\frac{d+1}{2},\frac{1}{2}\right) }{2}  \cdot \frac{1}{\Gamma(d/2+1)}\cdot  \frac{1}{2^{d/2}} \cdot \left(\frac{d}{4}\right)^{d/2}\exp\left(-\frac{d}{2}\right) \geq  \frac{\sqrt{\frac{t}{1-t}}}{2\pi\sqrt{d}(d+2)}\left(\frac{\sqrt{t}}{2}\right)^{d}.\]
Thus, there exists a constant $c\in (0,\frac{1}{2})$ such that with probability $1-d^{-\omega(1)}$ over $\mathcal{P}$, 
\[\mathsf{TV}(\rho(\mathsf{A}[f_\mathcal{P},\mbx^0,r]),\pi_{f_\mathcal{P}}) \geq {\Omega}(c^d).\]
\paragraph{Initial condition.}
For any potential function $f\in \mathcal{F}$, it holds that $f$ reaches its minimum value of $0$ at the origin.
Consider initial distribution $\rho_0$ as $\mathsf{normal}(0,I_d)$.
By Lemma \ref{lmm:init} and Lemma \ref{lmm:secstrong}, the initial condition holds.

\subsection{Lower bounds for weakly log-concave sampling}
\label{sec:other1}

Here, we consider the potentials without strong convexity or taking composite forms. 
Similarly, we show exponentially small lower bounds for almost linear iterations (Theorem \ref{the:un1} and Theorem \ref{the:un2}).

\begin{theorem}[\bf{Lower bound for weakly log-concave samplers}]
\label{the:un1}
Let $d$ be sufficiently large, and $c\in (0,1)$ be a fixed uniform constant.
Consider the function class $\mathcal{F}$, consisting of convex and $1$-smooth or $1$-Lipschitz functions.
For any $\alpha = \omega(1)$ and $\varepsilon= \gO(c^{d})$, there exists $\gamma = \gO(d^{-\alpha})$ and $\mbx^0\sim \rho_0$ with $\rho_0$ as $\widetilde{\gO}(d)$-{infinite R\'enyi warm start} for any $f\in \mathcal{F}$ such that  $\mathsf{Comp}_{\mathsf{R}}(\mathcal{F},\varepsilon,\mbx^0,\mathsf{TV})\geq (1-\gamma)\frac{d}{\alpha \log^3 d}$.
\end{theorem}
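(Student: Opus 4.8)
The plan is to mimic the proof of Theorem~\ref{theo:main1} essentially verbatim, with the single change that the quadratic regularizer $\frac12\|\mbx\|^2$ is dropped from the hardness potential (so as to lose only the strong-convexity, keeping convexity and smoothness). Concretely, I would define the weakly-log-concave hardness potential as $f_\mathcal{P}(\mbx) = S_1[g_\mathcal{P}](\mbx)$ where $g_\mathcal{P}$ is exactly the same max-of-two-terms construction used before, parameterized by a uniformly random fixed-size partition $\mathcal{P} = (P_1,\ldots,P_{r+2})$ of $[d]$ with $|P_i| = d_0 \geq \log^3 d$ and $r = d/d_0 - 2$. For the Lipschitz case one keeps $f_\mathcal{P} = S_1[g_\mathcal{P}]$ directly, noting $g_\mathcal{P}$ is $1$-Lipschitz and $S_1$ preserves this up to constants; for the smooth case one uses the smoothing property (Theorem~\ref{theo:smoothing}, part~2) which makes $S_1[g_\mathcal{P}]$ $1$-smooth, and both are convex by part of the same theorem plus convexity of $g_\mathcal{P}$. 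This verifies $f_\mathcal{P} \in \mathcal{F}$.

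Next I would invoke Lemma~\ref{lmm:unreach}: its proof only uses the concentration of conditioned Bernoullis applied to the differences $X_i - X_{i+1}$ and the three properties of $S_1$, none of which depend on the presence or absence of the $\frac12\|\mbx\|^2$ term — the term is an algorithm-independent, partition-independent additive function, so it does not reveal any information about the partition. Hence the same conclusion holds: for any randomized $\mathsf{A}$, any $\tau \le r$, and any $\mbx^0$, $X(\mathsf{A}[f_\mathcal{P},\mbx^0,\tau])$ has the form $(x_1,\ldots,x_\tau,x_\tau,\ldots,x_\tau)$ up to additive error $O(t/2)$ with probability $1 - d^{-\omega(1)}$ over $\mathcal{P}$. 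In particular, with $t = 8(M+1)\sqrt{\alpha\log d}$ (choosing $M$ appropriately, e.g.\ $M^2 = d/4$ as before, or another polynomial-in-$d$ choice adapted to the non-strongly-convex normalization), the output misses the set $S = \{\mbx : \|\mbx\| \le M,\ X_{r+1} - X_r \ge t\}$ with high probability, so $\rho(S) = 0$.

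The remaining step is the lower bound on $\mathsf{TV}(\rho, \pi_{f_\mathcal{P}}) \ge \pi_{f_\mathcal{P}}(S) - \rho(S) = \pi_{f_\mathcal{P}}(S)$, which requires lower-bounding the mass $\pi_{f_\mathcal{P}}(S)$. Here is the one genuine difference: without the Gaussian factor, $\exp(-f_\mathcal{P})$ is not integrable over all of $\mathbb{R}^d$ unless the $\|\mbx\| - M$ branch of $g_\mathcal{P}$ dominates at infinity, which it does — outside the ball of radius $2M$, $g_\mathcal{P}(\mbx) = \|\mbx\| - M$, so $Z_{f_\mathcal{P}} \le e\int_{\mathbb{R}^d}\exp(-(\|\mbx\|-M))\mathrm{d}\mbx < \infty$, and this integral evaluates (via the surface-area/Gamma-function computation) to $e^{M+1}\cdot \mathrm{vol}(S^{d-1})\cdot \Gamma(d)$, which is at most $c_1^{-d}$ for a suitable constant after the $M = \Theta(\sqrt d)$ or $M = \Theta(d)$ choice. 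On $S$ one has $f_\mathcal{P}(\mbx) \le g_\mathcal{P}(\mbx) \le \tfrac12\|\mbx\| \le M/2$ (using Property~1 of $S_1$ and that inside radius $M$ the left branch governs with $L = 1/(4\sqrt d)$), so $\pi_{f_\mathcal{P}}(S) \ge |S| e^{-M/2}/Z_{f_\mathcal{P}}$; combining with the spherical-cap volume estimate (Lemma~\ref{theo:volcap}) with cap height $h = M - t/\sqrt{2d_0}$ and $d_0 = \omega(\alpha\log d)$ so that $b \to 1$, the same incomplete-Beta expansion yields $\pi_{f_\mathcal{P}}(S) \ge \Omega(c^d)$ for some fixed $c \in (0,1)$. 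Finally, $\rho_0 = \mathsf{normal}(0,I_d)$ furnishes a $\widetilde O(d)$-infinite-Rényi warm start (Lemma~\ref{lmm:init}), and since $\varepsilon = O(c^d)$ the accuracy is not met in $r = (1-\gamma)\frac{d}{\alpha\log^3 d}$ rounds. The main obstacle — and the only place the proof is not a literal copy — is controlling the normalizing constant and the cap mass simultaneously with the now-linear tail, i.e.\ picking $M$ so that both $Z_{f_\mathcal{P}}$ stays $\exp(O(d))$ and $|S|e^{-f_\mathcal{P}}$ on $S$ stays $\exp(-O(d))$ with a strictly positive net exponent; this is routine but needs the constants tracked carefully, analogously to Section~\ref{sec:proof1}.
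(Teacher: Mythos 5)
Your overall strategy is the paper's: drop the quadratic term, keep the max-of-chain construction $f_\mathcal{P}=S_1[g_\mathcal{P}]$ (or $g_\mathcal{P}$ itself in the Lipschitz case), reuse the unreachability lemma verbatim, and lower-bound the mass of the cap $S$ against the normalizing constant. Two points, however, are genuine problems. First, your parameter hedging is not harmless: with $M=\Theta(\sqrt d)$ the cap mass you can extract is only $M^d e^{-\Theta(M)}/\Gamma(d)\cdot I_b = d^{-\Theta(d)}$, which is smaller than $c^d$ for every fixed $c\in(0,1)$, so that choice does not prove the theorem; you must take $M=\Theta(d)$ (the paper takes $M=d-1$), in which case the extra $e^{-M/2}$ you pay on $S$ (from keeping $L=\tfrac{1}{4\sqrt d}$, so $f_\mathcal{P}\le\tfrac12\|\mbx\|\le M/2$ there) is absorbed into the constant $c$ and the TV bound goes through.

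Second, and more importantly, you dismiss the initialization in one line, but this is precisely the step the paper identifies as the main new difficulty in the weakly log-concave case: Lemma \ref{lmm:init} only gives $\mathcal{R}_\infty(\rho_0\|\pi_{f_\mathcal{P}})\le 2+\tfrac d2\ln(\mathsf m^2 L)$, so you must prove a polynomial (or at worst quasi-polynomial) bound on the moment $\mathsf m$ of the constructed target, which is not automatic without strong convexity. With your unmodified $L=\tfrac1{4\sqrt d}$ and $M=\Theta(d)$, the natural bounds (upper bound $g_\mathcal{P}\le\tfrac12\|\mbx\|$ on $B_M$ for the numerator, and $Z_{f_\mathcal{P}}\ge V_dM^de^{-M/2}$ for the denominator) only give $\mathsf m_2=e^{\Theta(d)}$, hence $\mathcal{R}_\infty=O(d^2)$, not $\widetilde O(d)$. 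The paper avoids this by additionally shrinking $L$ so that the chain term is at most a constant $\Delta$ on $B_{2M}$ (which does not affect the unreachability lemma); then $g_\mathcal{P}\le\max\{\|\mbx\|-M,\Delta\}$ yields $Z_{f_\mathcal{P}}\ge e^{-\Delta}V_d(M+\Delta)^d$ and $\mathsf m_2=O(e^{\Delta+1}d^2)$, which is what makes the $\widetilde O(d)$ warm start (and also a cleaner $e^{-\Delta}$ instead of $e^{-M/2}$ in the cap-mass bound) go through. To close your argument you would either need to adopt this smaller-$L$ modification or supply a sharper lower bound on $Z_{f_\mathcal{P}}$ (e.g.\ a typical-point argument showing $g_\mathcal{P}$ is polylogarithmic on most of $B_M$); as written, the warm-start claim is unsupported.
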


The proof can be found in Appendix \ref{app:miss1}.
The main difference from Theorem \ref{theo:main1} is the requirement of the initial point.
Since any strongly log-concave distribution always has a bounded second moment, $\rho_0\sim \mathsf{normal}(0,I_d)$ can be $\widetilde{\gO}(d)$-warm start.
However, a weakly log-concave distribution may have an unbounded second moment, leading to $\rho_0\sim \mathsf{normal}(0,I_d)$ not being a warm start, and our lower bound uncomparable to the existing upper bound.
As for the smoothness condition, the smoothing operator only causes a constant change in function value and a constant expansion of the local structure, which will not change the order of the lower bound.
We note that changing from a strongly log-concave distribution to a weakly log-concave one will only increase the logarithmic base $c$ without changing the complexity order for our constructions~(Eq.~\eqref{eqapp:99}).
However, the lower bound remains almost exponential because the proportion of the unreachable area is exponentially small.

\subsection{Lower bound for composite sampling}
\begin{theorem}[\bf{Lower bound for composite potentials}]
\label{the:un2}
Let $d$ be sufficiently large, and $c\in (0,1)$ be a fixed uniform constant.
Consider the potential function class \(\mathcal{F}\), consisting of functions \(f(\cdot) = f_1(\cdot) + \frac{1}{2}\|\cdot\|^2\), where \(f_1\) is convex, $1$-Lipschitz.
For any $\alpha = \omega(1)$ and $\varepsilon= \gO(c^{d})$, there exists $\gamma = \gO(d^{-\alpha})$ and $\mbx^0\sim \rho_0$ with $\rho_0$ as $\widetilde{\gO}(d)$-warm start for any $f\in \mathcal{F}$ such that  $\mathsf{Comp}_{\mathsf{R}}(\mathcal{F},\varepsilon,\mbx^0,\mathsf{TV})\geq (1-\gamma)\frac{d}{\alpha \log^3 d}$.
\end{theorem}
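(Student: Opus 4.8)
The plan is to reuse the entire machinery developed for Theorem \ref{theo:main1} essentially verbatim, since the composite class here is a subclass of the strongly log-concave and log-smooth class up to the smoothing step: the target potential is exactly $f_\mathcal{P}(\mbx) = S_1[g_\mathcal{P}](\mbx) + \frac12\|\mbx\|^2$, but now we are allowed to present it in composite form by setting $f_1 = S_1[g_\mathcal{P}]$. First I would verify that $f_1 = S_1[g_\mathcal{P}]$ lies in the admissible class, i.e.\ that it is convex and $1$-Lipschitz. Convexity is immediate from Theorem \ref{theo:smoothing}; for the Lipschitz bound I would note that $g_\mathcal{P}$ is $1$-Lipschitz by construction (the left branch has gradient bounded by $L\cdot(\text{number of active terms})\le L\sqrt{d}\cdot O(1)$, which is $O(1)$, and the right branch $\|\cdot\|-M$ is $1$-Lipschitz; actually one checks the left branch is designed so the Lipschitz constant is at most $1$), and that $S_1$ does not increase the Lipschitz constant (this follows from the construction of the smoothing operator — it is an infimal-convolution/averaging type operator, so $\|\nabla S_1[f]\|\le \mathrm{Lip}(f)$). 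Hence $f \in \mathcal{F}$ with $f_1$ convex and $1$-Lipschitz, and $f$ reaches its minimum $0$ at the origin because $g_\mathcal{P}$ does and Theorem \ref{theo:smoothing} preserves this.

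Next I would invoke Lemma \ref{lmm:unreach} unchanged: for any randomized $\mathsf{A}$ and any $\tau\le r$, $X(\mathsf{A}[f_\mathcal{P},\mbx^0,\tau])$ has the chain form $(x_1,\dots,x_\tau,x_\tau,\dots,x_\tau)$ up to additive error $O(t/2)$ with probability $1-d^{-\omega(1)}$ over $\mathcal{P}$. The lemma's proof only used that $f_\mathcal{P} = S_1[g_\mathcal{P}] + \frac12\|\cdot\|^2$ and the concentration of conditioned Bernoullis, so it transfers with no change. Then, following Section \ref{sec:proof1}, I would bound the normalizing constant $Z_{f_\mathcal{P}} \le \exp(M+1)(2\pi)^{d/2}$ exactly as in Eq.~\eqref{eq:normalizing} (this computation used only Property 1 of Theorem \ref{theo:smoothing} and the definition of $g_\mathcal{P}$, both still valid), take the same set $S = \{\mbx : \|\mbx\|\le M,\ X_{r+1}-X_r\ge t\}$ with $M^2 = d/4$, lower-bound $\pi_{f_\mathcal{P}}(S)$ by the spherical-cap volume computation (Lemma \ref{theo:volcap}) to get $\pi_{f_\mathcal{P}}(S) \ge \Omega(c^d)$ for a suitable constant $c\in(0,\tfrac12)$, while $\rho(S) = 0$ with high probability over $\mathcal{P}$ by Lemma \ref{lmm:unreach}. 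This yields $\mathsf{TV}(\rho,\pi_{f_\mathcal{P}}) \ge \Omega(c^d)$ on a $1-d^{-\omega(1)}$ fraction of partitions, so for $\varepsilon = O(c^d)$ the sampler must use at least $(1-\gamma)\frac{d}{\alpha\log^3 d}$ rounds with $\gamma = O(d^{-\alpha})$, after passing to the distributional complexity $\mathsf{Comp}_{\mathsf{D}}$ over the uniform law on $\mathcal{P}$ and using $\mathsf{Comp}_{\mathsf{R}} \ge \mathsf{Comp}_{\mathsf{D}}$.

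The only genuinely new point — and the one I expect to be the main (minor) obstacle — is the initialization: unlike Theorem \ref{the:un1}, here $f$ contains the quadratic term $\frac12\|\cdot\|^2$, so $\pi_{f_\mathcal{P}}$ is $1$-strongly log-concave and has second moment $O(d)$; hence, just as in Theorem \ref{theo:main1}, I would take $\rho_0 = \mathsf{normal}(0,I_d)$ and verify via Lemma \ref{lmm:init} and Lemma \ref{lmm:secstrong} that $\mathcal{R}_\infty(\rho_0\|\pi_{f_\mathcal{P}}) = \widetilde O(d)$ uniformly over $\mathcal{P}$ (the bound must be uniform in the partition, which holds because the relevant estimates — the normalizing-constant lower bound and the sublevel-set geometry — depend on $g_\mathcal{P}$ only through its $1$-Lipschitz-ness and its minimum at the origin, both partition-independent). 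Assembling these pieces gives the claimed $\widetilde\Omega(d)$ lower bound. I would remark that the proof is word-for-word that of Theorem \ref{theo:main1} except that the smoothing operator is now "absorbed" into the declared composite term $f_1$ rather than being an artifact of the construction, so there is nothing to reprove about smoothness of $f$ beyond the $1$-Lipschitz/convexity check of $f_1$ above.
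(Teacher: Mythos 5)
Your proposal is correct, but it takes a (mildly) different route from the paper's own proof. The paper's Appendix~\ref{app:proofun2} does \emph{not} reuse the smoothed construction: it defines the hard instance as $f_\mathcal{P} = g_\mathcal{P} + \frac{1}{2}\|\cdot\|^2$ with the \emph{unsmoothed} $g_\mathcal{P}$ (slightly re-parameterized, $t = 8M\sqrt{\alpha\log d}$), taking $f_1 = g_\mathcal{P}$ directly, since the composite class only demands convexity and $1$-Lipschitzness of $f_1$, not smoothness. This forces the paper to restate and reprove the unreachability lemma (Lemma~\ref{lmmapp:unreachcomp}), but the proof is actually simpler than Lemma~\ref{lmm:unreach} because no locality argument over the ball $B_1(\mbx)$ is needed — the oracle answer at a query depends only on the query point itself, so the case split is at $\|\mbx\|\le 2M$ rather than $2M+1$; the normalizing-constant and cap-volume computations then go through without the factor $e$ slack from Property~1 of Theorem~\ref{theo:smoothing}. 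You instead keep $f_1 = S_1[g_\mathcal{P}]$ and observe that the instance of Theorem~\ref{theo:main1} already lies in the composite class, which lets you quote Lemma~\ref{lmm:unreach}, the TV bound of Section~\ref{sec:proof1}, and the initialization argument verbatim with no new lemma. This is legitimate, but note that its validity hinges on the smoothing operator preserving the Lipschitz constant ($\mathrm{Lip}(S_1[g_\mathcal{P}])\le 1$), a property that is \emph{not} listed in the main-text Theorem~\ref{theo:smoothing} and only appears in the expanded properties of Appendix~\ref{app:smooth}; you correctly flag and justify it, but you should cite that appendix explicitly rather than gesture at "infimal-convolution type" reasoning. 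In exchange, the paper's unsmoothed construction avoids any dependence on smoothing properties and yields a marginally tighter constant in the $\Omega(c^d)$ separation, while your version gets the theorem essentially for free as a corollary of Theorem~\ref{theo:main1}; both establish the stated $(1-\gamma)\frac{d}{\alpha\log^3 d}$ bound.
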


The proof can be found in Appendix \ref{app:proofun2}. It is almost the same as Theorem \ref{theo:main1} except for the smoothness condition. 
Similarly, the smoothing operator will not change the order of the complexity.

\section{Lower bounds for box-constrained log-concave samplers}
\label{sec:box}
In various applications, such as Bayesian inference and differential privacy \cite{silvapulle2011constrained,mcsherry2007mechanism}, the domain is constrained by polytopes.
The simplest form of constraint is the box constraint, and we show the impossibility of obtaining a sup-polynomially small accuracy sample within almost linear iterations (Theorem~\ref{theo:boxmain1}) for box-constrained samplers.
\begin{theorem}[\bf{Lower bound for log-smooth and log-concave samplers}]
\label{theo:boxmain1}
Let $d$ be sufficiently large, and $c\in (0,1)$ be a fixed uniform constant.
Consider the potential function class $\mathcal{F}$ as convex and $1$-smooth or $1$-Lipschitz  function over the cube centered at the origin with length $1$.
For any  $\varepsilon= \Omega(d^{-\omega(1)})$, there exists $\gamma = \gO(d^{-\omega(1)})$  { and $\mbx^0\sim \rho_0$ with $\rho_0$ as $\widetilde{\gO}(d)$-{infinite R\'enyi warm start} for any $\{\pi_f\}_{f\in \mathcal{F}}$} 
such that $\mathsf{Comp}_{\mathsf{R}}(\mathcal{F},\varepsilon,\mbx^0,\mathsf{TV})\geq (1-\gamma)\frac{d}{\log^3 d}.$
\end{theorem}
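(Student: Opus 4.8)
The plan is to mirror the construction and analysis of Theorem \ref{theo:main1}, but restricted to the cube $\mathcal{D} = [-1/2,1/2]^d$ and without the quadratic regularizer. First I would set $d_0 \geq \log^3 d$, $r = d/d_0 - 2$, and take a uniformly random fixed-size partition $P_1 \cup \cdots \cup P_{r+2} = [d]$ as before, with coordinate sums $X_i = \sum_{s \in P_i} \mbx_s$. Since each coordinate lies in $[-1/2,1/2]$, each $X_i$ is automatically bounded by $d_0/2$, so there is no need for the ``outer'' branch $\|\mbx\| - M$ that appears in $g_\mathcal{P}$; the hardness potential can be taken to be simply
\[
g_\mathcal{P}(\mbx) = L \Big( |X_1| + \sum_{i \in [r]} \max\{|X_i - X_{i+1}| - t, 0\} \Big),
\]
and for the smooth case $f_\mathcal{P} = S_1[g_\mathcal{P}]$, with the threshold $t$ now chosen as $t = \Theta(\sqrt{\alpha \log d \cdot d_0})$ for a suitable $\alpha = \omega(1)$ (the exact constant only affects $c$). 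For the Lipschitz case one takes $f_\mathcal{P} = g_\mathcal{P}$ directly, after rescaling $L$ so that $g_\mathcal{P}$ is $1$-Lipschitz on the cube.

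The second step is the analogue of Lemma \ref{lmm:unreach}: I would run the same induction over rounds $l \in [\tau]$, with the event $\mathcal{E}_l$ that every query in round $l$ is answered by $S_1[g_\mathcal{P}^l](\mbx)$ where $g_\mathcal{P}^l$ truncates the chain at index $l-1$. The concentration of conditioned Bernoullis (Theorem \ref{theo:concentration2}) applies verbatim: conditioned on $P_1,\ldots,P_l$, for $i \geq l$ the quantity $X_i(\mbx') - X_{i+1}(\mbx')$ (for $\mbx'$ in the unit ball, intersected with the cube) is a centered sum of conditioned Bernoulli terms with bounded range, hence concentrates within $t/2$ with probability $1 - d^{-\omega(1)}$; a union bound over the $\mathsf{poly}(d)$ queries and over $\tau \leq r$ rounds gives failure probability $d^{-\omega(1)}$. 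Here the cube constraint actually simplifies matters, since Case 2 (large $\|\mbx\|$) of the original proof never arises. The conclusion is that $X(\mathsf{A}[f_\mathcal{P}, \mbx^0, r])$ is of the form $(x_1,\ldots,x_\tau, x_\tau,\ldots,x_\tau)$ up to additive error $O(t)$ with high probability over $\mathcal{P}$; in particular the output's marginal $\rho$ puts zero mass on $S = \{\mbx \in \mathcal{D} : X_{r+1} - X_r \geq 2t\}$ (say) with probability $1 - d^{-\omega(1)}$.

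The third step is to lower bound $\pi_{f_\mathcal{P}}(S)$. Since $f_\mathcal{P}$ is $1$-Lipschitz (or $1$-smooth with bounded value) on a cube of diameter $\sqrt d$, and $g_\mathcal{P} = 0$ wherever all the chain differences are below $t$ and $X_1$ is small, we have $f_\mathcal{P}(\mbx) = O(\sqrt d)$ uniformly on $\mathcal{D}$ by Property 1 of Theorem \ref{theo:smoothing}; hence $Z_{f_\mathcal{P}} = \int_{\mathcal{D}} e^{-f_\mathcal{P}} \in [e^{-O(\sqrt d)}, 1]$. Then $\pi_{f_\mathcal{P}}(S) \geq |S| \cdot e^{-O(\sqrt d)}$, and it remains to bound the Lebesgue volume of $S$ from below: $S$ is the slab of the cube where a fixed linear functional $X_{r+1} - X_r = \sum_{i \in P_{r+1}} \mbx_i - \sum_{i \in P_r} \mbx_i$ of $2d_0$ coordinates exceeds $2t$, and since $|P_r| = |P_{r+1}| = d_0$ with $t = \Theta(\sqrt{\alpha \log d \cdot d_0})$, a standard anticoncentration / Berry–Esseen estimate for a sum of $2d_0$ independent uniforms on $[-1/2,1/2]$ shows this probability is at least $d^{-O(\alpha)}$, which is $\Omega(d^{-\omega(1)})$-bounded below — i.e., at least any fixed inverse polynomial. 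Combining, $\mathsf{TV}(\rho, \pi_{f_\mathcal{P}}) \geq \pi_{f_\mathcal{P}}(S) = \Omega(d^{-\omega(1)})$ with high probability over $\mathcal{P}$, and since $\varepsilon = \Omega(d^{-\omega(1)})$ can be taken smaller than this, no algorithm running in $r = (1-\gamma) d / \log^3 d$ rounds can be $\varepsilon$-accurate; passing from the fixed hard instance to $\mathsf{Comp}_{\mathsf{R}}$ via the distributional lower bound $\mathsf{Comp}_{\mathsf{D}}$ (averaging over $\mathcal{P}$, as in the proof of Theorem \ref{theo:main1}) finishes it. Finally, the initialization: a warm start for a log-concave distribution on a bounded cube is easy — the uniform distribution on $\mathcal{D}$ has $\mathcal{R}_\infty(\mathsf{Unif}(\mathcal{D}) \| \pi_{f_\mathcal{P}}) \leq \sup f_\mathcal{P} - \inf f_\mathcal{P} + \log Z_{f_\mathcal{P}}^{-1} + \ldots = O(\sqrt d) = \widetilde O(d)$, so $\rho_0 = \mathsf{Unif}(\mathcal{D})$ works for all $f \in \mathcal{F}$ simultaneously.

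The main obstacle I anticipate is the volume lower bound on the slab $S$: unlike the Gaussian case in Theorem \ref{theo:main1}, where the mass of a spherical cap is controlled by the incomplete Beta function, here one needs a clean anticoncentration bound for a sum of a growing number $2d_0$ of bounded independent variables at a deviation scale $t = \Theta(\sqrt{d_0 \log d})$, i.e., $\Theta(\sqrt{\log d})$ standard deviations out — right in the moderate-deviations regime where one must be careful that the lower tail bound degrades only polynomially (not super-polynomially) in $d$. Choosing $d_0 = \log^3 d$ and tracking constants so that the tail is $\exp(-\Theta(\log d \cdot \text{const})) = d^{-\Theta(1)}$ rather than $d^{-\omega(1)}$ is the delicate point; everything else is a routine adaptation of Section \ref{sec:proof1}.
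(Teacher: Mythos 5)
Your construction and the unreachability induction match the paper's proof almost exactly (same chain potential without the outer $\|\mbx\|-M$ branch, same smoothing for the smooth case, same conditioned-Bernoulli concentration and union bound over rounds), and your warm-start argument via the uniform distribution on the cube is fine. The genuine gap is in your third step, the lower bound on the mass of the unreachable region. You take $S=\{\mbx: X_{r+1}-X_r\ge 2t\}$ and bound $\pi_{f_\mathcal{P}}(S)\ge |S|\cdot \inf_S e^{-f_\mathcal{P}} \ge |S|\,e^{-O(\sqrt d)}$, using only the uniform bound $f_\mathcal{P}=O(\sqrt d)$ on the cube. On your full slab the potential really can be of order $L\cdot r\cdot d_0=\Theta(\sqrt d)$ (many chain differences can be of order $d_0$ there), so this estimate gives a total-variation lower bound of only $e^{-\Omega(\sqrt d)}$. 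That is far weaker than what the theorem asserts: the whole point of the box-constrained result is that the lower bound holds already at sup-polynomially small accuracy $d^{-\omega(1)}$, and $e^{-\Omega(\sqrt d)}\ll d^{-O(\alpha)}$ for any $\alpha=\omega(1)$ of moderate growth, so your argument does not establish the stated regime.

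The fix is exactly what the paper does: shrink the unreachable set so that the potential is $O(1)$ on it, namely take $S=\{\mbx\in\mathcal{D}: |X_i|\le t/2\ \forall i\in[r],\ \tfrac{3t}{2}\le X_{r+1}\le \tfrac{t}{2}+t\sqrt{\alpha}\}$. On this set $f_\mathcal{P}\le (\tfrac{3t}{2}+t\sqrt{\alpha})L<1$, so the density is bounded below by a universal constant times $2^{-d}$, and the volume of $S$ is lower bounded by combining the Irwin--Hall \emph{lower} tail (anticoncentration of $X_{r+1}$ at the $\sqrt{\alpha\log d}$-standard-deviation scale, Lemma \ref{lmm:irwin}) with the upper tail for the events $|X_i|\le t/2$, giving $|S|/2^d\ge c_1 d^{-c_2\alpha}$ and hence $\mathsf{TV}\ge \Omega(d^{-c_2\alpha})=d^{-\omega(1)}$. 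Your anticipated difficulty (moderate-deviation anticoncentration for sums of $2d_0$ uniforms) is real but is handled by the Irwin--Hall bound; the step you actually missed is controlling the potential on the unreachable set rather than on the whole cube. Also note the output-form lemma forces $|X_{r+1}-X_r|=O(t)$, so the band condition on $X_{r+1}$ together with $|X_r|\le t/2$ keeps $S$ unreachable, just as your $2t$ threshold did.
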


The proof can be found in Appendix \ref{app:box}.
The main difference from Theorem \ref{the:un1} is that the lower bound holds for low-accuracy samplers.
The reason is that to obscure the information, it is sufficient to construct an unreachable area with a \(d^{-\omega(1)}\) proportion of the support set while keeping the distribution value bounded by a uniform constant.
Similarly, taking advantage of the smoothing operator, we can prove that the smooth case is almost the same as the Lipschitz case with constant changes in distribution value and local structure.


\section{Conclusions and future works}
\label{sec:con}
This work established adaptive lower bounds for log-concave distributions in various settings.
We demonstrated almost linear lower bounds for unconstrained samplers with specific exponentially small accuracy for (i) strongly log-concave and log-smooth, (ii) log-concave and log-smooth or log-Lipschitz, and (iii) composite distributions.
Additionally, we proved that box-constrained samplers cannot achieve sup-polynomially small accuracy within almost linear iterations.
Our adaptive lower bounds also introduced new lower bounds for query complexity.
Our proof relies upon novel analysis with the characterization of the output for the hardness potentials based on the chain-like structure with random partition and classical smoothing techniques.

However, these bounds are applicable only in high-dimensional settings. In low-dimensional settings, even the query complexity of high-accuracy samplers remains unclear. Furthermore, it is not known whether our bounds are tight in all settings. Therefore, it is an open question to design optimal algorithms or find optimal lower bounds.

{Furthermore, extending our lower bounds to related tasks, such as sampling for diffusion models, presents an interesting direction for future work. While there has been considerable progress on parallel sampling methods for diffusion models~\citep{shih2024parallel,gupta2024faster,chen2024accelerating}, the theoretical lower bounds for these methods remain unexplored. 
Notably, sampling methods in diffusion models focus on simulating a reverse dynamics, while log-concave sampling is not solely grounded in some dynamics. This distinction introduces unique challenges for deriving lower bounds for diffusion models.
%
}
\newpage
\section*{Acknowledgments}
The authors thank Sinho Chewi for very helpful conversations. 
HZ was supported by International Graduate Program of Innovation for Intelligent World and Next Generation Artificial Intelligence Research Center.
BW was partially supported by the National Natural Science Foundation of China (62106213, 72394361) and an extended support project from the Shenzhen Science and Technology Program.
MS was supported by the Institute for AI and Beyond, UTokyo and by a grant from Apple, Inc. Any views, opinions, findings, and conclusions or recommendations expressed in this material are those of the authors and should not be interpreted as reflecting the views, policies or position, either expressed or implied, of Apple Inc.

\bibliography{reference}
\bibliographystyle{plainnat}


\newpage

\appendix

\section{Useful tools}

\subsection{Concentration and tail bounds}
\begin{lemma}[\bf{Concentration of Lipshictz function of conditioned Bernoullis}]
\label{theo:concentration}
Let $X_1, \ldots, X_n$ be $\{0, 1\}$ random variables conditioned on $\sum\limits_{i=1}^n X_i =k$. Let $f : \{0, 1\}^n \to \mathbb{R}$ be a $1$-Lipschitz function\footnote{A function $f : {0, 1}^n \to \mathbb{R} $ is $c$-Lipschitz, if each variable can affect the value additively by at most $c$.}. Then for any
$t > 0$,
\[Pr[|f(X_1, \ldots, X_n)- \mathbb{E}[f(X_1, \ldots, X_n)]| \geq  t] \leq 2\exp\left(-\frac{t^2}{8k}\right).\]
\end{lemma}

\begin{lemma}[\bf{Irwin-Hall tail bound (Corollary 5 in \cite{zhang2020non})}]
\label{lmm:irwin}
Suppose $Y$ follows the Irwin-Hall distribution with parameter $k$, i.e., $Y = \sum\limits_{i=1}^kU_i$ where $U_i\sim \mathcal{U}\left[0,1\right]$. Denote $X = Y-\frac{k}{2}$. Then for $0<x<k/2$, 
\[\mathbb{P}\left[X\leq -x\right] =\mathbb{P}\left[X\geq x\right] \leq  \exp\left(-\frac{2x^2}{k}\right).\]
There also exists constants $0<c_0<1$, such that for all $\sqrt{k} \leq x \leq \frac{3k}{400}$, we have
\[\mathbb{P}\left[X\leq -x\right] =\mathbb{P}\left[X\geq x\right] \geq c_0 \cdot \exp\left(-978\frac{x^2}{k}\right).\]
\end{lemma}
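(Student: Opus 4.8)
The plan is to dispatch the three assertions in increasing order of difficulty. The claimed symmetry $\mathbb{P}[X\le -x]=\mathbb{P}[X\ge x]$ is immediate: since $1-U_i$ is again uniform on $[0,1]$, the vector $(1-U_1,\dots,1-U_k)$ has the same law as $(U_1,\dots,U_k)$, so $k-Y\overset{\mathrm d}{=}Y$ and hence $X=Y-k/2$ satisfies $X\overset{\mathrm d}{=}-X$. For the first displayed bound, write $X=\sum_{i=1}^{k}(U_i-\tfrac12)$ as a sum of independent random variables, the $i$-th supported on an interval of length $1$; Hoeffding's inequality gives $\mathbb{P}[X\ge x]\le \exp(-2x^2/k)$ for every $x>0$ (and the probability is $0$ once $x\ge k/2$), which together with the symmetry is exactly the stated upper bound.

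The substantive part is the tail \emph{lower} bound, for which I would run the classical exponential-tilting (Cram\'er) argument. Let $\Lambda(\theta)=\log\mathbb{E}\,e^{\theta(U_1-1/2)}=\log\frac{2\sinh(\theta/2)}{\theta}$ be the log-moment-generating function of a centered uniform; it is smooth, convex, even, with $\Lambda(0)=\Lambda'(0)=0$, $\Lambda''(0)=\tfrac1{12}$, and (Hoeffding's lemma) $\Lambda''\le\tfrac14$. Given $x$ with $\sqrt k\le x\le 3k/400$, set $a=(x+2\sqrt k)/k$, which is small, and let $\theta_x=(\Lambda')^{-1}(a)>0$; this is well defined since $\Lambda'$ is an increasing bijection onto $(-\tfrac12,\tfrac12)$, and $\theta_x$ is bounded and of order $a$. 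Under the product measure $\widehat{\mathbb{P}}$ with per-coordinate density $\propto e^{\theta_x u}$ on $[0,1]$ the coordinates are i.i.d., bounded, with $\widehat{\mathbb{E}}[Y]=k/2+x+2\sqrt k$ and $\widehat{\mathrm{Var}}(Y)=k\Lambda''(\theta_x)\le k/4$, so Chebyshev yields $\widehat{\mathbb{P}}[\,k/2+x\le Y\le k/2+x+4\sqrt k\,]\ge\tfrac{15}{16}$. The change-of-measure identity $\frac{d\mathbb{P}}{d\widehat{\mathbb{P}}}=A_{\theta_x}^{\,k}e^{-\theta_x Y}$ with $A_\theta=\int_0^1 e^{\theta u}\,du$, restricted to this event where $e^{-\theta_x Y}\ge e^{-\theta_x(k/2+x+4\sqrt k)}$, gives
\[
\mathbb{P}[Y\ge k/2+x]\;\ge\;\tfrac{15}{16}\,A_{\theta_x}^{\,k}\,e^{-\theta_x(k/2+x+4\sqrt k)} .
\]
Taking logarithms and using $k\log A_{\theta_x}-\theta_x k/2=k\Lambda(\theta_x)$ together with the conjugacy identity $k\Lambda(\theta_x)=\theta_x(x+2\sqrt k)-k\Lambda^{*}(a)$ (valid because $\Lambda'(\theta_x)=a$), the right-hand side telescopes to $\log\tfrac{15}{16}-2\theta_x\sqrt k-k\Lambda^{*}(a)$.

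It remains to estimate those two terms. In the small-argument regime one has $\Lambda^{*}(a)\le C_1 a^2$ and $\theta_x\le C_2 a$ for explicit absolute constants (both following from the quadratic behaviour of $\Lambda$ near $0$ and a crude bound on the remainder), whence $k\Lambda^{*}(a)\le C_1(x+2\sqrt k)^2/k\le 2C_1 x^2/k+8C_1$ and $2\theta_x\sqrt k\le 2C_2(x+2\sqrt k)\sqrt k/k\le 2C_2 x/\sqrt k+4C_2$. The hypothesis $x\ge\sqrt k$ then forces $x^2/k\ge1$ and $x/\sqrt k\le x^2/k$, so every additive constant and every $\sqrt k$-term is absorbed into a constant multiple of $x^2/k$; tracking the constants carefully gives $\mathbb{P}[X\ge x]\ge c_0\exp(-978\,x^2/k)$ (this bookkeeping is precisely the content of Corollary~5 of~\cite{zhang2020non}, which one may also simply cite). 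The main obstacle is this last, purely quantitative, step: a naive ``many $U_i$ close to $1$'' construction only yields a bound of the shape $e^{-\Theta(x)}$, which is far too weak here since $x$ can be as large as $\Theta(k)$, so one really is forced into the tilting route, and the only delicate point is controlling the non-asymptotic corrections to $\Lambda^{*}$, $\Lambda'$ and the tilted variance well enough to land on the stated constants.
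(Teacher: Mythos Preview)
The paper does not prove this lemma; it is listed in the ``Useful tools'' appendix as a direct citation of Corollary~5 in \cite{zhang2020non}, with no argument supplied. Your proposal therefore goes well beyond what the paper does: you give an actual proof sketch, whereas the paper simply imports the statement.

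Your argument is sound. The symmetry is immediate, the Hoeffding upper bound is textbook, and for the lower bound the exponential-tilting route you outline is exactly the standard Cram\'er-type proof of such non-asymptotic tail lower bounds (and is essentially what \cite{zhang2020non} does). Your computations check out: the choice $a=(x+2\sqrt k)/k$ shifts the tilted mean to $k/2+x+2\sqrt k$, Chebyshev under the tilt localises $Y$ to a window of width $4\sqrt k$ with probability $\ge 15/16$, and the change-of-measure identity collapses to $\log\tfrac{15}{16}-2\theta_x\sqrt k-k\Lambda^*(a)$ as you wrote. The final absorption of the additive constants into a multiple of $x^2/k$ using $x\ge\sqrt k$ is correct. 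One minor remark: with the natural constants $C_1\approx 6$ (from $\Lambda''(0)=1/12$) and $C_2\approx 12$, your bound is actually tighter than the stated $978$; this is not a defect, since establishing $\ge c_0\exp(-Cx^2/k)$ for any explicit $C$ implies the weaker inequality with $C=978$. Note also that the range $\sqrt k\le x\le 3k/400$ forces $k\gtrsim(400/3)^2$, so the ``small $a$'' regime you need is automatic.
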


\begin{theorem}[\bf{Concentration of linear functions over the Boolean slice~(Theorem 4.2.5 in \cite{polaczyk2023concentration})}]
\label{theo:concentration2}
Let $X_1, \ldots, X_n$ be $\{0, 1\}$ random variables conditioned on $\sum\limits_{i=1}^n X_i =k$. Let $f : \{0, 1\}^n \to \mathbb{R}$ be $f(\mbx) = \sum\limits_{i=1}^n\alpha_i\mbx_i$ with $\alpha_i\geq 0$ for all $i\in [n]$. Then for any
$t > 0$,
\[\mathbb{P}[|f(X_1, \ldots, X_n)- \mathbb{E}[f(X_1, \ldots, X_n)]| \geq  t] \leq 2\exp\left(-\frac{t^2}{16\sum\limits_{i=1}^k(\alpha_i^{\downarrow})^2}\right),\]
where for a finite sequence $x$, we denote by $x^\downarrow$ the non-increasing rearrangement of the elements of $x$.
\end{theorem}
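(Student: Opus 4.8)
The statement is Theorem~4.2.5 of \cite{polaczyk2023concentration}; here is how I would reprove it from scratch.

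\textbf{Reductions.} By relabelling the coordinates I may assume $\alpha_1\ge\alpha_2\ge\cdots\ge\alpha_n\ge 0$, so that $\alpha^\downarrow=\alpha$ and the target variance proxy is $A_k:=\sum_{i=1}^k\alpha_i^2$. In the symmetric case (all Bernoulli parameters equal — this is the case used in Lemma~\ref{lmm:unreach}) the conditioned vector $(X_1,\dots,X_n)$ is exactly the indicator vector of a uniformly random $k$-subset $S\subseteq[n]$, and it carries the whole difficulty; the general conditioned-Bernoulli law is strongly Rayleigh, hence negatively associated, and the same argument goes through. Writing $S_k:=f(X)=\sum_{i\in S}\alpha_i$, complementation $S\leftrightarrow[n]\setminus S$ shows $S_k-\mathbb{E}S_k$ has the same law as $\mathbb{E}S'_{n-k}-S'_{n-k}$ for the complementary sum $S'$, so I may also assume $k\le n/2$.

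\textbf{Permutation martingale.} Realize $S=\{\sigma(1),\dots,\sigma(k)\}$ for a uniform random permutation $\sigma$ and run the Doob martingale $Z_j:=\mathbb{E}[S_k\mid\sigma(1),\dots,\sigma(j)]$ for $j=0,1,\dots,k$, so $Z_0=\mathbb{E}f$ and $Z_k=f$. Conditioning on the remaining set $R_{j-1}=[n]\setminus\{\sigma(1),\dots,\sigma(j-1)\}$ (of size $n-j+1$, out of which the not-yet-revealed part of $S$ is a uniform subsample), a short computation gives the exact increment $Z_j-Z_{j-1}=\frac{n-k}{\,n-j\,}(\alpha_{\sigma(j)}-\overline{\alpha}_{R_{j-1}})$, where $\overline{\alpha}_R$ is the mean of $\{\alpha_i:i\in R\}$. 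Since $j\le k$ the prefactor is at most $1$, so $|Z_j-Z_{j-1}|\le\alpha_1$ almost surely; and, as $\sigma(j)$ is uniform on $R_{j-1}$ given $\mathcal{F}_{j-1}$,
\[ \mathbb{E}\bigl[(Z_j-Z_{j-1})^2\mid\mathcal{F}_{j-1}\bigr]\ \le\ \frac{1}{\,n-j+1\,}\sum_{i\in R_{j-1}}\alpha_i^2\ \le\ \frac{1}{\,n-j+1\,}\sum_{i=1}^{\,n-j+1}\alpha_i^2, \]
the last inequality using that $\alpha$ is sorted. Summing over $j\le k$, using $\frac1{n-j+1}\le\frac1{n-k+1}$ together with the sortedness bound $\sum_{i=1}^n\alpha_i^2\le\frac nk A_k$ and $k\le n/2$, bounds the total predictable variance by $\frac{n}{n-k+1}A_k\le 2A_k$.

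\textbf{From the martingale to the tail.} Feeding the bounded increments and the predictable-variance bound into a Freedman/Bernstein-type martingale inequality yields $\mathbb{P}[|f-\mathbb{E}f|\ge t]\le 2\exp(-t^2/(16A_k))$ for $t$ up to order $A_k/\alpha_1$. For larger $t$ I would use the deterministic range bound $|f-\mathbb{E}f|\le\sum_{i=1}^k\alpha_i\le\sqrt{kA_k}$: there the required exponent $t^2/(16A_k)$ never exceeds $k/16$, whereas $\{|f-\mathbb{E}f|\ge t\}$ forces $S$ to meet the top coordinates so heavily that (for $k\le n/2$) its probability is of order $\exp(-ck\log(n/k))$, comfortably smaller. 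Patching the two regimes and a union bound over the two one-sided deviations gives the claim. A route that avoids the patching is the entropy method of \cite{polaczyk2023concentration}: a modified log-Sobolev inequality for the Bernoulli--Laplace walk on $\binom{[n]}{k}$ produces a sub-Gaussian proxy of order $A_k$ directly.

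\textbf{Main obstacle.} The delicate point is not the sub-Gaussian \emph{shape} but the sharp proxy $\sum_{i=1}^k(\alpha_i^\downarrow)^2$: a plain Azuma bound on the same martingale, or Hoeffding's reduction of sampling without replacement to sampling with replacement, only yields the weaker $k(\alpha_1^\downarrow)^2$. Extracting the top-$k$ refinement is exactly what forces the conditional-variance analysis plus the two uses of sortedness above, and then the separate large-deviation estimate is needed to upgrade Freedman's sub-exponential tail to a genuinely sub-Gaussian one over the whole range of $t$; if the constants in that patching turned out unfavourable I would carry out the functional-inequality argument in full instead.
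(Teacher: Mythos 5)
The paper does not prove this statement at all: Theorem~\ref{theo:concentration2} is imported verbatim as Theorem~4.2.5 of \cite{polaczyk2023concentration} and used as a black box, so there is no in-paper argument to compare your route against. Judged on its own terms, your sketch gets the core bookkeeping right for the exchangeable case --- the Doob martingale over a uniform random permutation, the exact increment $Z_j-Z_{j-1}=\frac{n-k}{n-j}(\alpha_{\sigma(j)}-\overline{\alpha}_{R_{j-1}})$, and the two uses of sortedness that turn the predictable quadratic variation into $O(\sum_{i=1}^k(\alpha_i^{\downarrow})^2)$ --- but it has two genuine gaps. First, the reduction from general conditioned Bernoullis to the uniform $k$-subset case is asserted, not proved: the permutation representation and the increment formula rely on exchangeability, which fails for non-identical Bernoulli parameters, and ``strongly Rayleigh, hence negatively associated'' does not by itself deliver the top-$k$ variance proxy (negative association gives Hoeffding/Chernoff comparisons with the product measure, whose natural proxy is $\sum_i \alpha_i^2 p_i(1-p_i)$, not $\sum_{i\le k}(\alpha_i^{\downarrow})^2$). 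Since the theorem as stated covers arbitrary conditioned Bernoullis, this is a missing half of the proof, even though the paper's application in Lemma~\ref{lmm:unreach} only needs the uniform case.

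Second, the large-$t$ patching is not actually carried out. Freedman with increments bounded by $\alpha_1^{\downarrow}$ and variance $\le 2A_k$ gives the stated exponent only up to $t\asymp A_k/\alpha_1^{\downarrow}$, and beyond that it degenerates to the sub-exponential rate $\exp(-\Theta(t/\alpha_1^{\downarrow}))$; the intermediate window $[cA_k/\alpha_1^{\downarrow},\sqrt{kA_k}]$ is nonempty precisely when the weights are spread out. Your claim that the event there has probability $\exp(-ck\log(n/k))$ is neither derived nor compared pointwise in $t$ against the required $\exp(-t^2/(16A_k))$ (and when $n=\Theta(k)$ the combinatorial factor is only $\exp(-\Theta(k))$ with an unspecified constant, which does not obviously dominate $\exp(-k/16)$). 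You acknowledge both issues and defer to the modified log-Sobolev / entropy argument of \cite{polaczyk2023concentration}; that deferral is in effect the proof, so the sketch as written is not self-contained.
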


\begin{theorem}[\bf{Volume of cap~\cite{li2010concise}}]
\label{theo:volcap}
The volume of cap $V\subseteq \mathbb{R}^d$ with height $h$ and radius $r$ is given by 
\[V = \frac{1}{2}V_dr^n I_{(2rh-h^2)/r^2}\left(\frac{d+1}{2},\frac{1}{2}\right).\]
where $I_{x}(a,b)$ is the regularized incomplete beta function, $V_d$ is the volume of $d$-dimensional ball.
\end{theorem}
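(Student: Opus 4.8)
The plan is to compute the cap's volume by slicing it with hyperplanes orthogonal to its axis, recognising the resulting one–dimensional integral as an incomplete beta integral, and then matching constants through the Gamma function; it suffices (and is the regime in which the formula is used in this paper) to treat a cap of height $0 \le h \le r$. Without loss of generality take the ball to be $B = \{\mbx \in \mathbb{R}^d : \|\mbx\| \le r\}$ and the cap to be $V = \{\mbx \in B : \mbx_1 \ge r-h\}$, so that its height along the $\mbx_1$–axis is exactly $h$ while the sphere radius is $r$. For each fixed $\mbx_1 = u \in [r-h,\,r]$ the cross–section is the Euclidean $(d-1)$–ball of radius $\sqrt{r^2-u^2}$, whose volume is $V_{d-1}(r^2-u^2)^{(d-1)/2}$; Fubini's theorem therefore gives
\[
V \;=\; V_{d-1}\int_{r-h}^{r}(r^2-u^2)^{(d-1)/2}\,\mathrm{d}u .
\]

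Next I would perform the single substitution $s = 1 - u^2/r^2$. Under it $r^2-u^2 = r^2 s$ and $\mathrm{d}u = -\tfrac{r}{2}(1-s)^{-1/2}\,\mathrm{d}s$, while the endpoint $u = r$ maps to $s = 0$ and $u = r-h$ maps to $s = 1 - (r-h)^2/r^2 = (2rh-h^2)/r^2 =: x$ (here $r-h \ge 0$ makes $u\mapsto u^2$ monotone on the integration range, which is where the assumption $h\le r$ is used). The integral becomes
\[
V \;=\; \frac{V_{d-1}\,r^{d}}{2}\int_{0}^{x}s^{(d-1)/2}(1-s)^{-1/2}\,\mathrm{d}s \;=\; \frac{V_{d-1}\,r^{d}}{2}\,B\!\Big(\tfrac{d+1}{2},\tfrac12\Big)\,I_{x}\!\Big(\tfrac{d+1}{2},\tfrac12\Big),
\]
where the last equality is the definition $I_x(a,b) = \tfrac{1}{B(a,b)}\int_0^x t^{a-1}(1-t)^{b-1}\,\mathrm{d}t$ applied with $(a,b) = \big(\tfrac{d+1}{2},\tfrac12\big)$ (matching the exponents $s^{(d-1)/2}$ and $(1-s)^{-1/2}$ to these parameters is the one place to be careful).

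The final step is the constant identity $V_{d-1}\,B\!\big(\tfrac{d+1}{2},\tfrac12\big) = V_d$. Using $V_k = \pi^{k/2}/\Gamma(\tfrac k2+1)$, $B(a,b) = \Gamma(a)\Gamma(b)/\Gamma(a+b)$ and $\Gamma(\tfrac12)=\sqrt\pi$,
\[
V_{d-1}\,B\!\Big(\tfrac{d+1}{2},\tfrac12\Big) \;=\; \frac{\pi^{(d-1)/2}}{\Gamma(\tfrac{d+1}{2})}\cdot\frac{\Gamma(\tfrac{d+1}{2})\,\sqrt\pi}{\Gamma(\tfrac d2+1)} \;=\; \frac{\pi^{d/2}}{\Gamma(\tfrac d2+1)} \;=\; V_d ,
\]
which turns the previous display into $V = \tfrac12 V_d\, r^{d}\,I_{(2rh-h^2)/r^2}\!\big(\tfrac{d+1}{2},\tfrac12\big)$, the stated formula (the exponent printed ``$n$'' there being a typographical slip for $d$). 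There is no genuine obstacle here: the proof is elementary calculus plus Gamma–function bookkeeping, the only points deserving attention being the consistency of the substitution limits, the correct beta parameters $\big(\tfrac{d+1}{2},\tfrac12\big)$ (not $\big(\tfrac d2,\tfrac12\big)$), and the sanity checks $h=0 \Rightarrow x=0 \Rightarrow V=0$ and $h=r \Rightarrow x=1 \Rightarrow V = \tfrac12 V_d r^d$, i.e.\ a hemisphere, since $I_1 \equiv 1$.
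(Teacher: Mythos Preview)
Your derivation is correct and complete: the slicing-by-hyperplanes computation, the substitution $s = 1 - u^2/r^2$, the identification of the beta parameters $\big(\tfrac{d+1}{2},\tfrac12\big)$, and the Gamma-function identity $V_{d-1}\,B\!\big(\tfrac{d+1}{2},\tfrac12\big) = V_d$ are all handled cleanly, and the sanity checks at $h=0$ and $h=r$ confirm the formula. The paper itself does not prove this statement---it is listed among the ``useful tools'' and simply cited from~\cite{li2010concise}---so there is no in-paper argument to compare against; your proof is the standard elementary one and is exactly what one would expect.
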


\subsection{Initialization}

\begin{lemma}[\bf{Initialization~\cite[Lemma 29]{chewi2021analysis}}]
\label{lmm:init}
Suppose that $f$ is convex with $f (0) = 0$ and $\nabla f (0) = 0$, and assume that $\nabla f$ is $L$-Lipschitz. 
Consider distribution $\pi\propto \exp(-f)$.
Let $\mathsf{m} := \int_{\mathbb{R^d}} \left\|\cdot\right\| \mathrm{d}\pi$. Then, for $\mu_0 = \mathsf{normal}(0, L^{-1}I_d)$,  
\[\mathcal{R}_\infty(\mu_0 \| \pi) \leq  2 +\frac{d}{2}\ln(\mathsf{m}^2L),\]
where $\mathcal{R}_q(\mu \| \pi)$ is Ren\'yi divergence defined as, for $q\in (1,\infty)$, $R_q(\mu \| \pi) = \frac{1}{q-1}\ln\left\|\frac{\mathrm{d}\mu}{\mathrm{d}\pi}\right\|^q_{L^q(\pi)}$.
\end{lemma}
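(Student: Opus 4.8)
The plan is to bound $\mathcal{R}_\infty(\mu_0\|\pi) = \ln\big\|\tfrac{\mathrm{d}\mu_0}{\mathrm{d}\pi}\big\|_{L^\infty(\pi)}$ by controlling the density ratio \emph{pointwise}, which reduces everything to estimating the single remaining unknown, the normalizing constant $Z_f = \int_{\mathbb{R}^d}\exp(-f)\,\mathrm{d}\mbx$, in terms of the first moment $\mathsf{m}$.

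First I would write out the ratio explicitly. Since $\mu_0 = \mathsf{normal}(0, L^{-1}I_d)$ has density $(L/2\pi)^{d/2}\exp(-\tfrac{L}{2}\|\mbx\|^2)$ and $\pi$ has density $Z_f^{-1}\exp(-f)$,
\[
\frac{\mathrm{d}\mu_0}{\mathrm{d}\pi}(\mbx) \;=\; \Big(\tfrac{L}{2\pi}\Big)^{d/2} Z_f \, \exp\!\Big(f(\mbx) - \tfrac{L}{2}\|\mbx\|^2\Big).
\]
Convexity of $f$ together with $\nabla f(0)=0$ and $f(0)=0$ forces $f\ge 0$ (the origin is a global minimizer), while $L$-Lipschitzness of $\nabla f$ gives the quadratic majorant $f(\mbx)\le \tfrac{L}{2}\|\mbx\|^2$; hence the exponential factor is $\le 1$ everywhere and
\[
\mathcal{R}_\infty(\mu_0\|\pi) \;\le\; \tfrac{d}{2}\ln\tfrac{L}{2\pi} + \ln Z_f.
\]
Next I would bound $Z_f$ using only $f\ge 0$ (so $\exp(-f)\le 1$) and Markov's inequality on the radius: $\pi(\|\mbx\|\ge \sqrt2\,\mathsf{m}) \le \mathsf{m}/(\sqrt2\,\mathsf{m}) = 1/\sqrt2$, hence $\pi\big(B(0,\sqrt2\,\mathsf{m})\big) \ge 1-1/\sqrt2$, and therefore
\[
\big(1-\tfrac{1}{\sqrt2}\big) Z_f \;\le\; \int_{B(0,\sqrt2\,\mathsf{m})}\exp(-f)\,\mathrm{d}\mbx \;\le\; \mathrm{Vol}\big(B(0,\sqrt2\,\mathsf{m})\big) \;=\; \frac{(2\pi\mathsf{m}^2)^{d/2}}{\Gamma(d/2+1)}.
\]
Substituting back, the two $(2\pi)^{\pm d/2}$ contributions cancel and the choice $(\sqrt2)^2=2$ leaves exactly $\tfrac{d}{2}\ln(\mathsf{m}^2L)$ as the dimension-dependent part, so that
\[
\mathcal{R}_\infty(\mu_0\|\pi) \;\le\; \ln\tfrac{\sqrt2}{\sqrt2-1} - \ln\Gamma\big(\tfrac{d}{2}+1\big) + \tfrac{d}{2}\ln(\mathsf{m}^2L).
\]
Since $\Gamma$ is bounded below on $(0,\infty)$ by its minimum ($\approx 0.886$), the leading constant satisfies $\ln\tfrac{\sqrt2}{\sqrt2-1} - \ln\Gamma(\tfrac d2+1) = \ln(2+\sqrt2) - \ln\Gamma(\tfrac d2+1) < 2$, which is the claimed bound.

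The main obstacle is entirely one of constant bookkeeping: landing on the clean additive $2$ rather than an $O(d)$ term is \emph{not} automatic, and it is precisely why (i) the two $(2\pi)^{\pm d/2}$ factors must be cancelled against each other, (ii) the Markov radius must be taken to be exactly $\sqrt2\,\mathsf{m}$ so that no residual $\tfrac d2\ln(\mathrm{const})$ term survives, and (iii) one must invoke the uniform lower bound on $\ln\Gamma(d/2+1)$ (which is eventually large and positive, hence only helps). If one prefers not to fix the radius, the cleaner route is to keep it as a free parameter $r>1$, obtain $\mathcal{R}_\infty(\mu_0\|\pi) \le \ln\tfrac{r}{r-1} - \ln\Gamma(\tfrac d2+1) + \tfrac d2\ln(\tfrac{r^2}{2}\mathsf{m}^2L)$, and note that $r=\sqrt2$ is the convenient choice. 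Finally, I would remark that the stated bound is informative only in the regime $\mathsf{m}^2L = \Omega(1)$, which holds automatically here since Brascamp--Lieb gives $\mathbb{E}_\pi\|\mbx\|^2 \ge d/L$ and, for log-concave $\pi$, reverse-H\"older (Borell's lemma) makes $\mathsf{m}^2$ comparable to $\mathbb{E}_\pi\|\mbx\|^2$, whence $\mathsf{m}^2L\gtrsim 1$.
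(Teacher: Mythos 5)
Your proof is correct: the pointwise bound $0\le f(\mbx)\le \tfrac{L}{2}\|\mbx\|^2$ (from convexity with minimizer at $0$ and the descent lemma) reduces the problem to bounding $Z_f$, and the Markov-plus-ball-volume estimate at radius $\sqrt2\,\mathsf{m}$ gives $\ln Z_f\le \ln(2+\sqrt2)+\tfrac{d}{2}\ln(2\pi\mathsf{m}^2)-\ln\Gamma(\tfrac d2+1)$, so after the $(2\pi)^{\pm d/2}$ cancellation the additive constant is $\ln(2+\sqrt2)-\ln\Gamma(\tfrac d2+1)\le \ln(2+\sqrt2)-\ln(0.885)<2$ as claimed. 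Note that this paper does not prove Lemma~\ref{lmm:init} itself but imports it from \cite{chewi2021analysis}; your argument is essentially the standard one used there (pointwise density-ratio bound plus a first-moment Markov/volume control of the normalizing constant), so there is no discrepancy to reconcile.
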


R\'enyi divergence is monotonic in the order: if $1 < q \leq q'$
, then $\mathcal{R}_q \leq \mathcal{R}_{q'}$. 
We also note that if $q\to 1$, it is identical to the KL divergence, $H_\pi(\rho)$, and if $q=2$, it is related to the chi-squared divergence via  $\mathcal{R}_2(\rho\|\pi) = \ln\left(1+\chi^2_\pi(\rho)\right)$.

\begin{lemma}[\bf{Bouned second moment for strongly log-concave distributions,~\cite[Proposition 2]{dalalyan2022bounding}}]
\label{lmm:secstrong}
Suppose $\pi\propto \exp(-f)$ is $\alpha$-strongly log-concave with mode $\mbx^\star$, then it holds that $\int \left\|\cdot -\mbx^\star\right\|^2_2\mathrm{d}\pi\leq \frac{d}{\alpha}$.
\end{lemma}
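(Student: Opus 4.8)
The plan is to reduce the bound to a short consequence of an integration-by-parts identity together with the definition of strong convexity. First I would translate coordinates so that $\mbx^\star = 0$; since the mode of a log-concave density is the minimizer of $f$, first-order optimality gives $\nabla f(0) = 0$ (if $f$ is merely $\alpha$-strongly convex and not differentiable, I would first mollify it by convolution with a narrow Gaussian, which preserves $\alpha$-strong convexity and changes $Z_f$ and the second moment arbitrarily little, then pass to the limit at the end). I would then record the growth estimate that $\alpha$-strong convexity forces, namely $f(\mbx) \ge f(0) + \tfrac{\alpha}{2}\|\mbx\|^2$, hence $e^{-f(\mbx)} \le e^{-f(0)} e^{-\alpha\|\mbx\|^2/2}$. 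This Gaussian domination immediately implies $Z_f = \int e^{-f} < \infty$, that $\pi$ has finite moments of every order, and that the sphere integrals appearing in the divergence theorem below vanish (the surface area of $\partial B_R$ grows like $R^{d-1}$ while $\|\mbx\|\,e^{-f(\mbx)}$ decays like $R\,e^{-\alpha R^2/2}$).

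Next I would compute the key identity. Using $\nabla_\mbx e^{-f(\mbx)} = -e^{-f(\mbx)}\nabla f(\mbx)$ and the divergence theorem,
\begin{align*}
\int_{\mathbb{R}^d} \langle \mbx, \nabla f(\mbx)\rangle\, e^{-f(\mbx)}\,\mathrm{d}\mbx
&= -\int_{\mathbb{R}^d} \langle \mbx, \nabla_\mbx e^{-f(\mbx)}\rangle\,\mathrm{d}\mbx \\
&= \int_{\mathbb{R}^d} (\nabla\!\cdot\!\mbx)\, e^{-f(\mbx)}\,\mathrm{d}\mbx
= d \int_{\mathbb{R}^d} e^{-f(\mbx)}\,\mathrm{d}\mbx ,
\end{align*}
so dividing by $Z_f$ gives $\mathbb{E}_\pi[\langle X, \nabla f(X)\rangle] = d$. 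I would then invoke $\alpha$-strong convexity in the form $\langle \nabla f(\mbx) - \nabla f(0),\, \mbx - 0\rangle \ge \alpha \|\mbx\|^2$, which with $\nabla f(0) = 0$ yields the pointwise bound $\langle \mbx, \nabla f(\mbx)\rangle \ge \alpha\|\mbx\|^2$. Taking $\pi$-expectations and combining with the identity,
\begin{align*}
d = \mathbb{E}_\pi[\langle X, \nabla f(X)\rangle] \ge \alpha\, \mathbb{E}_\pi[\|X\|^2] = \alpha \int \|\cdot - \mbx^\star\|_2^2\,\mathrm{d}\pi ,
\end{align*}
which rearranges to the claimed inequality $\int \|\cdot - \mbx^\star\|_2^2\,\mathrm{d}\pi \le d/\alpha$ (and is sharp, as $\pi = \mathsf{normal}(0,\alpha^{-1}I_d)$ shows).

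The computation itself is only two lines, so I do not expect it to be the obstacle; the care is entirely in the bookkeeping around it — justifying the mollification reduction to the differentiable case, and confirming that the boundary terms in the divergence theorem genuinely vanish. Both follow from the uniform Gaussian upper bound on $e^{-f}$ derived in the first step, so once that estimate is in hand the argument closes cleanly.
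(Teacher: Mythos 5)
Your proof is correct. Note that the paper itself does not prove this lemma at all: it is imported verbatim as Proposition 2 of the cited reference (Dalalyan et al.), so there is no in-paper argument to compare against; what you have written is a valid self-contained derivation of the cited fact. The two ingredients — the Stein-type identity $\mathbb{E}_\pi[\langle X-\mbx^\star,\nabla f(X)\rangle]=d$ obtained from $\nabla\!\cdot\!\bigl(\mbx\,e^{-f}\bigr)$ and the divergence theorem, and the strong-monotonicity bound $\langle \mbx-\mbx^\star,\nabla f(\mbx)\rangle\geq \alpha\|\mbx-\mbx^\star\|^2$ using $\nabla f(\mbx^\star)=0$ — combine exactly as you say, and your bookkeeping is sound: the Gaussian domination $e^{-f(\mbx)}\leq e^{-f(\mbx^\star)}e^{-\alpha\|\mbx-\mbx^\star\|^2/2}$ kills the boundary term $R\int_{\partial B_R}e^{-f}\,\mathrm{d}S=O\bigl(R^{d}e^{-\alpha R^2/2}\bigr)$, and the integrand $\langle \mbx,\nabla f(\mbx)\rangle e^{-f(\mbx)}\geq 0$ lets monotone convergence pass from $B_R$ to $\mathbb{R}^d$. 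One small simplification you could make: the mollification step is avoidable, since a strongly convex $f$ is locally Lipschitz (hence $e^{-f}\in W^{1,1}_{\mathrm{loc}}$ with gradient defined a.e.), and the strong monotonicity of the subdifferential with $0\in\partial f(\mbx^\star)$ gives the pointwise lower bound directly; but as written the mollification reduction is also legitimate, since convolution with a symmetric Gaussian preserves $\alpha$-strong convexity. The sharpness remark via $\mathsf{normal}(0,\alpha^{-1}I_d)$ is a nice sanity check.
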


{
\subsection{Smoothing approximation}
\label{app:smooth}
We briefly recall the results shown in \cite{guzman2015lower}.
For $\chi > 0$ and $1$-Lipschitz continuous and convex function $f:E\to \mathbb{R}$, let 
\[\mathcal{S}_\chi[f](x) = \min\limits_{h\in \chi \mathsf{Dom}\phi } [f(x) + \chi \phi(h/\chi)],\]
where $\phi$ is smoothing kernel which is a twice continuously differentiable convex function defined on an open convex set
$\mathsf{Dom}\phi\subseteq E$ with the following properties,
\begin{enumerate}
    \item $0 \in\mathsf{Dom}\phi$ and $\phi(0) = 0$, $\phi'(0) = 0$;
    \item  There exists a compact convex set $G \subseteq \mathsf{Dom}\phi$ such that $0 \in \mathsf{int} G$ and $\phi(x) > \left\|x\right\| $ for all $x \in \partial G$.
    \item  For some $M_\phi < \infty$ we have $\forall e \in E, h \in G$,
    \[\langle e, \nabla^2\phi(h)e_i\rangle \leq  M_\phi \left\|e\right\|^2 .\]
\end{enumerate}
Then the smoothing approximation $\mathcal{S}_\chi[f](x)$ satisfies 
\begin{enumerate}
    \item  $\mathcal{S}_\chi[f](x)$ is convex and Lipschitz continuous with constant $1$ w.r.t. $\left\|\cdot \right\|$ and has a Lipschitz continuous gradient, with constant $M_\phi/\chi$, w.r.t. $\left\|\cdot \right\|$: for any $x,y$
\[\left\|\nabla \mathcal{S}_\chi[f](x) - \mathcal{S}_\chi[f](y)\right\|_* \leq \chi^{-1}M_\phi\left\|x-y\right\|.\]
\item  $\sup\limits_{x\in E} |f(x) - \mathcal{S}_\chi[f](x)| \leq  \chi \rho_{\left\|\cdot\right\|}(G)$, where $\rho_{\left\|\cdot\right\|}(G) = \max\limits_{h\in G}\left\|h\right\|$. Moreover, $f(x) \geq \mathcal{S}_\chi[f](x) \geq f(x) - \chi\rho_{\left\|\cdot \right\|}(G)$.
\item $\mathcal{S}_\chi[f]$ depends on f in a local fashion: the value and the derivative of $\mathcal{S}_\chi[f]$ at $x$ depends only on the
restriction of $f$ onto the set $x + \chi G$.
\end{enumerate}
If we choose $\chi = 1$, $\left\|\cdot \right\| = \left\|\cdot\right\|_2$, $\phi(x) = 2\left\|x\right\|_2^2$ with $M_\phi=1$ and $G = \{x:\left\|x\right\|\leq 1\}$, we obtain properties 1-3 in Theorem \ref{theo:smoothing}.
\\
Finally we show the minimum point will not change. By the definition of $S_1[f]$, 
\[S[f](x) = f(x + h(x)) + \phi(h(x)),\]
where $h : E \to G$ is well defined and solves the nonlinear system of equations
\[F(x, h(x)) = 0, F(x, h) := f'(x + h) + \phi'(h).\]
Also, we have
\[
\nabla S[f](x) = -\phi'(h(x)).
\]
Thus 
\[\nabla S[f](x)=0\Rightarrow h(x) = 0\Rightarrow F(x,h) = F(x,0) =f'(x)+\phi'(0)=f'(x)=0\Rightarrow x=0.\]
Furthermore, when $x=0$, we have 
\[F(0,h)= f'(h)+\phi'(h)=0,\] 
which implies $h=0$, Thus $S[f](0) = f(0+h(0))+\phi(h(0)) =0$.
}
\section{Proof of Theorem \ref{the:un1}}
\label{app:miss1}

\subsection{Proof of smooth case}

We consider the hardness functions $f_\mathcal{P}:\mathbb{R}^d \to \mathbb{R}$:
\[f_\mathcal{P}(\mbx) = S_1[g_\mathcal{P}](\mbx),\]
where $S_1$ and $g_\mathcal{P}$ are as defined in Section \ref{sec:hard1}, but we allow $L$ to be much smaller, such that 
$L\cdot  \left( |X^1| + \sum\limits_{i\in [r]}\max\left\{\left|X^{i}-X^{i+1}\right|-t,0\right\}  \right)\leq \Delta$ for all $\mbx\in B_{2M}$.

Similarly, we have the following characterization of the output.
\begin{lemma}
\label{lmm:unreach1}
For any randomized algorithm $\mathsf{A}$, any $\tau \leq r$,  and any initial point $\mbx^0$,  $X(\mathsf{A}[f_\mathcal{P},\mbx^0,\tau])$ takes form as 
\[(x_1,\ldots,x_\tau,x_{\tau},\ldots,x_{\tau}),\]
up to addictive error $\gO(t/2)$ with probability $1-d^{-\omega(1)}$ over $\mathcal{P}$. 
\end{lemma}

We omit the proof of Lemma \ref{lmm:unreach1} since it is almost the same as the proof of Lemma \ref{lmm:unreach}.

Now we are ready to prove the smooth case of Theorem \ref{the:un1}.

\paragraph{Verification of $f_\mathcal{P}$.} Since $g_\mathcal{P}$ is $1$-Lipschitz and convex, by Theorem \ref{theo:smoothing}, $S_1[g_\mathcal{P}]$ is $1$-smooth and convex. 

\paragraph{Bound of total variation distance.} 
We first estimate the normalizing constant as follows.
\begin{align*}
Z_{f_\mathcal{P}}~=~& \int_{\mathbb{R}^d}\exp\left(-S_\delta[g_\mathcal{P}](\mbx)\right)\mathrm{d}\mbx\\
~ \leq~ &\int_{\mathbb{R}^d}\exp\left(-g_\mathcal{P}(\mbx) + 1\right)\mathrm{d}\mbx\tag*{(by Property 1 of Theorem \ref{theo:smoothing})} \\
~\leq~&e \int_{\mathbb{R}^d\setminus B_M}\exp\left(-\left\|\mbx\right\|+M\right)\mathrm{d}\mbx + e \int_{ B_M}\mathrm{d}\mbx\tag*{(by definition of $g_\mathcal{P}$)} \\
 ~=~ &\exp(M+1) \frac{2 \Gamma^d(1/2)}{\Gamma(d/2)}\int_M^\infty t^{d-1} \exp(-t) \mathrm{d}t +  \frac{2 \Gamma^d(1/2)}{\Gamma(d/2)}\int_0^M t^{d-1}  \mathrm{d}t
 . \numberthis\label{eq:normalizing1}
\end{align*}
Consider a subset $S = \left\{\mbx\in \mathbb{R}^d: \left\|\mbx\right\|\leq M, X^{r+1}-X^r \geq t\right\}$.
By Lemma \ref{lmm:unreach1}, with probability $1-d^{-\omega(1)}$ over $\mathcal{P}$, 
\begin{align*}
\rho(\mathsf{A}[f_\mathcal{P},\mbx^0,r])(S) = 0.
\end{align*}
Also, by Property $1$ in Theorem \ref{theo:smoothing}, and definition of $f_\mathcal{P}$ and $g_\mathcal{P}$, we have for any $\mbx \in S$, 
\[ f_\mathcal{P}(\mbx)\leq g_\mathcal{P}(\mbx)  \leq \Delta.
\]
Thus, we have 
\begin{align*}
\pi_{f_\mathcal{P}}(S) = \frac{\int_S \exp(f_\mathcal{P}(\mbx))\mathrm{d}\mbx}{Z_{f_\mathcal{P}}} \geq \frac{\int_S \exp(-\Delta)\mathrm{d}\mbx}{Z_{f_\mathcal{P}}} = \frac{|S| \exp(-\Delta)}{Z_{f_\mathcal{P}}} .
\end{align*}
%
Recall $t = 8(M+1)\sqrt{\alpha \log d}$, we define the height of the sphere cap as 
\[h = M- \frac{t}{\sqrt{2d_0}} = M- 8(M+1) \sqrt{\frac{ \alpha \log d}{2d_0}}.\]
Let $b = (2Mh-h^2)/M^2 = 1-\left(1+\frac{1}{M}\right)^2\frac{64\alpha \log d}{d_0}$. By Lemma \ref{theo:volcap} and Eq.~\eqref{eq:normalizing1}, we have 
\begin{align*}
\pi_{f_\mathcal{P}}(S) ~\geq ~& \frac{|S| \exp(-\Delta)}{Z_{f_\mathcal{P}}}  \\
~\geq  ~&\frac{|S|}{V_d M^d} \cdot  \frac{  \frac{2 \Gamma^d(1/2)}{\Gamma(d/2)}\int_0^M t^{d-1}  \mathrm{d}t}{\exp(M+1) \frac{2 \Gamma^d(1/2)}{\Gamma(d/2)}\int_M^\infty t^{d-1} \exp(-t) \mathrm{d}t +  \frac{2 \Gamma^d(1/2)}{\Gamma(d/2)}\int_0^M t^{d-1}  \mathrm{d}t} \cdot \exp(-\Delta)\\
~= ~&\frac{I_{b}\left(\frac{d+1}{2},\frac{1}{2}\right) }{2}  \cdot  \frac{ \int_0^M t^{d-1}  \mathrm{d}t}{\exp(M+1)\int_M^\infty t^{d-1} \exp(-t) \mathrm{d}t +  \int_0^M t^{d-1}  \mathrm{d}t} \cdot \exp(-\Delta)\\
~= ~&\frac{I_{b}\left(\frac{d+1}{2},\frac{1}{2}\right) }{2}  \cdot  \frac{ \int_0^{d-1} t^{d-1}  \mathrm{d}t}{\exp(d)\int_{d-1}^\infty t^{d-1} \exp(-t) \mathrm{d}t +  \int_0^{d-1} t^{d-1}  \mathrm{d}t} \cdot \exp(-\Delta), \numberthis\label{eqapp:0}
\end{align*}
by taking $M = d-1$.

We first estimate $\int_{d-1}^\infty t^{d-1} \exp(-t) \mathrm{d}t$ as 
\begin{align*}
    \lim\limits_{d\to \infty}\frac{\int_{d-1}^\infty t^{d-1} \exp(-t) \mathrm{d}t}{\Gamma(d)} =  \lim\limits_{d\to \infty}\frac{\Gamma(d,d-1)}{\Gamma(d)}  = \frac{1}{2}.
\end{align*}
Thus for sufficiently large $d$, we have $\int_{d-1}^\infty t^{d-1} \exp(-t) \mathrm{d}t\leq 2\Gamma(d-1)$, which implies that 
\[\frac{ \int_0^{d-1} t^{d-1}  \mathrm{d}t}{\exp(d)\int_{d-1}^\infty t^{d-1} \exp(-t) \mathrm{d}t +  \int_0^{d-1} t^{d-1}  \mathrm{d}t}\geq \frac{(d-1)^{d}}{2\exp(d)  \Gamma(d+1) + (d-1)^{d}}.\]
Furthermore, by Stirling's approximation, $\Gamma(d+1) =  \sqrt{2\pi d}\left(\frac{d}{e}\right)^{d}\left(1+O\left(\frac{1}{d}\right)\right)$, we have
\begin{align*}
\frac{ \int_0^{d-1} t^{d-1}  \mathrm{d}t}{\exp(d)\int_{d-1}^\infty t^{d-1} \exp(-t) \mathrm{d}t +  \int_0^{d-1} t^{d-1}  \mathrm{d}t}~\geq~& \frac{(d-1)^{d}}{2\exp(d) d \Gamma(d+1) + (d-1)^{d}}\\
~\geq~& \frac{(d-1)^{d}}{4\exp(d) \sqrt{2\pi d}\left(\frac{d}{e}\right)^{d} + (d-1)^{d}}    \\
~=~& \frac{(d-1)^{d}}{4\sqrt{2\pi d}\cdot d^d+ (d-1)^{d}}    \\
~=~& \frac{(1-1/d)^{d}}{4\sqrt{2\pi d}+ (1-1/d)^{d}}    \\
~\geq~& \frac{1}{8e\sqrt{2\pi d}}    \numberthis\label{eqapp:1}
\end{align*}


We also recall $d_0 = \log^3 d$ and let $d_0 = \omega(\alpha \log d)$.
For sufficient large $d$, for any fixed $t\in (0,1)$, 
we have $I_{b}\left(\frac{d+1}{2},\frac{1}{2}\right)\geq \sqrt{\frac{t}{1-t}} \cdot  \frac{2}{\sqrt{\pi}} \cdot \frac{t^{d/2}}{d+2}$.
Combining Eq.~\eqref{eqapp:0},\eqref{eqapp:1}, we have 
\begin{align*}
\pi_{f_\mathcal{P}}(S)~\geq~& \sqrt{\frac{t}{1-t}} \cdot  \frac{2}{\sqrt{\pi}} \cdot \frac{t^{d/2}}{d+2} \cdot  \frac{1}{8e\sqrt{2\pi d}} \exp\left(-\Delta\right) \\
~\geq~ & \frac{{t^{d/2}}}{4e\pi\sqrt{2} \exp(\Delta){(d+2)}}\numberthis\label{eqapp:99}.
\end{align*}

Thus, there exists a constant $c$ such that with probability $1-d^{-\omega(1)}$ over $\mathcal{P}$, 
\[\mathsf{TV}(\rho(\mathsf{A}[f_\mathcal{P},\mbx^0,r]),\pi_{f_\mathcal{P}}) \geq {\Omega}(c^d).\]

\paragraph{Initial condition}
Finally, we estimate the upper bound of the second moment of $\pi_{f_\mathcal{P}}$ for any $\mathcal{P}$,
\begin{align*}
\mathsf{m}_2 ~=~& \int_{\mathbb{R}^d} \left\|\mbx\right\|^2 \pi_{f_\mathcal{P}}(\mbx)\mathrm{d}\mbx\\
~=~& \frac{1}{Z_{f_\mathcal{P}}}\int_{\mathbb{R}^d} \left\|\mbx\right\|^2 \exp(-f_\mathcal{P}(\mbx))\mathrm{d}\mbx.
\end{align*}
We first upper bound the integral as 
\begin{align*}
\int_{\mathbb{R}^d} \left\|\mbx\right\|^2 \exp(-f_\mathcal{P}(\mbx))\mathrm{d}\mbx
~\leq~& \int_{\mathbb{R}^d} \left\|\mbx\right\|^2 \exp(-g_\mathcal{P}(\mbx)+1)\mathrm{d}\mbx\tag*{(By Property 1 of Theorem \ref{theo:smoothing})}\\
~\leq~& e\int_{\mathbb{R}^d\setminus B_M} \left\|\mbx\right\|^2 \exp(-\left\|\mbx\right\|+M)\mathrm{d}\mbx + e\int_{B_M} \left\|\mbx\right\|^2 \mathrm{d}\mbx \\
~=~& e\frac{2 \Gamma^d(1/2)}{\Gamma(d/2)}\int_M^\infty t^{d-1}t^2 \exp(-t+M) \mathrm{d}t + e\frac{2 \Gamma^d(1/2)}{\Gamma(d/2)}\int_0^M t^{d-1}t^2  \mathrm{d}t
\end{align*}
The second inequality is implied from $g_\mathcal{P}(\mbx) \geq \max\{\left\|\mbx\right\|-M, 0\}$.

Then we lower bound the normalization constant as 
\begin{align*}
Z_{f_\mathcal{P}}~=~& \int_{\mathbb{R}^d} \exp(-f_\mathcal{P}(\mbx))\mathrm{d}\mbx\\
~\geq~& \int_{\mathbb{R}^d} \exp(-g_\mathcal{P}(\mbx))\mathrm{d}\mbx\tag*{(By Property 1 of Theorem \ref{theo:smoothing})}\\
~\geq~& \int_{\mathbb{R}^d\setminus B_{M+\Delta}}\exp(-\left\|\mbx\right\|+M)\mathrm{d}\mbx + \int_{ B_{M+\Delta}}\exp(-\Delta)\mathrm{d}\mbx  \\
~=~& \frac{2 \Gamma^d(1/2)}{\Gamma(d/2)}\int_{M+\Delta}^\infty t^{d-1} \exp(-t+M) \mathrm{d}t +  \frac{2 \Gamma^d(1/2)}{\Gamma(d/2)}\int_0^{M+\Delta } t^{d-1} \exp(-\Delta) \mathrm{d}t.
\end{align*}
The second inequality is implied from $g_\mathcal{P}(\mbx) \leq \max\left\{\left\|\mbx\right\|-M, \Delta\right\}$, with sufficiently small $L$ such that $L\cdot  \left( |X^1| + \sum\limits_{i\in [r]}\max\left\{\left|X^{i}-X^{i+1}\right|-t,0\right\}  \right)\leq \Delta$ for all $\mbx\in B_{2M}$.

Now the goal is to bound
\[\frac{\int_M^\infty t^{d+1} \exp(-t+M) \mathrm{d}t + \int_0^M t^{d+1}  \mathrm{d}t}{\int_{M+\Delta}^\infty t^{d-1} \exp(-t+M) \mathrm{d}t + \int_0^{M+\Delta} t^{d-1} \exp(-\Delta) \mathrm{d}t}.\]
For the first term in the numerator, we have
\begin{align*}
&\int_M^\infty t^{d+1} \exp(-t+M) \mathrm{d}t\\
~=~& e^{M}\int_M^\infty t^{d+1} \exp(-t) \mathrm{d}t\\
~=~& e^{d/2} \Gamma\left(d+1,\frac{d}{2}\right)\tag*{(By $M = \frac{d}{2}$)}\\
~=~&e^{d/2} d!e^{-d/2}e_{d}\left(\frac{d}{2}\right)\tag*{(By $\Gamma(n+1,z) = n!e^{-z}e_n(z)$)}\\
~=~& d!e_{d}\left(\frac{d}{2}\right)\\
~\leq~& 2d!e^{d/2}  \tag*{(By $e_{d}\left(\frac{d}{2}\right)\sim e^{d/2}$)}
\end{align*}
where $e_n(x) = \sum\limits_{k=0}^n \frac{x^k}{k!}$ is the truncated Taylor series for the exponential function, $M=\frac{d}{2}$, and  $d$ is sufficiently large.
For the second term in the numerator, by Stirling's approximation, $d!\sim \sqrt{2\pi d}\left(\frac{d}{e}\right)^d$, we have 
\[\int_0^M t^{d-1}t^2  \mathrm{d}t = \frac{1}{d+2} \left(\frac{d}{2}\right)^{d+2} = \gO(d!e^{d/2}) .\]
For the first term in the denominator, we have 
\begin{align*}
&\int_{M+\Delta}^\infty t^{d-1} \exp(-t+M) \mathrm{d}t\\
~=~& e^{M}\int_{M+\Delta}^\infty t^{d-1} \exp(-t) \mathrm{d}t\\
~=~& e^{d/2} \Gamma\left(d-1,\frac{d}{2}+\Delta\right)\tag*{(By $M = \frac{d}{2}$)}\\
~=~&e^{d/2} (d-2)!e^{-d/2-\Delta}e_{d-2}\left(\frac{d}{2}+\Delta \right)\tag*{(By $\Gamma(n+1,z) = n!e^{-z}e_n(z)$)}\\
~=~& e^{-\Delta}(d-2)!e_{d-2}\left(\frac{d}{2}+\Delta\right)\\
~\geq~&\frac{e^{-\Delta} (d-2)!e^{d/2} (1-o(1))}{2}    \tag*{(By $e_{d}\left(\frac{d}{2}\right)\sim e^{d/2}$)}
\end{align*}
Thus 
\[\lim\limits_{d\to \infty}\mathbb{E}_{\pi_{f_\mathcal{P}}}\left[\left\|\mbx\right\| ^2\right]\leq 4e\frac{d!e^{d/2}(1+\gO(1))}{e^{-\Delta}(d-2)!e^{d/2} (1-o(1))} =4e^{\Delta+1} d(d+1).\]
By Lemma \ref{lmm:init}, we have the initialization condition of Ren\'yi divergence as,
\[\mathsf{R}_{\infty}(\mu_0\|\pi_{f_\mathcal{P}}) = \widetilde{\gO}(d).\]

\subsection{Proof of Lipschitz case }

We consider the hardness functions $f_\mathcal{P}:\mathbb{R}^d \to \mathbb{R}$:
\[f_\mathcal{P}(\mbx) = g_\mathcal{P}(\mbx),\]
where $ g_\mathcal{P}:\mathbb{R}^d \to \mathbb{R}$ is defined as Appendix \ref{app:proofun2}.

Similarly, we have the following characterization of the output.
\begin{lemma}
\label{lmmapp:unreach3}
For any randomized algorithm $\mathsf{A}$, any $\tau \leq r$,  and any initial point $\mbx^0$,  $X(\mathsf{A}[f_\mathcal{P},\mbx^0,\tau])$ takes form as 
\[(x_1,\ldots,x_\tau,x_{\tau},\ldots,x_{\tau}),\]
up to addictive error $O(t/2)$ with probability $1-d^{-\omega(1)}$ over $\mathcal{P}$. 
\end{lemma}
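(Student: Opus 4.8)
The plan is to reproduce the proof of Lemma~\ref{lmm:unreach} line by line, dropping every step that invokes the smoothing operator, since here $f_\mathcal{P}=g_\mathcal{P}$ directly (with $g_\mathcal{P}$ the hard potential from Appendix~\ref{app:proofun2}, which is $1$-Lipschitz, convex, and again has the max-of-chain-plus-ball form $\max\{L(|X_1|+\sum_{i\in[r]}\max\{|X_i-X_{i+1}|-t,0\}),\ \|\mbx\|-M\}$, dominated by its right branch outside $B_{2M}$ and by its left branch inside $B_M$). Fix $\tau\le r$. We would first condition on the random bits of $\mathsf{A}$ so that it becomes deterministic and draw $\mathcal{P}$ uniformly among the fixed-size partitions; the randomized bound then follows at the end as a convex combination of the deterministic ones. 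The core is an induction on $l$: with probability $1-d^{-\omega(1)}$ over $\mathcal{P}$, the computation path and the batch $Q^l$ issued in round $l$ depend only on $P_1,\dots,P_{l-1}$.

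For the inductive step we would let $\mathcal{E}_l$ be the event that every answer to a query $\mbx\in Q^l$ equals the truncated potential $g_\mathcal{P}^l(\mbx):=\max\{L(|X_1|+\sum_{i\in[l-1]}\max\{|X_i-X_{i+1}|-t,0\}),\ \|\mbx\|-M\}$, which involves only $X_1,\dots,X_l$. Because no smoothing is applied, Property~3 of Theorem~\ref{theo:smoothing} and the ball $B_1(\mbx)$ drop out entirely: it suffices to show $g_\mathcal{P}(\mbx)=g_\mathcal{P}^l(\mbx)$ at the query point $\mbx$ itself. If $\|\mbx\|\ge 2M$ the right branch dominates, so $g_\mathcal{P}(\mbx)=\|\mbx\|-M=g_\mathcal{P}^l(\mbx)$. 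If $\|\mbx\|\le 2M$, then given $\mathcal{E}_1,\dots,\mathcal{E}_{l-1}$ the set $Q^l$ is fixed once $P_1,\dots,P_{l-1}$ are; conditioning further on $P_1,\dots,P_l$, the partition of $[d]\setminus\bigcup_{i\le l}P_i$ is uniform, and writing $X_i(\mbx)=\sum_j Y_j\mbx_j$ with $Y_j=\mathbf{1}\{j\in P_i\}$, Theorem~\ref{theo:concentration2} with $t=8(M+1)\sqrt{\alpha\log d}$ yields
\[
\mathbb{P}_{\mathcal{P}}\big[\,|X_i(\mbx)-\mathbb{E} X_i(\mbx)|\ge t/2\,\big]\ \le\ 2\exp(-\alpha\log d)\ =\ 2\,d^{-\omega(1)}
\]
for every $i\ge l$. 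Since $\mathbb{E} X_i(\mbx)=\mathbb{E} X_{i+1}(\mbx)$ for all such $i$, a union bound over the $O(r)$ indices forces $\max\{|X_i(\mbx)-X_{i+1}(\mbx)|-t,0\}=0$ for every $i\ge l$, hence $g_\mathcal{P}(\mbx)=g_\mathcal{P}^l(\mbx)$, with probability $1-rd^{-\omega(1)}$. Union-bounding over the $\mathsf{poly}(d)$ queries of $Q^l$ and then multiplying the conditional probabilities along the chain gives $\mathbb{P}[\mathcal{E}_l]\ge 1-r^2\,\mathsf{poly}(d)\,d^{-\omega(1)}=1-d^{-\omega(1)}$, closing the induction.

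It then follows that with probability $1-d^{-\omega(1)}$ over $\mathcal{P}$ the output $\mathsf{A}[f_\mathcal{P},\mbx^0,\tau]$ depends only on $P_1,\dots,P_\tau$, and a final application of Theorem~\ref{theo:concentration2} to this output point shows that the coordinates $X_\tau,X_{\tau+1},\dots,X_{r+1}$ concentrate around their common conditional mean and hence agree up to additive error $O(t/2)$; combined with the convex-combination argument over the random bits, this gives the claimed form $(x_1,\dots,x_\tau,x_\tau,\dots,x_\tau)$. The only step that is not a verbatim copy of the proof of Lemma~\ref{lmm:unreach} is a bookkeeping check that the potential from Appendix~\ref{app:proofun2} has its threshold $t=8(M+1)\sqrt{\alpha\log d}$ calibrated to its radius $M$, so that the exponent in the concentration bound is still $\alpha\log d$ and the right branch $\|\mbx\|-M$ still dominates outside $B_{2M}$; since that potential is designed with exactly these parameters, there is no genuine obstacle, and the proof is indeed ``almost the same.''
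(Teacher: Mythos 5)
Your proposal is correct and is essentially the paper's own (omitted) argument: it mirrors the proof of Lemma~\ref{lmm:unreach} with the smoothing step and the ball $B_1(\mbx)$ removed, exactly as the paper does explicitly for the analogous Lemma~\ref{lmmapp:unreachcomp}. The only discrepancy is bookkeeping of the threshold (Appendix~\ref{app:proofun2} sets $t=8M\sqrt{\alpha\log d}$ rather than $8(M+1)\sqrt{\alpha\log d}$), which only changes the concentration exponent by a constant factor and does not affect the conclusion.
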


We omit the proof of Lemma \ref{lmmapp:unreach3} since it is almost the same as the proof of Lemma \ref{lmm:unreach}.

Now we are ready to prove the Lipschitz case of  Theorem \ref{the:un1}.

\paragraph{Verification of $f_\mathcal{P}$.} It is clear that $g_\mathcal{P}$ is $1$-Lipschitz and convex.

\paragraph{Bound of total variation distance.} 
We omit the proof since it is almost the same as the smooth case except for scaling with constant $e$ due to the smoothing operator.

\paragraph{Initial condition.} We omit the proof since it is almost the same as the smooth case except for scaling with constant $e$.

\section{Proof of Theorem \ref{the:un2}}
\label{app:proofun2}


We consider the hardness functions $f_\mathcal{P}:\mathbb{R}^d \to \mathbb{R}$:
\[f_\mathcal{P}(\mbx) = g_\mathcal{P}(\mbx)+\frac{1}{2}\left\|\mbx\right\|^2,\]
where $ g_\mathcal{P}:\mathbb{R}^d \to \mathbb{R}$ is defined as,
\begin{equation*}
 g_\mathcal{P}(\mbx)~=~\max \left\{L\cdot  \left( |X^1| + \sum\limits_{i\in [r]}\max\left\{\left|X^{i}-X^{i+1}\right|-t,0\right\}  \right), \left\|\mbx\right\|-M\right\},
\end{equation*}
where $t = 8M\sqrt{\alpha \log d}$ and $L = \frac{1}{4\sqrt{d}}$. 

Similarly, we have the following characterization of the output.
\begin{lemma}
\label{lmmapp:unreachcomp}
For any randomized algorithm $\mathsf{A}$, any $\tau \leq r$,  and any initial point $\mbx^0$,  $X(\mathsf{A}[f_\mathcal{P},\mbx^0,\tau])$ takes form as 
\[(x_1,\ldots,x_\tau,x_{\tau},\ldots,x_{\tau}),\]
up to addictive error $\gO(t/2)$ with probability $1-d^{-\omega(1)}$ over $\mathcal{P}$. 
\end{lemma}

\begin{proof}[Proof of Lemma \ref{lmmapp:unreachcomp}]
We fixed $\tau$ and prove the following by induction for $l\in [\tau]$: With high probability, the computation
path of the (deterministic) algorithm $\mathsf{A}$ and the queries it issues in the $l$-th round are determined by $P_{1},\ldots,P_{l-1}$.

As a first step, we assume the algorithm is deterministic by fixing its random bits and
choose the partition of $\mathcal{P}$ uniformly at random.

To prove the inductive claim, let $\mathcal{E}_l$ denote the event that
for any query $\mbx$ issued by $\mathsf{A}$ in iteration $l$, the answer is in the form $g_\mathcal{P}^l(\mbx) + \frac{1}{2}\left\|\mbx\right\|^2$ where $g_\mathcal{P}^l:\mathbb{R}^d\to \mathbb{R}$ defined as:
\[ g_\mathcal{P}^l(\mbx) = 
\max \left\{L\cdot  \left( |X^1| + \sum\limits_{i\in [l-1]}\max\left\{\left|X^{i}-X^{i+1}\right|-t,0\right\}  \right), \left\|\mbx\right\|-M\right\},
\] 
i.e., $\mathcal{E}_l$ represents the events that  $\forall \mbx \in Q^l$, $f_\mathcal{P}(\mbx )  = g_\mathcal{P}(\mbx) + \frac{1}{2}\left\|\mbx\right\|^2 = g_\mathcal{P}^l(\mbx) + \frac{1}{2}\left\|\mbx\right\|^2$.

Since the queries in round $l$ depend only on $P_{1},\ldots,P_{l-1}$, if $\mathcal{E}_l$ occurs, the entire computation path in round $l$ is determined by $P_{1},\ldots,P_{l}$. 
By induction, we conclude that if all of $\mathcal{E}_1, \ldots , \mathcal{E}_l$ occur, the computation path in round $l$ is determined by $P_{1},\ldots,P_{l}$.

Now we analysis the conditional probability $P\left[\mathcal{E}_l\mid \mathcal{E}_1,\ldots,\mathcal{E}_{l-1}\right]$.
\textbf{Case 1:} $\left\|\mbx\right\|\leq 2M$.
Given all of $\mathcal{E}_1,\ldots,\mathcal{E}_{l-1}$ occur so far, we can claim that $Q^l$ is determined by $P_1,\ldots,P_l$.
Conditioned on $P_1,\ldots,P_l$, the partition of $[d]\setminus \mathop{\bigcup}\limits_{i\in [l]}P_i$ is uniformly random. 
We consider $\{0,1\}$-random variable $Y_j$, $j\in [d]\setminus \mathop{\bigcup}\limits_{i\in [l]}P_i$.
We represent $X^i(\mbx)$ as a linear function of $Y_i$s as  $X^i(\mbx) = \sum\limits_{j\in [d]\setminus \mathop{\bigcup}\limits_{i\in [l]}P_i} Y_j \mbx_j$ such that $Y_i=1$ if $Y_i \in P_i$ and $Y_i= 0$ otherwise. 
By the concentration of linear functions over the Boolean slice (Theorem \ref{theo:concentration2}), and recall $t = 8M\sqrt{\alpha \log d}$, we have
\begin{align*}
 \mathbb{P}_{\mathcal{P}}\left[|X^i(\mbx)- \mathbb{E}[X^i(\mbx)]| \geq  \frac{t}{2}\right]
 ~\leq~& 2\exp\left(-\frac{t^2}{16(2M)^2}\right)\\
~= ~&2\exp\left(-\frac{64M^2 \alpha \log d }{64 M^2}\right)\\
~=~& 2\exp\left(-{\alpha\log d }\right) =2d^{-\omega(1)}.   
\end{align*}
Similarly, $\mathbb{P}\left[|X^{i+1}(\mbx)- \mathbb{E}[X^{i+1}(\mbx)]| \geq  \frac{t}{2}\right] \leq  2d^{-\omega(1)}$. 
Combining the fact that $\mathbb{E}[X^i(\mbx)] = \mathbb{E}[X^{i+1}(\mbx)]$, we have with probability at least  $1-d^{-\omega(1)}$, for any fixed $i\geq l$
\[\max\left\{\left|X^{i}(\mbx)-X^{i+1}(\mbx)\right|-t,0\right\} = 0,\]
which implies $g_\mathcal{P}(\mbx) = g_\mathcal{P}^l(\mbx)$ with a probability at least $1-rd^{-\omega(1)}$.

\textbf{Case 2: } $\left\|\mbx\right\|\geq 2M$.
We have 
\[g_\mathcal{P}(\mbx) = \left\|\mbx\right\|-M = g_\mathcal{P}^l(\mbx).\]
Combining these two cases, we have for any fixed query $\mbx$, 
with a probability at least $1-rd^{-\omega(1)}$, we have $g_\mathcal{P}(\mbx) = g_\mathcal{P}^l(\mbx)$.

By union bound over all queries $\mbx\in Q^l$, conditioned on that $\mathcal{E}_1,\ldots,\mathcal{E}_{l-1}$ occur,  with probability at least $1-r\mathsf{poly}(d)d^{-\omega(1)}$, $\mathcal{E}_l$ occurs.
Therefore by induction, 
\begin{align*}
    P(\mathcal{E}_l) ~=~& P(\mathcal{E}_l|\mathcal{E}_1,\ldots,\mathcal{E}_{l-1})P(\mathcal{E}_{l-1}|\mathcal{E}_1,\ldots,\mathcal{E}_{l-2})\ldots P(\mathcal{E}_{2}|\mathcal{E}_1) P(\mathcal{E}_1)\\
    ~\geq~& 1-r^2\mathsf{poly}(d)d^{-\omega(1)} = 1-d^{-\omega(1)}.
\end{align*}
This implies that with high probability, the computation path in round $l$ is determined by $P_1,\ldots,P_{l-1}$. Consequently,  for all $l\in [\tau]$ a solution returned after $l - 1$ rounds is determined by $P_1,\ldots,P_{l-1}$ with high probability.
By the same concentration argument, the solution is with a probability at least $1-d^{-\omega(1)}$ in the form 
\[(x_1,\ldots,x_\tau,x_{\tau},\ldots,x_{\tau}),\]
up to an additive error $\gO(t/2)$ in each coordinate.

Finally, we note that by allowing the algorithm to use random bits, the results are
a convex combination of the bounds above, so the same high-probability bounds are satisfied.
\end{proof}

Now we are ready to prove Theorem \ref{the:un2}.

\paragraph{Verification of $f_\mathcal{P}$.} Since $g_\mathcal{P}$ is $1$-Lipschitz and convex, $g_\mathcal{P}+\frac{1}{2}\left\|\cdot \right\|^2$ is $1$-Lipschitz and $1$-strongly convex. 

\paragraph{Bound of total variation distance.} 
We first estimate the normalizing constant as follows.
\begin{align*}
Z_{f_\mathcal{P}}~=~& \int_{\mathbb{R}^d}\exp\left(-g_\mathcal{P}(\mbx) - \frac{1}{2}\left\|\mbx\right\|^2\right)\mathrm{d}\mbx\\
~\leq~& \int_{\mathbb{R}^d}\exp\left(-\left\|\mbx\right\|+M - \frac{1}{2}\left\|\mbx\right\|^2\right)\mathrm{d}\mbx\tag*{(by definition of $g_\mathcal{P}$)} \\
~\leq~& \int_{\mathbb{R}^d}\exp\left(M - \frac{1}{2}\left\|\mbx\right\|^2\right)\mathrm{d}\mbx\\
 ~=~ &\exp(M) \frac{2 \Gamma^d(1/2)}{\Gamma(d/2)}\int_0^\infty t^{d-1} \exp(-t^2/2) \mathrm{d}t = \exp(M) \cdot  (2\pi)^{d/2}
 . \numberthis\label{eq:normalizing3}
\end{align*}
Consider a subset $S = \left\{\mbx\in \mathbb{R}^d: \left\|\mbx\right\|\leq M, X_{r+1}-X_r \geq t\right\}$.
By Lemma \ref{lmmapp:unreachcomp}, with probability $1-d^{-\omega(1)}$ over $\mathcal{P}$, 
\begin{align*}
\rho(\mathsf{A}[f_\mathcal{P},\mbx^0,r])(S) = 0.
\end{align*}
Also, by  definition of $f_\mathcal{P}$ and $g_\mathcal{P}$, we have for any $\mbx \in S$, 
\[ f_\mathcal{P}(\mbx) =  g_\mathcal{P}(\mbx) + \frac{1}{2}\left\|\mbx\right\|^2  \leq \frac{1}{2}\left\|\mbx\right\|  + \frac{1}{2}\left\|\mbx\right\|^2\leq \frac{M + M^2}{2}\leq M^2.
\]
Thus, we have 
\begin{align*}
\pi_{f_\mathcal{P}}(S) = \frac{\int_S \exp(f_\mathcal{P}(\mbx))\mathrm{d}\mbx}{Z_{f_\mathcal{P}}} \geq \frac{\int_S \exp(-M^2)\mathrm{d}\mbx}{Z_{f_\mathcal{P}}} = \frac{|S| \exp(-M^2)}{Z_{f_\mathcal{P}}} .
\end{align*}
%
Recall $t = 8(M+1)\sqrt{\alpha \log d}$, we define the height of the sphere cap as 
\[h = M- \frac{t}{\sqrt{2d_0}} = M- 8(M+1) \sqrt{\frac{ \alpha \log d}{2d_0}}.\]
Let $b = (2Mh-h^2)/M^2 = 1-\left(1+\frac{1}{M}\right)^2\frac{64\alpha \log d}{d_0}$. By Lemma \ref{theo:volcap} and Eq.~\eqref{eq:normalizing3}, we have 
\begin{align*}
\pi_{f_\mathcal{P}}(S) ~\geq ~& \frac{|S| \exp(-M^2)}{Z_{f_\mathcal{P}}}  \\
~\geq  ~&\frac{|S|}{V_d M^d} \cdot V_d \cdot \frac{1}{ \exp(M) \cdot   (2\pi)^{d/2}}\cdot  M^d \exp(-M^2)\\
~\geq  ~&\frac{|S|}{V_d M^d} \cdot V_d \cdot \frac{1}{ (2\pi)^{d/2}}\cdot  M^d \exp(-2M^2)\\
~= ~&\frac{I_{b}\left(\frac{d+1}{2},\frac{1}{2}\right) }{2}  \cdot  \frac{1}{\Gamma(d/2+1)}\cdot \frac{1}{2^{d/2}} \cdot \left(\frac{d}{4}\right)^{d/2}\exp\left(-d/2\right).
\end{align*}
by taking $M^2 = d/4$.
Similarly, we have 
\[\pi_{f_\mathcal{P}}(S)\geq  \frac{\sqrt{\frac{t}{1-t}}}{2\pi\sqrt{d}(d+2)}\left(\frac{\sqrt{t}}{2}\right)^{d}.\]

Thus, there exists a constant $c\in (0,\frac{1}{2e})$ such that with probability $1-d^{-\omega(1)}$ over $\mathcal{P}$, 
\[\mathsf{TV}(\rho(\mathsf{A}[f_\mathcal{P},\mbx^0,r]),\pi_{f_\mathcal{P}}) \geq {\Omega}(c^d).\]

\section{Proof of Theorem \ref{theo:boxmain1}}
\label{app:box}

\subsection{Proof of smooth case}
We define the hardness functions $f_\mathcal{P}:[-1,1]^d \to \mathbb{R}$ as, $f_\mathcal{P}(\mbx)  = S_1[g_\mathcal{P}](\mbx)$ where $g_\mathcal{P}:\mathbb{R}^d\to \mathbb{R}$ is defined as 
\begin{equation*}
 g_\mathcal{P}(\mbx)~=~ L\cdot \left( \left|X^1\right| +   \sum\limits_{i\in [r]}\max\left\{\left|X^{i}-X^{i+1}\right|-t,0\right\}  \right)
\end{equation*}
with $t = 2 \sqrt{d_0 + 2\sqrt{2}+ 1}\sqrt{\alpha \log d}$ with $\alpha = \omega(1)$, $\alpha = \gO(d^{1/3})$
and $L = \frac{1}{2\sqrt{d}}$.

Similarly, we have the following characterization of the output.
\begin{lemma}
\label{lmmapp:unreachbox1}
For any randomized algorithm $\mathsf{A}$, any $\tau \leq r$,  and any initial point $\mbx^0$,  $X(\mathsf{A}[f_\mathcal{P},\mbx^0,\tau])$ takes form as 
\[(x_1,\ldots,x_\tau,x_{\tau},\ldots,x_{\tau}),\]
up to addictive error $\gO(t/2)$ with probability $1-d^{-\omega(1)}$ over $\mathcal{P}$. 
\end{lemma}

\begin{proof}[Proof of Lemma \ref{lmmapp:unreachbox1}]
We fixed $\tau$ and prove the following by induction for $l\in [\tau]$: With high probability, the computation
path of the (deterministic) algorithm $\mathsf{A}$ and the queries it issues in the $l$-th round are determined by $P_{1},\ldots,P_{l-1}$.

As a first step, we assume the algorithm is deterministic by fixing its random bits and
choose the partition of $\mathcal{P}$ uniformly at random.

To prove the inductive claim, let $\mathcal{E}_l$ denote the event that
for any query $\mbx$ issued by $\mathsf{A}$ in iteration $l$, the answer is in the form $S_1[g^l_\mathcal{P}](\mbx)$, where $g^l_\mathcal{P}:\mathbb{R}^d\to \mathbb{R}$ is defined as  
\[  
g^l_\mathcal{P} = L\cdot  \left( |X^1| + \sum\limits_{i\in [l-1]}\max\left\{\left|X^{i}-X^{i+1}\right|-t,0\right\}  \right),
\] 
i.e., $\mathcal{E}_l$ represents the events that  $\forall \mbx \in Q^l$, $f_\mathcal{P}(\mbx )  = S_1[g_\mathcal{P}](\mbx) = S_1[g^l_\mathcal{P}](\mbx)$.

Since the queries in round $l$ depend only on $P_{1},\ldots,P_{l-1}$, if $\mathcal{E}_l$ occurs, the entire computation path in round $l$ is determined by $P_{1},\ldots,P_{l}$. 
By induction, we conclude that if all of $\mathcal{E}_1, \ldots , \mathcal{E}_l$ occur, the computation path in round $l$ is determined by $P_{1},\ldots,P_{l}$.

Now we analysis the conditional probability $P\left[\mathcal{E}_l\mid \mathcal{E}_1,\ldots,\mathcal{E}_{l-1}\right]$.
By the property 3 of Theorem \ref{theo:smoothing}, $S_1[g_\mathcal{P}](\mbx)$ only depends on the $\{g_\mathcal{P}(\mbx):\mbx'\in B_1(\mbx)\}$.
Thus, it is sufficient to analyze the probability of the event that for a fixed query $\mbx$,  any point $\mbx'\in B_1(\mbx)$ satisfies that $g_\mathcal{P}(\mbx') = g_\mathcal{P}^l(\mbx')$.
Given all of $\mathcal{E}_1,\ldots,\mathcal{E}_{l-1}$ occur so far, we can claim that $Q^l$ is determined by $P_1,\ldots,P_l$.
Conditioned on $P_1,\ldots,P_l$, the partition of $[d]\setminus \mathop{\bigcup}\limits_{i\in [l]}P_i$ is uniformly random. 
We consider $\{0,1\}$-random variable $Y_j$, $j\in [d]\setminus \mathop{\bigcup}\limits_{i\in [l]}P_i$.
We represent $X^i(\mbx')$ as a linear function of $Y_i$s as  $X^i(\mbx) = \sum\limits_{j\in [d]\setminus \mathop{\bigcup}\limits_{i\in [l]}P_i} Y_j \mbx_j'$ such that $Y_i=1$ if $Y_i \in P_i$ and $Y_i= 0$ otherwise. 
By the concentration of linear functions over the Boolean slice (Theorem \ref{theo:concentration2}), and recall $t = 2 \sqrt{d_0 + 2\sqrt{2}+ 1}\sqrt{\alpha \log d}$, we have
\begin{align*}
 \mathbb{P}_{\mathcal{P}}\left[|X^i(\mbx')- \mathbb{E}[X^i(\mbx')]| \geq  \frac{t}{2}\right]
~\leq~& 2\exp\left(-\frac{t^2}{32\sum\limits_{i=1}^{d_0} (\mbx_i^{\downarrow})^2}\right)\\
~\leq~& 2\exp\left(-\frac{4\left(({d_0+ 2\sqrt{2}\delta+\delta^2}) {\alpha \log d}\right)}{32(d_0 + 2\sqrt{2}\delta+\delta^2)}\right)\\
~=~& 2\exp\left(-\frac{\alpha\log d }{8}\right) =2d^{-\omega(1)}.   
\end{align*}
Similarly, $\mathbb{P}\left[|X^{i+1}(\mbx')- \mathbb{E}[X^{i+1}(\mbx')]| \geq  \frac{t}{2}\right] \leq  2d^{-\omega(1)}$. 
Combining the fact that $\mathbb{E}[X^i(\mbx')] = \mathbb{E}[X^{i+1}(\mbx')]$, we have with probability at least  $1-d^{-\omega(1)}$, for any fixed $i\geq l$
\[\max\left\{\left|X^{i}(\mbx')-X^{i+1}(\mbx')\right|-t,0\right\} = 0,\]
which implies
with a probability at least $1-rd^{-\omega(1)}$, any point $\mbx'\in B_1(\mbx)$ satisfies that $g_\mathcal{P}(\mbx') = g_\mathcal{P}^l(\mbx')$.

By union bound over all queries $\mbx\in Q^l$, conditioned on that $\mathcal{E}_1,\ldots,\mathcal{E}_{l-1}$ occur,  with probability at least $1-r\mathsf{poly}(d)d^{-\omega(1)}$, $\mathcal{E}_l$ occurs.
Therefore by induction, 
\begin{align*}
    P(\mathcal{E}_l) ~=~& P(\mathcal{E}_l|\mathcal{E}_1,\ldots,\mathcal{E}_{l-1})P(\mathcal{E}_{l-1}|\mathcal{E}_1,\ldots,\mathcal{E}_{l-2})\ldots P(\mathcal{E}_{2}|\mathcal{E}_1) P(\mathcal{E}_1)\\
    ~\geq~& 1-r^2\mathsf{poly}(d)d^{-\omega(1)} = 1-d^{-\omega(1)}.
\end{align*}
This implies that with high probability, the computation path in round $l$ is determined by $P_1,\ldots,P_{l-1}$. Consequently,  for all $l\in [\tau]$ a solution returned after $l - 1$ rounds is determined by $P_1,\ldots,P_{l-1}$ with high probability.
By the same concentration argument, the solution is with a probability at least $1-d^{-\omega(1)}$ in the form 
\[(x_1,\ldots,x_\tau,x_{\tau},\ldots,x_{\tau}),\]
up to an additive error $\gO(t/2)$ in each coordinate.

Finally, we note that by allowing the algorithm to use random bits, the results are
a convex combination of the bounds above, so the same high-probability bounds are satisfied.
\end{proof}

Now we are ready to prove the smooth case of Theorem \ref{theo:boxmain1}.

\paragraph{Verification of $f_\mathcal{P}$.} $g_{\mathcal{P}}$ is convex and $1$-Lipschitz. By Theorem \ref{theo:smoothing}, $f_{\mathcal{P}}$ is convex and $1$-smooth.

\paragraph{Bound of total variation distance.} 
We first estimate the normalizing constant as follows.
By property 1 of Theorem \ref{theo:smoothing}, we have 
\begin{align*}
Z_{f_\mathcal{P}} =\int_{[-1,1]^d}\exp(-S_1[g_\mathcal{P}])\mathrm{d}\mbx \leq   \int_{[-1,1]^d}\exp(-g_\mathcal{P}+1)\mathrm{d}\mbx \leq e  \int_{[-1,1]^d}\mathrm{d}\mbx \leq  e\cdot 2^d.
\end{align*}
Consider a subset $S = \left\{\mbx\in [-1,1]^d: |X^i|\leq \frac{t}{2},\forall i\in [r], \frac{3t}{2}\leq X^{r+1}\leq \frac{t}{2}+t\sqrt{\alpha}\right\}$.
By Lemma \ref{lmmapp:unreachbox1}, with probability $1-d^{-\omega(1)}$ over $\mathcal{P}$, 
\begin{align*}
\rho(\mathsf{A}[f_\mathcal{P},\mbx^0,r])(S) = 0.
\end{align*}
On the other hand, for any $\mathcal{P}$, recall  $t = 2 \sqrt{d_0 + 2\sqrt{2}+ 1}\sqrt{\alpha \log d}$, $\alpha =\gO(d^{1/3})$ and $L= \frac{1}{2d^{1/2}}$, we have 
\[ f_\mathcal{P}(\mbx) \leq \left(\frac{3t}{2} + t \sqrt{\alpha}\right)L\leq 4\alpha \sqrt{d_0 \log d}L<1.
\]
for sufficient large $d$.
Thus, we have 
\begin{align*}
\pi_{f_\mathcal{P}}(S) = \frac{\int_S \exp(-f(\mbx))\mathrm{d}\mbx}{Z_{f_\mathcal{P}}} \geq \frac{\int_S \exp(-1)\mathrm{d}\mbx}{Z_{f_\mathcal{P}}} 
\geq \frac{|S|}{e^22^d}.
\end{align*}
By Lemma \ref{lmm:irwin}, similarly, we have 
\begin{align*}
\frac{|S|}{2^{d}}
~\geq~& c_1\exp(c_2 \alpha \log d). 
\end{align*}
Thus, with probability $1-d^{-\omega(1)}$ over $\mathcal{P}$, 
\[\mathsf{TV}(\rho(\mathsf{A}[f,\mbx^0,r]),\pi_{f_\mathcal{P}}) \geq \Omega(d^{-\omega(1)}).\]

\subsection{Proof of Lipschitz case}

We define the hardness functions $f_\mathcal{P}:[-1,1]^d \to \mathbb{R}$ as: 
\begin{equation*}
\label{eq:2}
 f_\mathcal{P}(\mbx)~=~ L\cdot \left( \left|X^1\right| +   \sum\limits_{i\in [r]}\max\left\{\left|X^{i}-X^{i+1}\right|-t,0\right\}  \right)
\end{equation*}
with $t = 2 \sqrt{\alpha d_0\log d}$ with $\alpha = \omega(1)$, $\alpha = \gO(d^{1/3})$
and $L = \frac{1}{2\sqrt{d}}$.

Similarly, we have the following characterization of the output.
\begin{lemma}
\label{lmmapp:unreachbox}
For any randomized algorithm $\mathsf{A}$, any $\tau \leq r$,  and any initial point $\mbx^0$,  $X(\mathsf{A}[f_\mathcal{P},\mbx^0,\tau])$ takes form as 
\[(x_1,\ldots,x_\tau,x_{\tau},\ldots,x_{\tau}),\]
up to addictive error $\gO(t/2)$ with probability $1-d^{-\omega(1)}$ over $\mathcal{P}$. 
\end{lemma}

\begin{proof}[Proof of Lemma \ref{lmmapp:unreachbox}]
We fixed $\tau$ and prove the following by induction for $l\in [\tau]$: With high probability, the computation
path of the (deterministic) algorithm $\mathsf{A}$ and the queries it issues in the $l$-th round are determined by $P_{1},\ldots,P_{l-1}$.

As a first step, we assume the algorithm is deterministic by fixing its random bits and
choose the partition of $\mathcal{P}$ uniformly at random.

To prove the inductive claim, let $\mathcal{E}_l$ denote the event that
for any query $\mbx$ issued by $\mathsf{A}$ in iteration $l$, the answer is in the form 
\[  
L\cdot  \left( |X^1| + \sum\limits_{i\in [l-1]}\max\left\{\left|X^{i}-X^{i+1}\right|-t,0\right\}  \right),
\] 
i.e., $\mathcal{E}_l$ represents the events that  $\forall \mbx \in Q^l$, $f_\mathcal{P}(\mbx )  = L\cdot  \left( |X^1| + \sum\limits_{i\in [l-1]}\max\left\{\left|X^{i}-X^{i+1}\right|-t,0\right\}  \right)$.

Since the queries in round $l$ depend only on $P_{1},\ldots,P_{l-1}$, if $\mathcal{E}_l$ occurs, the entire computation path in round $l$ is determined by $P_{1},\ldots,P_{l}$. 
By induction, we conclude that if all of $\mathcal{E}_1, \ldots , \mathcal{E}_l$ occur, the computation path in round $l$ is determined by $P_{1},\ldots,P_{l}$.

Now we analysis the conditional probability $P\left[\mathcal{E}_l\mid \mathcal{E}_1,\ldots,\mathcal{E}_{l-1}\right]$.
Given all of $\mathcal{E}_1,\ldots,\mathcal{E}_{l-1}$ occur so far, we can claim that $Q^l$ is determined by $P_1,\ldots,P_l$.
Conditioned on $P_1,\ldots,P_l$, the partition of $[d]\setminus \mathop{\bigcup}\limits_{i\in [l]}P_i$ is uniformly random. 
We consider $\{0,1\}$-random variable $Y_j$, $j\in [d]\setminus \mathop{\bigcup}\limits_{i\in [l]}P_i$.
We represent $X^i(\mbx)$ as a linear function of $Y_i$s as  $X^i(\mbx) = \sum\limits_{j\in [d]\setminus \mathop{\bigcup}\limits_{i\in [l]}P_i} Y_j \mbx_j$ such that $Y_i=1$ if $Y_i \in P_i$ and $Y_i= 0$ otherwise. 
By the concentration of linear functions over the Boolean slice (Theorem \ref{theo:concentration}), and recall $t = 2\sqrt{\alpha d_0 \log d}$, we have
\begin{align*}
 \mathbb{P}_{\mathcal{P}}\left[|X^i(\mbx)- \mathbb{E}[X^i(\mbx)]| \geq  \frac{t}{2}\right]
 ~\leq~& 2\exp\left(-\frac{t^2}{32d_0}\right)\\
~= ~&2\exp\left(-\frac{4\alpha d_0 \log d }{32 d_0}\right)\\
~=~& 2\exp\left(-\frac{\alpha\log d }{8}\right) =2d^{-\omega(1)}.   
\end{align*}
Similarly, $\mathbb{P}\left[|X^{i+1}(\mbx)- \mathbb{E}[X^{i+1}(\mbx)]| \geq  \frac{t}{2}\right] \leq  2d^{-\omega(1)}$. 
Combining the fact that $\mathbb{E}[X^i(\mbx)] = \mathbb{E}[X^{i+1}(\mbx)]$, we have with probability at least  $1-d^{-\omega(1)}$, for any fixed $i\geq l$
\[\max\left\{\left|X^{i}(\mbx)-X^{i+1}(\mbx)\right|-t,0\right\} = 0,\]
which implies $f_\mathcal{P}(\mbx) = L\cdot  \left( |X^1| + \sum\limits_{i\in [l-1]}\max\left\{\left|X^{i}-X^{i+1}\right|-t,0\right\}  \right)$ with a probability at least $1-rd^{-\omega(1)}$.

By union bound over all queries $\mbx\in Q^l$, conditioned on that $\mathcal{E}_1,\ldots,\mathcal{E}_{l-1}$ occur,  with probability at least $1-r\mathsf{poly}(d)d^{-\omega(1)}$, $\mathcal{E}_l$ occurs.
Therefore by induction, 
\begin{align*}
    P(\mathcal{E}_l) ~=~& P(\mathcal{E}_l|\mathcal{E}_1,\ldots,\mathcal{E}_{l-1})P(\mathcal{E}_{l-1}|\mathcal{E}_1,\ldots,\mathcal{E}_{l-2})\ldots P(\mathcal{E}_{2}|\mathcal{E}_1) P(\mathcal{E}_1)\\
    ~\geq~& 1-r^2\mathsf{poly}(d)d^{-\omega(1)} = 1-d^{-\omega(1)}.
\end{align*}
This implies that with high probability, the computation path in round $l$ is determined by $P_1,\ldots,P_{l-1}$. Consequently,  for all $l\in [\tau]$ a solution returned after $l - 1$ rounds is determined by $P_1,\ldots,P_{l-1}$ with high probability.
By the same concentration argument, the solution is with a probability at least $1-d^{-\omega(1)}$ in the form 
\[(x_1,\ldots,x_\tau,x_{\tau},\ldots,x_{\tau}),\]
up to an additive error $\gO(t/2)$ in each coordinate.

Finally, we note that by allowing the algorithm to use random bits, the results are
a convex combination of the bounds above, so the same high-probability bounds are satisfied.
\end{proof}

Now we are ready to prove the Lipschitz case of Theorem \ref{theo:boxmain1}.

\paragraph{Verification of $f_\mathcal{P}$.} $f_{\mathcal{P}}$ is convex and $1$-Lipschitz.

\paragraph{Bound of total variation distance.} 
We first estimate the normalizing constant as follows.
\begin{align*}
Z_{f_\mathcal{P}}~\leq ~& \int_{[-1,1]^d}\mathrm{d}\mbx =  2^d.
\end{align*}
Consider a subset $S = \left\{\mbx\in [-1,1]^d: |X^i|\leq \frac{t}{2},\forall i\in [r], \frac{3t}{2}\leq X^{r+1}\leq \frac{t}{2}+t\sqrt{\alpha}\right\}$.
By Lemma \ref{lmmapp:unreachbox}, with probability $1-d^{-\omega(1)}$ over $\mathcal{P}$, 
\begin{align*}
\rho(\mathsf{A}[f_\mathcal{P},\mbx^0,r])(S) = 0.
\end{align*}
On the other hand, for any $\mathcal{P}$, recall $t  = 2\sqrt{\alpha d_0 \log d}$, $\alpha =\gO(d^{1/3})$ and $L= \frac{1}{2d^{1/2}}$, we have 
\[ f_\mathcal{P}(\mbx) \leq \left(\frac{3t}{2} + t \sqrt{\alpha}\right)L\leq 4\alpha \sqrt{d_0 \log d}L<1.
\]
for sufficient large $d$.%
Thus, we have 
\begin{align*}
\pi_{f_\mathcal{P}}(S) = \frac{\int_S \exp(-f(\mbx))\mathrm{d}\mbx}{Z_{f_\mathcal{P}}} \geq \frac{\int_S \exp(-1)\mathrm{d}\mbx}{Z_{f_\mathcal{P}}} 
\geq \frac{|S|}{e2^d}.
\end{align*}
By Lemma \ref{lmm:irwin}, we have 
\begin{align*}
\frac{|S|}{2^{d}}~\geq ~& \left(c_0 \cdot \exp\left(-978\frac{\frac{9t^2}{16}}{d_0}\right) - \exp\left(-2\frac{\frac{t^2\alpha}{4}}{d_0}\right)\right)\left(1- 2\exp\left(-\frac{t^2}{16d_0}\right)\right)^r \\
~= ~&\left(c_0 \cdot \exp\left(-\frac{4401}{2} \alpha \log d\right) - \exp\left(-2\alpha^2 \log d\right)\right)\left(1- 2\exp\left(-\frac{\alpha \log d}{4}\right)\right)^r \\
~\geq~& c_1\exp(c_2 \alpha \log d). 
\end{align*}
The last inequity holds since $\exp(-2\alpha^2\log d) = O\left(c_0 \cdot \exp\left(-\frac{4401}{2} \alpha \log d\right) \right)$ and $r\leq d$ while $2\exp\left(\frac{\alpha \log d}{4}\right) = \Omega(d^{\alpha})$.
Thus, with probability $1-d^{-\omega(1)}$ over $\mathcal{P}$, 
\[\mathsf{TV}(\rho(\mathsf{A}[f,\mbx^0,r]),\pi_{f_\mathcal{P}}) \geq \Omega(d^{-c_2\alpha}).\]

\section{Upper bounds}

\subsection{Upper bound for log-concave sampling}

In this section, we apply the algorithms in \cite{fan2023improved} to our setting, which is summarized in Theorem \ref{the:upper1}.

\begin{theorem}[\bf{Upper bound for very high accurate and weakly log-concave samplers~(Proposition 4~\cite{fan2023improved})}]
\label{the:upper1}
For any uniform constant $c\in (0,1/(2e))$, if an initial point $\mbx^0\sim \rho_0$ satisfies $\chi^2_{\pi}(\rho_0) = \gO(\exp(d))$, we can find a random point $x_T$ that has $c^d$ total variation distance to $\pi$ in
\begin{itemize}
    \item $T = \Tilde{O}\left(\mathsf{m}_2d^{5/2} \right)$ steps if $\pi$ is $1$-log-smooth and weakly log-concave;
    \item $T = \Tilde{O}\left(\mathsf{m}_2 d^{2} \right)$ steps if $\pi$ is $1$-log-Lipschitz and weakly log-concave,
\end{itemize}
where, $\mathsf{m}_2$ denotes the second moment.  Furthermore, each step accesses only $\gO(1)$ many queries in expectation.
\end{theorem}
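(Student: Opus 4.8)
The plan is to obtain this as an instantiation of the parallelized proximal sampler of Fan et al.~\cite{fan2023improved} (their Proposition~4) at the exponentially small accuracy level $\varepsilon=c^d$, and I will indicate why their guarantee yields the stated iteration counts. The algorithm is the proximal sampler with step size $h$: each step draws a free Gaussian sample $\mby_k\sim\mathsf{normal}(\mbx_k,hI_d)$ and then samples $\mbx_{k+1}$ from the restricted Gaussian oracle with density $\propto\exp(-f(\mbx)-\tfrac{1}{2h}\|\mbx-\mby_k\|^2)$ by rejection sampling against a suitably centred Gaussian proposal. This Markov chain has \emph{exact} stationary distribution $\pi$, so there is no asymptotic bias and only the mixing time and the expected per-step rejection count are at issue. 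Since by Cauchy--Schwarz $\mathsf{TV}(\rho_N,\pi)\le\tfrac12\sqrt{\chi^2_\pi(\rho_N)}$, it suffices to drive $\chi^2_\pi(\rho_N)$ below $4c^{2d}$; starting from the warm start with $\chi^2_\pi(\rho_0)=O(\exp(d))$ this asks for a contraction of $\chi^2$ by the factor $\exp(\widetilde\Theta(d))$ (using that $c$ is a fixed constant).

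The contraction rate is furnished by isoperimetry. A weakly log-concave $\pi$ with second moment $\mathsf{m}_2$ satisfies a Poincar\'e inequality with constant $C_{\mathrm{PI}}=\widetilde O(\mathsf{m}_2)$ --- this is precisely where weak log-concavity together with the finiteness of $\mathsf{m}_2$ enters, via the spectral-gap bounds for log-concave measures. Under a Poincar\'e inequality the proximal sampler contracts the $\chi^2$-divergence geometrically with per-step factor $1-\Omega(h/C_{\mathrm{PI}})$, so that $\chi^2_\pi(\rho_N)\le\exp(-\Omega(Nh/C_{\mathrm{PI}}))\,\chi^2_\pi(\rho_0)$. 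Combining this with the warm start, taking $N=\widetilde\Theta(\mathsf{m}_2 d/h)$ steps drives $\chi^2_\pi(\rho_N)\le 4c^{2d}$ and hence $\mathsf{TV}(\rho_N,\pi)\le c^d$.

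It remains to choose $h$ as large as the restricted Gaussian oracle will allow while keeping the expected number of rejections $O(1)$: the acceptance ratio is governed by the oscillation of $f$ over the effective radius $\widetilde O(\sqrt{hd})$ of the Gaussian proposal, which a $1$-Lipschitz $f$ keeps bounded once $\sqrt{hd}=\widetilde O(1)$, i.e.\ $h=\widetilde\Theta(1/d)$, while for a $1$-smooth $f$ the finer analysis of \cite{fan2023improved} admits the (slightly smaller) step $h=\widetilde\Theta(d^{-3/2})$. The forward step is a single Gaussian draw, so the expected cost per step is $O(1)$ queries, and substituting the two step sizes into $N=\widetilde\Theta(\mathsf{m}_2 d/h)$ gives $N=\widetilde O(\mathsf{m}_2 d^2)$ and $N=\widetilde O(\mathsf{m}_2 d^{5/2})$ respectively. \textbf{The main obstacle} is the sharp control of the restricted Gaussian oracle's acceptance probability at these step sizes --- and, secondarily, pinning down the $\widetilde O(\mathsf{m}_2)$ scaling of the Poincar\'e constant for a general weakly log-concave target; both are supplied by the analysis of \cite{fan2023improved}, and given them the statement follows from the bookkeeping above.
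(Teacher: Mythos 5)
Your proposal is correct and takes essentially the same route as the paper: both reduce the statement to Proposition~4 of Fan et al.\ instantiated at $\varepsilon=c^{d}$ with the $\chi^2_\pi(\rho_0)=O(\exp(d))$ warm start, so that each logarithmic factor in that bound contributes $\Theta(d)$, combined with a Poincar\'e constant of order $\widetilde{O}(\mathsf{m}_2)$ for weakly log-concave targets, which yields $\widetilde{O}(\mathsf{m}_2 d^{5/2})$ and $\widetilde{O}(\mathsf{m}_2 d^{2})$ with $O(1)$ expected queries per step. The only cosmetic differences are that the paper gets the Poincar\'e bound from Klartag's log-concave Poincar\'e inequality together with $\left\|\mathsf{Cov}(\pi)\right\|_{\mathsf{op}}\le \mathsf{m}_2$ rather than from Fan et al.\ themselves, and that your quoted step sizes fold the second logarithmic factor of their bound into $h$; neither affects the conclusion.
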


To prove this Theorem, we first recall the definition of semi-smooth~(Definition~\ref{def:semis}), then state the results for weakly log-concave samplers~(Theorem \ref{the:fan}).
Finally, we applies recent estimates of the worst-case Poincar\'e constant for isotropic log-concave distributions~(Theorem~\ref{the:poi}).

\begin{definition}[\bf{semi-smooth}]
\label{def:semis}
We say $f:\mathbb{R}^d \to \mathbb{R}$ is bounded from below and is $L_\alpha$-$\alpha$-semi-smooth, i.e., $f$ satisfies  for all $u,v\in \mathbb{R}^d$,
\[\left\|\partial f(u)-\partial  f(v) \right\|\leq L_\alpha \left\|u-v \right\|^\alpha ,\]
for $L_\alpha > 0$ and $\alpha \in [0, 1]$. Here $\partial f$ represents a subgradient of $f$. When $\alpha > 0$, this subgradient can
be replaced by the gradient. This condition implies $f$ is $L_1$-smooth when $\alpha = 1$ and a Lipschitz
function satisfies this with $\alpha = 0$.
\end{definition}

\begin{theorem}[\bf{Proposition 4~\cite{fan2023improved}}]
\label{the:fan}
Suppose $\pi \propto \exp(-f)$ satisfies $C_{\mathsf{PI}}$-$\mathsf{PI}$ and $f$ is $1$-$\alpha$-semi-smooth.  
Let $\varepsilon \in(0, 1)$. Then we can find a random point $x_T$ that has $\varepsilon$ total variation distance to $\pi$ in
\[T = O\left(\frac{d^{\frac{\alpha}{\alpha+1}}}{C_{\mathsf{PI}}} \cdot \log \left(\frac{d^{\frac{\alpha}{\alpha+1}}}{C_{\mathsf{PI}} \varepsilon}\right) \log\left(\frac{\chi^2_\pi(\mu_0)}{\varepsilon^2}\right)  \right),\]
steps. Furthermore, each step accesses only $\gO(1)$ many $f(x)$ queries in expectation.
\end{theorem}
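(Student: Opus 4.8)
The plan is to realize the sampler as the \emph{proximal sampler} (the alternating sampling framework) and to combine its fast convergence under a Poincar\'e inequality with an implementation of its restricted Gaussian oracle (RGO) adapted to $\alpha$-semi-smooth potentials. Fix a step size $h>0$, to be selected at the end. One iteration performs a \emph{forward} step $y_k\sim\mathsf{normal}(x_k,hI)$, which makes no query, followed by a \emph{backward} step $x_{k+1}\sim\nu_{y_k}$ with $\nu_y\propto\exp\bigl(-f(\cdot)-\tfrac{1}{2h}\|\cdot-y\|^2\bigr)$; together these form the two-block Gibbs sampler for $\pi(x,y)\propto\exp\bigl(-f(x)-\tfrac{1}{2h}\|x-y\|^2\bigr)$, whose $x$-marginal is $\pi$, so all target-dependent computation is confined to the backward step.

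For convergence I would invoke the $\chi^2$-contraction of the proximal sampler under a Poincar\'e inequality: if $\pi$ is $C_{\mathsf{PI}}$-$\mathsf{PI}$ and the RGO is exact, then $\chi^2_\pi(\mu_{k+1})\le(1+hC_{\mathsf{PI}})^{-2}\chi^2_\pi(\mu_k)$, so $N=O\bigl(\tfrac{1}{hC_{\mathsf{PI}}}\log\tfrac{\chi^2_\pi(\mu_0)}{\varepsilon^2}\bigr)$ iterations give $\chi^2_\pi(\mu_N)\le4\varepsilon^2$ and hence $\mathsf{TV}(\mu_N,\pi)\le\tfrac12\sqrt{\chi^2_\pi(\mu_N)}\le\varepsilon$. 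Because the RGO can only be realized approximately, I would require each backward step to have total-variation error $O(\varepsilon/N)$ and couple the inexact chain to the exact one, so that the error accumulated over the $N$ iterations is $O(\varepsilon)$. This union bound over iterations is the source of the extra factor $\log\bigl(\tfrac{d^{\alpha/(\alpha+1)}}{C_{\mathsf{PI}}\varepsilon}\bigr)\asymp\log(N/\varepsilon)$: each RGO call is run for $O(\log(N/\varepsilon))$ inner rounds, although its \emph{expected} number of rounds — hence its expected query count — remains $O(1)$.

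The backward step is the crux. I would implement $\nu_y$ by rejection sampling with a Gaussian proposal centred at an approximate proximal point $z^\star\approx\arg\min_x\{f(x)+\tfrac{1}{2h}\|x-y\|^2\}$, produced by a short inner first-order routine whose cost is folded into the per-step budget. Writing $\phi(x)=f(x)+\tfrac{1}{2h}\|x-z^\star\|^2$ and using the $\alpha$-semi-smoothness estimate $f(x)-f(z^\star)-\langle\nabla f(z^\star),x-z^\star\rangle\le\tfrac{1}{1+\alpha}\|x-z^\star\|^{1+\alpha}$, one bounds the ratio of $\nu_y$ to the proposal and thus the acceptance probability. A dimension-tight analysis of this probability — tuning the proposal covariance and, for $\alpha<1$, accommodating that $\nu_y$ is strongly log-concave but not log-smooth (so has unbounded local curvature) — shows that $h\asymp d^{-\alpha/(\alpha+1)}$ is the largest step size for which each backward step still uses $O(1)$ queries in expectation. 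Substituting this $h$ into $N$ gives $T=O\bigl(\tfrac{d^{\alpha/(\alpha+1)}}{C_{\mathsf{PI}}}\log\tfrac{d^{\alpha/(\alpha+1)}}{C_{\mathsf{PI}}\varepsilon}\log\tfrac{\chi^2_\pi(\mu_0)}{\varepsilon^2}\bigr)$, which is the claimed bound.

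The main obstacle is precisely this RGO complexity analysis for merely $\alpha$-semi-smooth (rather than log-smooth) potentials: a plain Gaussian proposal of covariance $hI$ yields $O(1)$ expected rejections only when $h=O(1/d)$, which would give the cruder bound $T=\widetilde O(d/C_{\mathsf{PI}})$, and recovering the exponent $d^{\alpha/(\alpha+1)}$ requires the sharper proposal and acceptance estimate above. By contrast, the $\chi^2$-contraction under $\mathsf{PI}$ and the inexact-oracle coupling are by now standard and should go through with only routine bookkeeping.
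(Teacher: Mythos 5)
The paper does not prove this statement: it is imported verbatim as Proposition~4 of Fan, Yuan, and Chen~\cite{fan2023improved}, and is used here only as a black box to derive Theorem~\ref{the:upper1}. Your sketch correctly reconstructs the strategy of that cited source --- the proximal sampler with the $\chi^2$-contraction $(1+hC_{\mathsf{PI}})^{-2}$ per step under a Poincar\'e inequality, an inexact-RGO coupling accounting for the extra $\log(d^{\alpha/(\alpha+1)}/(C_{\mathsf{PI}}\varepsilon))$ factor, and a rejection-sampling RGO whose sharper acceptance analysis for $\alpha$-semi-smooth potentials permits $h\asymp d^{-\alpha/(\alpha+1)}$ --- and you correctly flag that this last acceptance estimate is the crux; it is the main technical content of the reference and remains unproved in your sketch, which is acceptable given that the statement is itself a citation.
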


\begin{definition}[\bf{Poincar\'e inequality}]
A probability distribution $\pi$ satisfies the Poincar\'e inequality ($\mathsf{PI}$) with constant $C_{\mathsf{PI}}$ > 0
if for any smooth bounded function $u : \mathbb{R}^d \to \mathbb{R}$, it holds that
\[\mbox{Var}_\pi(u) \leq \frac{1}{C_{\mathsf{PI}}} \mathbb{E}\left[\left\|\nabla u\right\|^2\right].\]
\end{definition}

{\begin{theorem}[\bf{Poincar\'e Inequality for log-concave distribution~\cite{klartag2023logarithmic}}]
\label{the:poi}
If $d\geq 2$, then log-concave distribution $\pi$ satisfies $(C\log d \left\|\mathsf{Cov}(\pi)\right\|_{\mathsf{op}})^{-1}$-$\mathsf{PI}$, where $C > 0$ is a universal constant, where $\mathsf{Cov}(\pi)$ is the covariance matrix of distribution $\pi$ and $\left\|\mathsf{Cov}(\pi)\right\|_{\mathsf{op}}$ is its operator norm. 
\end{theorem}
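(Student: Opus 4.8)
The plan is to follow the stochastic localization approach of Eldan, in the refined form of Klartag--Lehec that yields the polylogarithmic dependence. First I would reduce to the isotropic case. The Poincar\'e inequality transforms covariantly under invertible linear maps: writing $\Sigma = \mathsf{Cov}(\pi)$ and letting $\tilde\pi$ be the pushforward of $\pi$ under $x\mapsto \Sigma^{-1/2}x$, one checks via the chain rule that $\mathsf{Var}_\pi(u)\le \|\Sigma\|_{\mathsf{op}}\,C_{\mathsf{PI}}(\tilde\pi)^{-1}\,\mathbb{E}_\pi\|\nabla u\|^2$, so $C_{\mathsf{PI}}(\pi)\ge C_{\mathsf{PI}}(\tilde\pi)/\|\Sigma\|_{\mathsf{op}}$, while $\tilde\pi$ is isotropic and log-concave. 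Hence it suffices to show that every isotropic log-concave $\tilde\pi$ on $\mathbb{R}^d$ satisfies $C_{\mathsf{PI}}(\tilde\pi)\gtrsim 1/\log d$. It is convenient to pass freely between the Poincar\'e constant and the Cheeger isoperimetric constant, which are equivalent up to universal factors for log-concave measures (E.~Milman), and to carry out the argument for whichever is more convenient at each step.

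Next I would set up Eldan's localization for the fixed isotropic $\tilde\pi$: build the measure-valued martingale $(\mu_t)_{t\ge 0}$ with $\mathrm{d}\mu_t/\mathrm{d}\tilde\pi \propto \exp(\langle\theta_t,\mbx\rangle - t\|\mbx\|^2/2)$, where $\theta_t$ is driven by an It\^o process chosen so that $\mathbb{E}\mu_t = \tilde\pi$ for all $t$, and track the covariance $A_t = \mathsf{Cov}(\mu_t)$, which solves a matrix SDE of the form $\mathrm{d}A_t = -A_t^2\,\mathrm{d}t + (\text{third-moment tensor of }\mu_t)\cdot \mathrm{d}W_t$. The drift term makes $\|A_t\|_{\mathsf{op}}$ decrease, but the martingale term --- whose quadratic variation is controlled by the Cheeger constant of $\mu_t$ itself --- can push it up, and the entire game is to show $\|A_t\|_{\mathsf{op}}$ stays $O(1)$ up to a time $T \asymp 1/\mathrm{polylog}(d)$. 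For this I would invoke the Klartag--Lehec bootstrap: control the fluctuations of a suitable potential in $A_t$ in terms of ``$q$-restricted'' Poincar\'e inequalities for $\mu_t$, and close the estimate by a Gr\"onwall-type argument so that, with high probability, $\|A_t\|_{\mathsf{op}}\le 2$ for all $t \le T$.

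Finally I would conclude. At time $T$ the potential of $\mu_T$ equals $-\log(\mathrm{d}\tilde\pi/\mathrm{d}\mbx) + T\|\mbx\|^2/2 - \langle\theta_T,\mbx\rangle$, which is $T$-strongly convex since $\tilde\pi$ is log-concave; hence Bakry--\'Emery gives $C_{\mathsf{PI}}(\mu_T)\ge T$. Because $\tilde\pi = \mathbb{E}\mu_T$ and the localization is a martingale, the reciprocal Poincar\'e constant of $\tilde\pi$ is bounded by integrating the infinitesimal loss along $[0,T]$, governed by $\mathbb{E}\|A_t\|_{\mathsf{op}}^2\le 4$, plus the terminal term $1/C_{\mathsf{PI}}(\mu_T)$, yielding $1/C_{\mathsf{PI}}(\tilde\pi) \lesssim T + 1/T$; optimizing the polylogarithmic $T$ gives $C_{\mathsf{PI}}(\tilde\pi)\gtrsim 1/\log d$, and undoing the isotropization produces $C_{\mathsf{PI}}(\pi)\gtrsim (\log d\,\|\mathsf{Cov}(\pi)\|_{\mathsf{op}})^{-1}$. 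The main obstacle is the middle step --- keeping $\|A_t\|_{\mathsf{op}}$ bounded for a polylogarithmically long time --- since a naive Gr\"onwall bound only survives for $t\asymp 1/d$; pushing this to $\mathrm{polylog}(d)$ is precisely the content of Klartag--Lehec and requires the careful two-scale / restricted-inequality analysis of the covariance SDE.
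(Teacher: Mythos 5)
The paper offers no proof of this statement at all---it is imported verbatim from the cited Klartag (2023) reference---and your sketch reconstructs essentially the argument of that cited line of work: a correct covariance-rescaling reduction to the isotropic case, Eldan's stochastic localization with the covariance SDE for $A_t$, Bakry--\'Emery at the terminal time, and the Klartag--Lehec-type bootstrap keeping $\left\|A_t\right\|_{\mathsf{op}}$ bounded, which you rightly identify as the crux and defer to the literature, exactly as the paper itself does by citing the result. The only loose point is the concluding bookkeeping: the cited work derives the bound via a Cheeger-constant estimate of order $\sqrt{\log d}$ (equivalently, running the localization to time $T\asymp 1/\log d$ and absorbing the accumulated variance term) rather than a literal ``$T+1/T$'' optimization, but this does not change the outline or the conclusion.
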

\begin{remark}
$\left\|\mathsf{Cov}(\pi)\right\|_{\mathsf{op}}$ can be bounded by second moment as 
\[\left\|\mathsf{Cov}(\pi)\right\|_{\mathsf{op}}\leq \mathsf{Tr}(\mathsf{Cov}(\pi)) = \mathbb{E}\left[\left\|X-\mathbb{E}X\right\|^2\right]\leq \mathsf{m}_2.\]
\end{remark}
}

\subsection{Composite samplers}
In this section, we apply the algorithm in \cite{fan2023improved} to composite samplers
\begin{theorem}[\bf{Implication of Proposition 6~\cite{fan2023improved}}]
If $f = f_1+ f_2$ where $f_1$ is $1$-strongly convex and $1$-smooth, and $f_2$ is $1$-Lipschitz, then for $\pi \propto \exp(-f)$, we can find a random point $x_T$ that has $\varepsilon$ total variation distance to $\pi$ in 
\[T = O\left(d^{1/2}\log \left(\frac{d^{1/2}}{\varepsilon}\right)\log\left(\frac{\sqrt{H_\pi(\mu_0)}}{\varepsilon}\right)\right)\]
steps.
\end{theorem}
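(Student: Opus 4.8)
The plan is to reduce the problem to a well-conditioned strongly log-concave sampling task and then invoke the proximal-sampler guarantee of \cite{fan2023improved} (their Proposition~6). First I would smooth the Lipschitz part at a \emph{constant} scale: let $\tilde f_2$ be the Moreau envelope of $f_2$ at scale $1$, so $\tilde f_2$ is convex, $1$-smooth, and $f_2-\tfrac12\leq \tilde f_2\leq f_2$ pointwise. Then $\tilde f:=f_1+\tilde f_2$ is $1$-strongly convex and $2$-smooth, and $\|\tilde f-f\|_\infty\leq \tfrac12$, so $\frac{\mathrm d\pi_f}{\mathrm d\pi_{\tilde f}}$ lies in $[e^{-1},e]$. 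Consequently an \emph{exact} sample from $\pi_f$ is produced from a sample of $\pi_{\tilde f}$ by one rejection step (accept $x$ with probability $e^{-1}\exp(\tilde f(x)-f(x))$, which needs no normalizing constants) with acceptance probability $\Omega(1)$, i.e.\ $O(1)$ queries in expectation; tracking the TV error of the base sampler, it therefore suffices to output a point at TV distance $O(\varepsilon)$ from $\pi_{\tilde f}$.

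Next I would sample $\pi_{\tilde f}$ with the alternating-sampling (proximal) scheme analyzed in \cite{fan2023improved}. Since $\tilde f$ is $1$-strongly convex, $\pi_{\tilde f}$ is $1$-strongly log-concave, hence satisfies a Poincar\'e inequality with $C_{\mathsf{PI}}\geq 1$; its condition number is $O(1)$, its second moment is $O(d)$ by Lemma~\ref{lmm:secstrong} (its mode lies within $O(1)$ of the origin because $\|\nabla\tilde f(0)\|=O(1)$), and the Gaussian initialization $\mu_0=\mathsf{normal}(0,I_d)$ is an $\widetilde O(d)$-infinite R\'enyi warm start by the computation of Lemma~\ref{lmm:init}, so $H_\pi(\mu_0)=\widetilde O(d)$. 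Plugging $C_{\mathsf{PI}}=\Theta(1)$ and condition number $\Theta(1)$ into Proposition~6 of \cite{fan2023improved} gives $T=O\!\bigl(d^{1/2}\log(d^{1/2}/\varepsilon)\,\log(\sqrt{H_\pi(\mu_0)}/\varepsilon)\bigr)$: the $d^{1/2}$ is the dimension dependence of implementing the smooth strongly log-concave restricted Gaussian oracle, one logarithm is the per-step accuracy to which that oracle is realized (with the errors summed over rounds), and the other is the number of outer rounds needed to contract the initial KL gap below $\varepsilon^2$, after which Pinsker's inequality yields $\mathsf{TV}\leq\varepsilon$.

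The hard part is importing the sharp form of the warm-start dependence, namely that the initialization enters only through $\log\sqrt{H_\pi(\mu_0)}$ rather than the cruder $\log\chi^2_\pi(\mu_0)$ that a direct appeal to Theorem~\ref{the:fan} would give; this relies on the KL/R\'enyi contraction of the proximal sampler under strong log-concavity, which is exactly what Proposition~6 of \cite{fan2023improved} supplies, together with the bookkeeping that the constant $O(1)$ sup-norm perturbation from the Moreau smoothing and the $O(1)$ mode shift affect only constants (and, if one prefers, that Proposition~6 can also be applied directly to the composite pair $(f_1,f_2)$ without the smoothing step). Given that, the proof reduces to the verifications above: $\tilde f$ is $1$-strongly convex and $2$-smooth, $C_{\mathsf{PI}}=\Theta(1)$, $H_\pi(\mu_0)=\widetilde O(d)$, and the $\Omega(1)$-acceptance rejection back to $\pi_f$.
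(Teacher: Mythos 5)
Your proposal is essentially correct, but its main described route is a detour compared to the paper's. The paper offers no argument beyond the theorem's own name: it is stated as a direct implication of Proposition~6 of \cite{fan2023improved}, which already treats composite potentials $f=f_1+f_2$ with $f_1$ strongly convex and smooth and $f_2$ Lipschitz (the Lipschitz part is absorbed at the level of each proximal step, where the restricted Gaussian oracle is implemented by rejection sampling), so one simply plugs in the unit constants and the KL warm-start parameter $H_\pi(\mu_0)$ to read off $T = O\bigl(d^{1/2}\log(d^{1/2}/\varepsilon)\log(\sqrt{H_\pi(\mu_0)}/\varepsilon)\bigr)$ — exactly the direct application you mention only parenthetically at the end. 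Your primary path instead Moreau-smooths $f_2$ at unit scale, samples the $1$-strongly convex, $2$-smooth surrogate $\tilde f$, and corrects by a global rejection step; this is a legitimate alternative since the sup-norm gap is $\le 1/2$, so the acceptance rate is $\Omega(1)$ and the accepted law is exactly $\pi_f$ when the proposal is exact, but it carries bookkeeping you gloss over: ``one rejection step'' is not quite right, since upon rejection you need a fresh (independent) approximate sample from $\pi_{\tilde f}$, so you either rerun the base sampler a geometric number of times (constant expected overhead, or in parallel within the same round budget) and must propagate the TV error of the approximate proposal through the rejection loop, which costs a short additional argument (it degrades TV only by a constant factor because the acceptance probability is bounded below). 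The surrogate route buys a reduction to the purely smooth strongly log-concave case at the price of these corrections; the paper's (and your fallback) route gets the stated bound with no reduction at all, since Fan et al.'s proposition is formulated for precisely this composite class. Your side remarks (Poincar\'e constant $\ge 1$ under $1$-strong log-concavity, second moment $O(d)$, Gaussian warm start) are correct but not needed for the statement as written, which leaves $H_\pi(\mu_0)$ as a free parameter.
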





\end{document}